\newtheorem{theorem}{Theorem}[section]
\theoremstyle{remark}\newtheorem{remark}[theorem]{Remark}
\theoremstyle{definition}\newtheorem{definition}[theorem]{Definition}
\newcommand{\abs}[1]{\left\lvert#1\right\rvert}
\newcommand{\ave}[1]{\left\langle#1\right\rangle}
\newcommand{\cB}{\mathcal{B}}
\newcommand{\sB}{\mathscr{B}}
\DeclareMathOperator{\ddiv}{div}
\newcommand{\cH}{\mathcal{H}}
\newcommand{\Lip}{\operatorname{Lip}}
\newcommand{\loc}{\mathrm{loc}}
\newcommand{\norm}[1]{\left\Vert#1\right\Vert}
\newcommand{\cO}{\mathcal{O}}
\newcommand{\cP}{\mathcal{P}}
\newcommand{\pr}[1]{{}^\prime\!#1}
\newcommand{\R}{\mathbb{R}}
\newcommand{\cR}{\mathcal{R}}
\newcommand{\gR}{\mathfrak{R}}
\newcommand{\bV}{\mathbf{V}}
\newcommand{\Var}{\operatorname{Var}}
\newcommand{\bX}{\mathbf{X}}
\begin{document}
\title{Kinetic description and macroscopic limit of swarming dynamics with continuous leader-follower transitions}
\author[1]{Emiliano Cristiani\thanks{\texttt{e.cristiani@iac.cnr.it}}}
\author[2]{Nadia Loy\thanks{\texttt{nadia.loy@polito.it}}}
\author[3]{Marta Menci\thanks{\texttt{m.menci@unicampus.it}}}
\author[2]{Andrea Tosin\thanks{\texttt{andrea.tosin@polito.it}}}
\affil[1]{Istituto per le Applicazioni del Calcolo, Consiglio Nazionale delle Ricerche, Rome, Italy}
\affil[2]{Department of Mathematical Sciences ``G. L. Lagrange'', Politecnico di Torino, Italy}
\affil[3]{Universit\`{a} Campus Bio-Medico di Roma, Rome, Italy}
\renewcommand\Affilfont{\small}
\date{}

\maketitle

\begin{abstract}
In this paper, we derive a kinetic description of swarming particle dynamics in an interacting multi-agent system featuring emerging leaders and followers. Agents are classically characterized by their position and velocity plus a continuous parameter quantifying their degree of leadership. The microscopic processes ruling the change of velocity and degree of leadership are independent, non--conservative and non--local in the physical space, so as to account for long--range interactions. Out of the kinetic description, we obtain then a macroscopic model under a hydrodynamic limit reminiscent of that used to tackle the hydrodynamics of weakly dissipative granular gases, thus relying in particular on a regime of small non--conservative and short--range interactions. Numerical simulations in one- and two-dimensional domains show that the limiting macroscopic model is consistent with the original particle dynamics and furthermore can reproduce classical emerging patterns typically observed in swarms.

\medskip

\noindent{\bf Keywords:} swarm dynamics, transient leadership, Povzner-Boltzmann equation, hydrodynamic limit, non-conservative interactions

\medskip

\noindent{\bf Mathematics Subject Classification:} 35Q20, 35Q70, 92D50.
\end{abstract}

\section{Introduction}
In this paper, we are concerned with kinetic and hydrodynamic modelling of swarming dynamics in self-organising multi-agent systems with group behaviour and transient leadership, biologically inspired by flocks of birds. 

A first introduction to the collective motion of groups of animals can be found in~\cite{sumpter2006,vicsek2012}. Concerning flocking birds, many of the underlying mechanisms have been unveiled by several experimental studies, mostly from a physical perspective, based on tracking and analysing the trajectories of flying birds~\cite{attanasi2014,attanasi2015,ballerini2008b,ballerini2008a,cavagna2010,cavagna2022N,cavagna2013,loffredo2023PhB,procaccini2011} and by other experiments made within similar goals~\cite{ling2019,papadopoulou2022RSOS,pomeroy1992,reynolds2022JRSI}.  These investigations have put in evidence that birds interact within a certain perception range, tend to align with the flight direction of neighbours and rely on a variable-in-time interaction network, as each bird changes its neighbours continuously while moving~\cite{cavagna2013}. In particular, the two already cited papers~\cite{attanasi2015,ling2019} focus on coherent changes in the direction of motion of the group as a whole. They find that collective turns have a localised spatial origin: they are triggered by few members of the flock, that act locally as \textit{leaders}, and happen spontaneously, i.e. without external stimuli (such as e.g., a predator attack). Another important biological mechanism observed in several animal groups, including flocks of birds, is the \textit{switch of leaders}~\cite{butail2019detecting,chen2016switching,mwaffo2018detecting}: since all group members may be leaders temporarily, leaders change continuously, with former leaders becoming followers and then again leaders after a while.  The leader-follower transition is often difficult to understand, because in some flocks of birds, such as e.g., pigeons, it is associated with a \textit{continuous} degree of leadership, which determines a \textit{hierarchy of leaders} based e.g., on the ability of birds to act as turning initiators~\cite{leader1,leader2}.

As far as the mathematical modelling and numerical simulation of swarming dynamics are concerned, we mention first the three seminal papers~\cite{couzin2005,CS2007,vicsek1995}, which have shown how simple algorithms based on elementary interaction rules can generate apparently complex self-organised collective movements. Following these works, many other microscopic models including leaders have been proposed, see e.g.,~\cite{albi2016SIAP,bongini2014NHM,borzi2015M3AS,motsch2011JSP}, sometimes inspired by analogies with fundamental physical principles such as e.g., the inertial spin models~\cite{caglioti2020,cavagna2015,markou2021pp}. Microscopic models allow for exhaustive simulations when the total number of agents, say $N$, is large enough, for then statistical average quantities may be regarded as consistent descriptors of the aggregate behaviour. Nevertheless, one experiences the well-known drawback that the computational cost of a microscopic simulation scales typically as $\cO(N)$ or $\cO(N^2)$, depending on the specific microscopic model, whereas the cost of a macroscopic simulation would be independent of $N$. Moreover, microscopic models are hardly amenable to analytical investigations of the emerging aggregate behaviour of the particle system.

For these reasons, several mesoscopic and macroscopic models have also been proposed and studied~\cite{BolleyM3AS2011,MihaM3AS2020,CarrilloM3AS2011,Carrillo2014,CarrilloJEMS2019,Carrillo2017,CarrilloKRM2009, Carrillo2010,CarrilloM3AS,eftimie2018,fornasier2010PhD}. Among them, we focus in particular on kinetic models, i.e. mesoscopic models which can be formally derived from interacting particle dynamics as evolution equations of the time-dependent probability distribution of the microscopic state of the interacting particles. Particle dynamics may be stated in terms of either ordinary or stochastic differential equations. Kinetic equations are then derived either as mean-field limits of the particle dynamics for $N\to\infty$, see e.g.,~\cite{Peurichard2021}, or as statistical descriptions of the stochastic microscopic interaction processes~\cite{pareschi2013book}.

In this paper, we adopt the point of view of the classical kinetic theory, having its roots in gas dynamics and nowadays fruitfully revisited and applied to the mathematical description of a great variety of interaction phenomena, cf.~\cite{pareschi2013book} and references therein. The leading idea is that agents are identified to a microscopic state, which changes in consequence of interactions and is described statistically by means of a probability distribution defined on an appropriate continuous phase space. Typically, interactions are assumed to be binary processes, i.e. each of them involves two agents at a time. Interactions involving simultaneously more than two agents are neglected as higher order effects. Kinetic models for collective dynamics incorporating leaders and followers have been introduced in~\cite{albi2022AMO,albi2016SIAP,albi2019M3AS,albi2013AML,bernardi2021}. In this case, agents are described by their position, velocity and a discrete label denoting either leadership or followership, which may change in time. Kinetic equations with label switching have been formally analysed in~\cite{loy2021KRM} and applied to the modelling of collective dynamics in the presence of switching leaders in~\cite{albi2023papercugino}. Kinetic models also allow one to derive macroscopic models, which inherit naturally the details of the microscopic particle dynamics. This is typically carried out by introducing the statistical moments of the probability distribution of the microscopic state as average macroscopic quantities. Their evolution equations are then obtained either from a hydrodynamic limit or through asymptotic expansions (Hilbert, Chapman-Enskog), possibly relying on innovative concepts, such as that of \textit{generalised collisional invariants}, when the classical collisional invariants of the kinetic equations are not as many as required to parametrise the local equilibrium distribution, cf.~\cite{degond2007CRM,degond2008M3AS}. These procedures typically lead to closed systems of evolution equations for the average macroscopic quantities in specific regimes of the microscopic parameters~\cite{Peurichard2021,Carrillo2010,fornasier2010PhD}.

In the present work, our starting point is the microscopic model proposed in~\cite{cristiani2021JMB}, which describes the dynamics of turning and flocking birds, like starlings or jackdaws. The model is based on the well-established attraction-alignment-repulsion paradigm~\cite{couzin2002,eftimie2018,sumpter2006,vicsek2012} and is specifically conceived to reproduce spontaneous (i.e., not caused by external stimuli) sudden self-organised changes of direction of motion. Moreover, the model assumes that each bird is a potential turn initiator, meaning that a sort of temporary leadership emerges in a group of indistinguishable birds. Retaining the fundamental ingredients of this model, we focus on an analytical description based on kinetic equations and their macroscopic limits. More specifically, we define a multi-agent system whose agents are characterised by space position and velocity plus a continuous parameter quantifying their \textit{degree of leadership}. We formulate discrete-in-time stochastic particle dynamics inspired by a simplified version of the microscopic model introduced in~\cite{cristiani2021JMB} and featuring non-local binary interactions, so as to take into account that agents may affect each other possibly within an interaction range. The microscopic processes ruling the change of velocity and degree of leadership are independent and both non-conservative, except for the mass. Therefore, to derive formally kinetic and macroscopic descriptions implementing such microscopic dynamics we resort to a Povzner-Boltzmann-type equation~\cite{povzner1962AMSTS} with corrections to the collisional operator suitable to take into account the non-conservativeness of the microscopic interactions when passing to the hydrodynamic limit. For this, we adapt the technique proposed in~\cite{fornasier2010PhD}, where the authors perform the hydrodynamic limit of the Povzner-Boltzmann equation for a system of agents having the velocity as a unique microscopic state. There, the binary interaction rules are non-local and non-conservative as well and, in particular, they are given by the classical conservative hard-spheres collisions plus a dissipative contribution. The key idea is to consider the regime in which the non-conservative contribution is small, in such a way that the leading part of the interaction rules is given by the classical velocity- and energy-preserving hard-spheres mechanism, while the dissipative part is of higher order. This allows the authors to approximate the non-local Maxwellian relative to the leading order terms of the interaction rules with the classical local Maxwellian. In our case, the main difficulty is that this idea has to be evolved to treat multiple microscopic states with different non-mechanical non-conservativeness.

The plan of the paper, after this Introduction, is the following. In Section~\ref{sect:micro.summary} we revise and summarise the main aspects of the microscopic models presented in~\cite{cristiani2021JMB}. Then in Section~\ref{sect:kin} we introduce our discrete-in-time stochastic particle model implementing the interaction dynamics discussed above and derive formally the corresponding Povzner-Boltzmann kinetic equation. In Section~\ref{sect:macro} we derive our macroscopic model of swarming dynamics with transient leadership and continuous leader-follower transitions as a hydrodynamic limit of the Povzner-Boltzmann equation. Next, in Section~\ref{sect:numerics} we discuss some numerical tests, which we use to elucidate the correspondence between the original stochastic particle model and the resulting macroscopic description (in one space dimension) as well as to show that the macroscopic model reproduces successfully (in two space dimensions) some classical swarming patterns, such as splitting and merging of the flock and rings/mills. Finally, in Section~\ref{sect:conclusions} we draw some conclusions and briefly sketch possible research perspectives.

\section{Microscopic outlook}
\label{sect:micro.summary}
We consider a group of $N>1$ agents represented as point masses in $\R^d$, $d\geq 1$. We assume that they are labelled univocally by an index $k=1,\,\dots,\,N$ and we denote by $X_k(t),\,V_k(t)\in\R^d$ the position and velocity of the $k$-th agent at time $t$, respectively. We also distinguish agents in terms of \emph{leaders} and \emph{followers}. This distinction is not permanent, since agents can change status in time. Leaders do not have any particular knowledge or experience compared to other agents, their role is just to initiate turns. To this aim, they try to get far from the group -- the \textit{flock} from now on -- at least until they get back to the status of followers. Roughly speaking, the idea is that any agent can attempt to steer the flock by initiating a turn, but the attempt has a limited duration and if turning is not triggered within a reasonable time the attempt is aborted. Here we consider the leader/follower status as a spectrum, meaning that agents exhibit a possibly non-binary behavior between the statuses of leader and follower. We denote by $\Lambda_k=\Lambda_k(t)\in [0,\,1]$ the \textit{degree of leadership} of agent $k$ at time $t$, being $\Lambda_k=0$ the pure follower behavior and $\Lambda_k=1$ the pure leader behavior.

Concerning interactions, we assume that agents are influenced by flock mates within a certain fixed distance $R>0$ only (metric interaction). It is worth noting that in the microscopic model introduced in~\cite{cristiani2021JMB}, unlike the one studied here, interactions are topological \cite{ballerini2008a} and the distinction between leaders and followers is sharp, in the sense that at any given time an agent is either a leader or a follower. We refer to~\cite{albi2023papercugino} for a kinetic model which retains these features.

In the same spirit as in \cite{cristiani2021JMB}, we consider the following second order dynamics:
\begin{equation}
    \ddot X_k=\mu\mathcal A_k(\bX,\bV,\Lambda_k), \qquad k=1,\,\dots,\,N,
    \label{ODEmicro_1}
\end{equation}
where $\bX:=(X_1,\,\dots,\,X_N)$, $\bV:=(V_1,\,\dots,\,V_N)$, $\mu>0$ is a parameter fixing the time scale and the acceleration $\mathcal A_k$ is given by the sum of three terms accounting respectively for repulsion, alignment, and attraction among the agents:
\begin{equation}
    \mathcal A_k(\bX,\bV,\Lambda_k):=\sum_{X_h\in B_R(X_k)}\left[
        \alpha\frac{X_k-X_h}{\abs{X_k-X_h}^2}+(1-\Lambda_k)\Bigl(
        \beta(V_h-V_k)+
        \gamma(X_h-X_k)\Bigr)\right].
    \label{ODEmicro_2}
\end{equation}
Here, $B_R(X_k)\subset\R^d$ is the ball of radius $R>0$ centered in $X_k$ (metric neighborhood) and $\alpha,\,\beta,\,\gamma>0$ are parameters calibrating the mutual strength of the three types of interaction. We notice that, unlike other approaches in the reference literature, here we choose for simplicity to consider all types of interaction simultaneously active within the interaction neighborhood $B_R(X_k)$. We will see in the numerical tests that this simplification does not preclude the possibility to reproduce well-known self-organized patterns of the flock. We also point out that, while repulsion is common to both leaders and followers, alignment and attraction characterize preferentially the interaction dynamics of the followers, cf. the coefficient $1-\Lambda_k$ in front of the last two terms in~\eqref{ODEmicro_2}.

While the repulsion-alignment-attraction paradigm is nowadays standard and well accepted in the modeling of flock dynamics, the evolution of $\Lambda_k$, driving the leader-follower transition, is much less investigated. In particular, we do not have any biological clue for this term. Therefore, we appeal to the general idea that the $\Lambda_k$'s are independent and identically distributed random variables supported in $[0,\,1]$ and sampled at random instants. This way, agents continuously switch their (non-binary) status between leader, follower and intermediate nuances.

\section{Kinetic approach}\label{sect:kin}
As anticipated in the Introduction, the goal of this paper is to derive a hydrodynamic model of flock dynamics inspired by the ideas outlined in Section~\ref{sect:micro.summary} and incorporating, in particular, the leader-follower transitions. The novelty with respect to the mainstream in the literature is that we do not rely on separate descriptions of leaders and followers with possible mass transfers between the two populations. Instead, we give a unique fluid dynamical representation of the flock, which embeds consistently the gross evolution of the degree of leadership of the agents.

To pursue this goal, we take advantage of the methods of statistical mechanics in general and of the kinetic theory in particular, which provide powerful mathematical tools to upscale consistently simple particle dynamics to complex aggregate macroscopic trends.

\subsection{Stochastic particle model}
\label{sect:stoch_part_mod}
We assume that the flock can be regarded as a large ensemble of indistinguishable agents. A generic representative agent is described at time $t>0$ by the position in space $X_t\in\R^d$, the velocity $V_t\in\R^d$ and degree of leadership $\Lambda_t\in [0,\,1]$. To adopt a statistical point of view, we regard $X_t$, $V_t$, $\Lambda_t$ as random variables with joint distribution
\begin{equation}
    f=f(x,v,\lambda,t):\R^d\times\R^d\times [0,\,1]\times [0,\,+\infty)\to\R_+
    \label{eq:f}
\end{equation}
such that $f(x,v,\lambda,t)\,dx\,dv\,d\lambda$ is the probability that at time $t$ a generic agent of the flock has a position in the infinitesimal volume element $dx$ centered in $x$, a velocity in the infinitesimal volume element $dv$ centered in $v$ and a degree of leadership comprised between $\lambda$ and $\lambda+d\lambda$.

The evolution of the stochastic processes $\{X_t,\,t\geq 0\}$, $\{V_t,\,t\geq 0\}$, $\{\Lambda_t,\,t\geq 0\}$ is motivated by the considerations expressed in Section~\ref{sect:micro.summary}. In particular, referring to a discrete time setting for simplicity but with the idea to pass subsequently to the continuous time limit, we invoke the usual kinematic relationship between $X_t$ and $V_t$ to state that at the new time $t+\Delta{t}$, with $\Delta{t}>0$ small, a generic agent of the flock changes position because of a free drift:
\begin{equation}
    X_{t+\Delta{t}}=X_t+V_t\Delta{t}.
    \label{def:micro_drift}
\end{equation}
On the other hand, the new velocity of the agent is determined by an interaction process inspired by~\eqref{ODEmicro_1}-\eqref{ODEmicro_2} with another agent sampled randomly in the flock, as prescribed by the scheme of binary interactions of the classical kinetic theory. If $(X_t^\ast,\,V_t^\ast,\,\Lambda_t^\ast)$ is the microscopic state of the other agent then we set
\begin{equation}
    V_{t+\Delta{t}}=V_t+\Theta\mathcal{A}(X_t,X_t^\ast,V_t,V_t^\ast,\Lambda_t,\Lambda_t^\ast),
    \label{def:micro_v}
\end{equation}
where
\begin{equation}
    \mathcal{A}(X_t,X_t^\ast,V_t,V_t^\ast,\Lambda_t,\Lambda_t^\ast)=\alpha\dfrac{X_t-X_t^\ast}{\abs{X_t-X_t^\ast}^2}
        +(1-\Lambda_t)\Bigl(\beta(V_t^\ast-V_t)+\gamma(X_t^\ast-X_t)\Bigr),
    \label{def:micro_v_new}
\end{equation}
while $\Theta\in\{0,\,1\}$ is a random variable discriminating whether, in the time step $\Delta{t}$, an interaction between the two sampled agents may actually take place. Specifically, we set
\begin{equation}
    \Theta\sim\operatorname{Bernoulli}(\mu B(\abs{X_t^\ast-X_t})\Delta{t}),
    \label{def:Theta}
\end{equation}
i.e. $\mu B(\abs{X_t^\ast-X_t})\Delta{t}$ is the probability that an interaction between the two sampled agents takes place in the time interval $[t,\,t+\Delta{t}]$. In more detail, $\mu>0$ is the rate of velocity change while $B(\abs{X_t^\ast-X_t})\geq 0$ is the \textit{collision\footnote{Here and henceforth, borrowing the jargon of the classical kinetic theory, by `collision' we mean the same as `interaction'.} kernel}, which fixes the rate at which agents at a given distance $\abs{X_t^\ast-X_t}$ may interact. We assume that $B=B(r)$, $r\in\R_+$, is a non-negative bounded function supported in the interval $[0,\,R]\subset\R_+$ for a given radius $R\in (0,\,+\infty)$. This reproduces the idea of metric interactions within the interaction range $R$. Consequently, the mapping $\R^d\ni x\mapsto B(\abs{x})$ is radially symmetric and supported in the ball $\sB_R(0)\subset\R^d$ of radius $R$ and centred at the origin.

Finally, we postulate the following update process of the degree of leadership:
\begin{equation}
    \Lambda_{t+\Delta t}=\Lambda_t+\Sigma(1-2\Lambda_t),
    \label{def:micro_l}
\end{equation}
where $\Sigma\in\{0,\,1\}$ is a new random variable, independent of $\Theta$, distributed as
\begin{equation}
    \Sigma\sim\operatorname{Bernoulli}(\eta B(\abs{X_t^\ast-X_t})\Delta{t})
    \label{def:Sigma}
\end{equation}
with $\eta>0$ the rate of degree change. Notice that~\eqref{def:Sigma} implies that a change in the degree of leadership of one agent is stimulated by the presence of the other agent in the interaction neighborhood. Nevertheless, according to~\eqref{def:micro_l} the new degree of leadership of either agent does not depend on the current degree of leadership of the other agent. In other words, we assume that agents become leaders or followers spontaneously and not e.g., by imitation or contrast. We also remark that~\eqref{def:micro_l} is such that
$$ \Lambda_{t+\Delta t}=
    \begin{cases}
        \Lambda_t & \text{if } \Sigma=0 \\
        1-\Lambda_t & \text{if } \Sigma=1,
    \end{cases} $$
meaning that any agent undergoes continuously a switch of its degree of leadership between the initial value $\Lambda_0$ and the symmetric one $1-\Lambda_0$ with respect to the mid-value $\frac{1}{2}$.

For the sake of completeness, we point out that~\eqref{def:Theta},~\eqref{def:Sigma} require
$$ \Delta{t}\leq\frac{1}{\max{\{\mu,\,\eta\}}\sup\limits_{x\in\sB_R(0)}B(\abs{x})} $$
as a sufficient consistency condition. However, such a constraint is not a limitation in the continuous time limit $\Delta{t}\to 0^+$.

\begin{remark}
The microscopic model in~\cite{cristiani2021JMB}, here recalled in Section~\ref{sect:micro.summary}, is conceived in terms of \textit{mean-field-like} dynamics among the agents within their interaction neighbourhoods. Therefore, in that setting a single agent is affected simultaneously by all other agents in its neighbourhood, as~\eqref{ODEmicro_2} implies. Conversely, in this paper we adopt a \textit{binary} interaction scheme, in order to recast the modelled swarm dynamics in the frame of the collisional kinetic theory. The latter is indeed based on \textit{pairwise} interactions among the agents, whence collective dynamics emerge over time. Such binary interactions are obtained from the mean-field-like interactions~\eqref{ODEmicro_2} by assuming that each agent interacts, from time to time, with just one other agent randomly selected within its interaction neighborhood. Therefore, a binary interaction coincides structurally with the generic single term of the mean-field-like interactions, cf.~\eqref{ODEmicro_2} and~\eqref{def:micro_v_new}. This approach is reminiscent of the one proposed e.g., in~\cite{carrillo2010SIMA} and is known to be consistent with mean-field-like dynamics on a properly large time scale (precisely, the time scale where the collective behaviour of the swarm emerges).
\end{remark}

\subsection{Boltzmann-type description}
A Boltzmann-type description of the flocking particle dynamics discussed in Section~\ref{sect:stoch_part_mod} consists in a collisional kinetic equation for the distribution function~\eqref{eq:f}, which, being regarded as a probability distribution, is assumed to be non-negative and such that
$$ \int_0^1\int_{\R^d}\int_{\R^d}f(x,v,\lambda,t)\,dx\,dv\,d\lambda=1, \qquad \forall\,t>0. $$

Averaging the discrete time stochastic processes~\eqref{def:micro_drift},~\eqref{def:micro_v},~\eqref{def:micro_l} and passing to the limit $\Delta{t}\to 0^+$, by standard arguments (cf. e.g.,~\cite{pareschi2013book}) we obtain the following Boltzmann-type kinetic equation for $f$:
\begin{equation}
    \partial_tf+v\cdot\nabla_xf=\mu Q^v(f,f)+\eta Q^\lambda(f,f),
    \label{eq:Boltzmann_strong}
\end{equation}
where $Q^v$ and $Q^\lambda$ are the \textit{collisional operators} implementing, at the kinetic level, the independent particle processes~\eqref{def:micro_v},~\eqref{def:micro_l}, respectively. More specifically, these two operators are univocally defined by the following weak formulations:
\begin{align}
    \begin{aligned}[b]
    \ave{Q^v(f,f),\,\varphi} &:= \int_0^1\int_{\R^d}Q^v(f,f)(x,v,\lambda,t)\varphi(v,\lambda)\,dv\,d\lambda \\
    &= \int_{[0,\,1]^2}\int_{\R^{3d}}B(\abs{x_\ast-x})\big(\varphi(v',\lambda)-\varphi(v,\lambda)\big)ff_\ast\,dx_\ast\,dv\,dv_\ast\,d\lambda\,d\lambda_\ast,
    \end{aligned}
    \label{def:Qv_weak}
\end{align}
where, here and henceforth, $f$ and $f_\ast$ are shorthand for $f(x,v,\lambda,t)$ and $f(x_\ast,v_\ast,\lambda_\ast,t)$, respectively, and
\begin{align}
    \begin{aligned}[b]
        \ave{Q^\lambda(f,f),\,\varphi} &:= \int_0^1\int_{\R^d}Q^\lambda(f,f)(x,v,\lambda,t)\varphi(v,\lambda)\,dv\,d\lambda \\
        &= \int_{[0,\,1]^2}\int_{\R^{3d}}B(\abs{x_\ast-x})\big(\varphi(v,\lambda')-\varphi(v,\lambda)\big)ff_\ast\,dx_\ast\,dv\,dv_\ast\,d\lambda\,d\lambda_\ast,
    \end{aligned}
    \label{def:Ql_weak}
\end{align}
where $\varphi:\R^d\times [0,\,1]\to\R$, $\varphi=\varphi(v,\lambda)$, is an arbitrary test function representing any observable quantity whereas
\begin{equation}
    v'=v+\alpha\frac{x-x_\ast}{\abs{x-x_\ast}^2}+(1-\lambda)\bigl(\beta(v_\ast-v)+\gamma(x_\ast-x)\bigr)
\label{eq:micro_v}
\end{equation}
is the post-collisional velocity determined by an interaction of the form~\eqref{def:micro_v}-\eqref{def:micro_v_new} and
\begin{equation}
    \lambda'=1-\lambda
    \label{eq:micro_l}
\end{equation}
is the post-collisional degree of leadership produced by the microscopic rule~\eqref{def:micro_l}. The other agent participating in the binary interaction modifies symmetrically its velocity and degree of leadership according to the rules
\begin{align}
    v_\ast' &= v_\ast+\alpha\frac{x_\ast-x}{\abs{x_\ast-x}^2}+(1-\lambda_\ast)\bigl(\beta(v-v_\ast)+\gamma(x-x_\ast)\bigr), \label{eq:micro_v_symm} \\
    \lambda_\ast' &= 1-\lambda_\ast. \label{eq:micro_l_symm}
\end{align}
In \eqref{def:Qv_weak}-\eqref{def:Ql_weak} the kernel $B$ -classically called the Povzner kernel- takes into account the fact that the particle processes are not localized in the physical space. Hence, the Boltzmann-type kinetic equation \eqref{eq:Boltzmann_strong} may be also called Boltzmann-Povzner equation.
\begin{remark} \label{rem:non-cons_interactions}
In each binary interaction neither the mean velocity nor the mean degree of leadership is conserved in general. Indeed,
\begin{gather*}
    v'+v_\ast'-(v+v_\ast)=(\lambda_\ast-\lambda)\bigl(\beta(v_\ast-v)+\gamma(x_\ast-x)\bigr) \\[2mm]
    \lambda'+\lambda_\ast'-(\lambda+\lambda_\ast)=2\bigl(1-(\lambda+\lambda_\ast)\bigr).
\end{gather*}   
\end{remark}

Multiplying~\eqref{eq:Boltzmann_strong} by $\varphi$ and integrating w.r.t. $v$, $\lambda$ we obtain the weak form of the Boltzmann-type kinetic equation, which reads
\begin{multline}
    \partial_t\int_0^1\int_{\R^d}\varphi(v,\lambda)f(x,v,\lambda,t)\,dv\,d\lambda+\ddiv_x\int_0^1\int_{\R^d}\varphi(v,\lambda)vf(x,v,\lambda,t)\,dv\,d\lambda \\
    =\mu\ave{Q^v(f,f),\,\varphi}+\eta\ave{Q^\lambda(f,f),\,\varphi}.
    \label{eq:Boltz_weak}
\end{multline}

For completeness, we point out that the collisional operators $Q^v(f,f)$, $Q^\lambda(f,f)$ may be written explicitly also in strong form. However, since for the sequel the latter is less useful than~\eqref{def:Qv_weak},~\eqref{def:Ql_weak}, we confine it to Appendix~\ref{app:strong}, cf.~\eqref{def:Qv},~\eqref{def:Ql}.

\subsection{Approximation in the regime of small non-conservative interactions}
\label{sect:small_non-conservative}
In classical kinetic theory, the derivation of macroscopic equations via the hydrodynamic limit relies heavily on the concept of \textit{collisional invariants}, which are observable quantities conserved, on average, by the particle interactions. Collisional invariants are usually sought in the form of powers of the variables characterizing the microscopic state of the particles, which are at the basis of the computation of the statistical moments of the kinetic distribution function.

As pointed out in Remark~\ref{rem:non-cons_interactions}, the interaction rules~\eqref{eq:micro_v}-\eqref{eq:micro_v_symm} and~\eqref{eq:micro_l}-\eqref{eq:micro_l_symm} are pointwise non-conservative, and they are so also on average. Consequently, one cannot extract directly from them conservation properties leading to the identification of collisional invariants. To obtain suitable collisional invariants we adopt therefore the approach, borrowed from the kinetic theory of nearly elastic granular flows~\cite{toscani2004CHAPTER} (see also~\cite{fornasier2010PhD}), which consists in decomposing the microscopic rules in conservative and non-conservative parts.

In more detail, concerning~\eqref{eq:micro_v}-\eqref{eq:micro_v_symm} we consider the decomposition
\begin{align}
    \begin{aligned}[c]
        v' &= \bar{v}'+\frac{\lambda_\ast-\lambda}{2}\beta(v_\ast-v)+\alpha\frac{x-x_\ast}{\abs{x-x_\ast}^2}+(1-\lambda)\gamma(x_\ast-x) \\
        v_\ast' &= \bar{v}_\ast'+\frac{\lambda-\lambda_\ast}{2}\beta(v-v_\ast)+\alpha\frac{x_\ast-x}{\abs{x-x_\ast}^2}+(1-\lambda_\ast)\gamma(x-x_\ast)
    \end{aligned}
    \label{eq:micro_v_noncons}
\end{align}
where
\begin{equation}
    \bar{v}'=v+\left(1-\frac{\lambda+\lambda_\ast}{2}\right)\beta(v_\ast-v), \qquad 
        \bar{v}_\ast'=v_\ast+\left(1-\frac{\lambda+\lambda_\ast}{2}\right)\beta(v-v_\ast)
    \label{eq:micro_v_cons}
\end{equation} 
is pointwise conservative for the mean velocity, indeed
$$ \bar{v}'+\bar{v}_\ast'-(v+v_\ast)=0. $$

\begin{remark}
In~\eqref{eq:micro_v}-\eqref{eq:micro_v_symm}, also the repulsion term preserves the mean velocity pointwise. Nevertheless, here we choose not to include it in~\eqref{eq:micro_v_cons} so as to focus on a conservative part of the binary interaction rule involving a pure exchange of momentum between the interacting agents. Notwithstanding the partial arbitrariness of this choice, we will see later that it has the merit of isolating a single interaction mechanism (the alignment) in the conservative part, that will determine the local equilibrium, while yielding, in the hydrodynamic limit, a macroscopic model in which all the three interaction mechanisms (repulsion, alignment, attraction) play simultaneously a role.
\end{remark}

Next, we consider the regime in which the non-conservative contribution~\eqref{eq:micro_v_noncons} to the microscopic dynamics is small, i.e.
\begin{equation}
    \alpha=\alpha_0\delta, \qquad \beta=\beta_0\delta, \qquad \gamma=\gamma_0\delta
    \label{eq:q_inv_2}
\end{equation}
with $0<\delta\ll 1$. Because of \eqref{eq:q_inv_2}, we can approximate the test function $\varphi(v',\lambda)$ appearing in~\eqref{def:Qv_weak} in a neighborhood of $\bar{v}'$ as
\begin{equation}
    \varphi(v',\lambda)\approx\varphi(\bar{v}',\lambda)+\delta\left(\beta_0\frac{\lambda_\ast-\lambda}{2}(v_\ast-v)+\alpha_0\frac{x-x_\ast}{\abs{x-x_\ast}^2}
        +(1-\lambda)\gamma_0(x_\ast-x)\right)\cdot\nabla_v\varphi(\bar{v}',\lambda),
    \label{eq:Tay1}
\end{equation}
where we have assumed that $\varphi$ is sufficiently smooth and we have neglected higher order terms in the Taylor expansion. Plugging~\eqref{eq:Tay1} into~\eqref{def:Qv_weak} we obtain
$$ \ave{Q^v(f,f),\,\varphi}=\ave{\bar{Q}^v(f,f),\,\varphi}+\delta\ave{\cR^v(f,f),\,\varphi} $$
where
\begin{equation}
    \ave{\bar{Q}^v(f,f),\,\varphi}=
        \int_{[0,\,1]^2}\int_{\R^{3d}}B(\abs{x_\ast-x})\big(\varphi(\bar{v}',\lambda)-\varphi(v,\lambda)\big)ff_\ast\,dx_\ast\,dv\,dv_\ast\,d\lambda\,d\lambda_\ast
    \label{eq:barQv_weak}
\end{equation}
and
\begin{multline}
    \ave{\cR^v(f,f),\,\varphi}=\int_{[0,\,1]^2}\int_{\R^{3d}}B(\abs{x_\ast-x})\Biggl(\beta_0\frac{\lambda_\ast-\lambda}{2}(v_\ast-v)
        +\alpha_0\frac{x-x_\ast}{\abs{x-x_\ast}^2} \\
    +(1-\lambda)\gamma_0(x_\ast-x)\Biggr)\cdot\nabla_v\varphi(\bar{v}',\lambda)ff_\ast\,dx_\ast\,dv\,dv_\ast\,d\lambda\,d\lambda_\ast.
    \label{eq:Rv_weak}
\end{multline}

As far as rule~\eqref{eq:micro_l}-\eqref{eq:micro_l_symm} is concerned, in order to put in evidence a conservative contribution we generalize it by reformulating it as
\begin{equation}
    \lambda'=\bar{\lambda}'+\delta\bigl(1-2\lambda+\nu(\lambda-\lambda_\ast)\bigr), \qquad
        \lambda_\ast'=\bar{\lambda}_\ast'+\delta\bigl(1-2\lambda_\ast+\nu(\lambda_\ast-\lambda)\bigr)
    \label{eq:micro_l_corr}
\end{equation}
where
\begin{equation}
    \bar{\lambda}'=\lambda +\nu(\lambda_\ast-\lambda), \qquad
        \bar{\lambda}_\ast'=\lambda_\ast +\nu(\lambda-\lambda_\ast)
    \label{eq:micro_lambda_cons}
\end{equation}
satisfy
$$ \bar{\lambda}'+\bar{\lambda}_\ast'-(\lambda+\lambda_\ast)=0. $$
In~\eqref{eq:micro_l_corr},~\eqref{eq:micro_lambda_cons}, $\nu\in (0,\,1)$ is a new parameter that we will discuss briefly in a moment. Moreover, we notice that for $\delta=1$~\eqref{eq:micro_l_corr} reduces to~\eqref{eq:micro_l}.

\begin{remark} \label{rem:opposition.neutral}
For $\delta<1$, rule~\eqref{eq:micro_l_corr} features a conservative and non-conservative part. The former, i.e.~\eqref{eq:micro_lambda_cons}, corresponds to \textit{imitation} dynamics leading an agent to relax its degree of leadership towards that of the other agent at rate proportional to $\nu$. Conversely, the latter is built as an interplay between \textit{opposition} dynamics, expressed by $\delta\nu(\lambda-\lambda_\ast)$ and $\delta\nu(\lambda_\ast-\lambda)$, which lead an agent to differentiate its degree of leadership from that of the other agent at rate proportional to $\delta\nu$, and \textit{neutral} dynamics, expressed by $\delta(1-2\lambda)$ and $\delta(1-2\lambda_\ast)$, which induce an agent to relax spontaneously its degree of leadership towards the mid-value $\frac{1}{2}$ at rate $\delta$.
\end{remark}

Again, $\delta$ being small, we can approximate the test function $\varphi(v,\lambda')$ appearing in~\eqref{def:Ql_weak} in a neighborhood of $\bar{\lambda}'$ as
\begin{equation}
    \varphi(v,\lambda')\approx\varphi(v,\bar{\lambda}')+\delta\bigl(1-2\lambda+\nu(\lambda-\lambda_\ast)\bigr)\partial_\lambda\varphi(v,\bar{\lambda}'),
    \label{eq:Tay3}
\end{equation}
where we neglect higher order terms of the Taylor expansion. Plugging~\eqref{eq:Tay3} into~\eqref{def:Ql_weak} we obtain
$$ \ave{Q^\lambda(f,f),\,\varphi}=\ave{\bar{Q}^\lambda(f,f),\,\varphi}+\delta\ave{\cR^\lambda(f,f),\,\varphi} $$
where
\begin{equation}
    \ave{\bar{Q}^\lambda(f,f),\,\varphi}=\int_{[0,\,1]^2}\int_{\R^{3d}}B(\abs{x_\ast-x})\big(\varphi(v,\bar{\lambda}')-\varphi(v,\lambda)\big)
        ff_\ast\,dx_\ast\,dv\,dv_\ast\,d\lambda\,d\lambda_\ast
    \label{eq:barQl_weak}
\end{equation}
and
\begin{equation}
    \ave{\cR^\lambda(f,f),\,\varphi}= \\
    \int_{[0,\,1]^2}\int_{\R^{3d}}B(\abs{x_\ast-x})\bigl(1-2\lambda+\nu(\lambda-\lambda_\ast)\bigr)\partial_\lambda\varphi(v,\bar{\lambda}')
        ff_\ast\,dx_\ast\,dv\,dv_\ast\,d\lambda\,d\lambda_\ast.
    \label{eq:Rl_weak}
\end{equation}

On the whole, in the regime of small non-conservative interactions the Boltzmann-type equation~\eqref{eq:Boltz_weak} is approximated as
\begin{multline}
    \partial_t\int_0^1\int_{\R^d}\varphi(v,\lambda)f(x,v,\lambda,t)\,dv\,d\lambda+\ddiv_x\int_0^1\int_{\R^d}\varphi(v,\lambda)vf(x,v,\lambda,t)\,dv\,d\lambda \\
    =\mu\ave{\bar{Q}^v(f,f),\,\varphi}+\mu\delta\ave{\cR^v(f,f),\,\varphi}+\eta\ave{\bar{Q}^\lambda(f,f),\,\varphi}+\eta\delta\ave{\cR^\lambda(f,f),\,\varphi}.
    \label{eq:Boltz_weak_approx}
\end{multline}

In Appendix~\ref{app:strong} we report, for completeness, the strong forms of the operators $\bar{Q}^v(f,f)$, $\cR^v(f,f)$, $\bar{Q}^\lambda(f,f)$ and $\cR^\lambda(f,f)$, cf.~\eqref{eq:barQv}--\eqref{eq:Rl}.

\section{Macroscopic limit}\label{sect:macro}
In order to recover a macroscopic description of particle dynamics, the \textit{statistical moments} of the kinetic distribution function $f$ with respect to the microscopic states of the particles are typically introduced. In the present case, the microscopic state of the agents is defined by the variables $v\in\R^d$ and $\lambda\in [0,\,1]$, therefore we introduce:
\begin{enumerate}[label=(\roman*)]
\item the \textit{density} of the agents in the position $x\in\R^d$ at time $t>0$:
$$ \rho(x,t):=\int_0^1\int_{\R^d}f(x,v,\lambda,t)\,dv\,d\lambda; $$
\item the \textit{mean velocity} of the agents in the position $x\in\R^d$ at time $t>0$:
$$ u(x,t):=\frac{1}{\rho(x,t)}\int_0^1\int_{\R^d}vf(x,v,\lambda,t)\,dv\,d\lambda; $$
\item the \textit{mean degree of leadership} of the swarm in the position $x\in\R^d$ at time $t>0$:
$$ l(x,t):=\frac{1}{\rho(x,t)}\int_0^1\int_{\R^d}\lambda f(x,v,\lambda,t)\,dv\,d\lambda. $$
\end{enumerate}
Choosing $\varphi(v,\lambda)=1$ in~\eqref{eq:Boltz_weak} we obtain the continuity equation
\begin{equation}
    \partial_t\rho+\ddiv_x(\rho u)=0,
    \label{eq:continuity}
\end{equation}
which links the evolution of the density $\rho$ to the transport operated by the mean velocity $u$.
Next, choosing $\varphi(v,\lambda)=v$ we get the balance of linear momentum
$$ \partial_t(\rho u)+\ddiv_x\bigl(\rho(u\otimes u+\mathbb{D})\bigr)=\mu\ave{Q^v(f,f),v} $$
upon observing from~\eqref{def:Ql_weak} that $\ave{Q^\lambda(f,f),\,v}=0$. Here,
$$ \mathbb{D}(x,t):=\frac{1}{\rho(x,t)}\int_0^1\int_{\R^d}(v-u(x,t))\otimes (v-u(x,t))f(x,v,\lambda,t)\,dv\,d\lambda $$
is the variance-covariance matrix of the velocity. Finally, choosing $\varphi(v,\lambda)=\lambda$ we discover
\begin{equation}
    \partial_t(\rho l)+\ddiv_x\int_0^1\int_{\R^d}\lambda vf(x,v,\lambda,t)\,dv\,d\lambda=\eta\ave{Q^\lambda(f,f),\,\lambda}
    \label{eq:mean_lambda}
\end{equation}
upon observing from~\eqref{def:Qv_weak} that $\ave{Q^v(f,f),\,\lambda}=0$. Here, the second term on the left-hand side is the product moment of $f$ with respect to $v$ and $\lambda$.

Clearly, equations~\eqref{eq:continuity}--\eqref{eq:mean_lambda} are not closed at the macroscopic scale because they still require the knowledge of the kinetic distribution function $f$. Therefore, they do not constitute a self-consistent macroscopic model in the hydrodynamic parameters $\rho$, $u$, $l$. To circumvent this difficulty of the theory, we take advantage of the regime of small non-conservativeness of the interactions discussed in Section~\ref{sect:small_non-conservative} and formalised by the approximated Boltzmann-type equation~\eqref{eq:Boltz_weak_approx}. 

\subsection{The hydrodynamic limit}
More precisely, we introduce a small parameter
$$ 0<\varepsilon\ll 1, $$
conceptually analogous to the \textit{Knudsen number} of the classical kinetic theory, and we consider the scaling
\begin{equation}
    \delta=\varepsilon, \qquad \mu=\eta=\frac{1}{\varepsilon},
    \label{eq:scaling}
\end{equation}
in such a way that the smallness of the non-conservative content of the microscopic interactions is balanced by the increased interaction rates. Notice that $\delta$, and therefore $\varepsilon$ within the above scaling, scales the interaction coefficients $\alpha,\,\beta,\,\gamma$ via~\eqref{eq:q_inv_2}. Hence,~\eqref{eq:Boltz_weak_approx} becomes
\begin{multline}
    \partial_t\int_0^1\int_{\R^d}\varphi(v,\lambda)f^\varepsilon(x,v,\lambda,t)\,dv\,d\lambda+\ddiv_x\int_0^1\int_{\R^d}\varphi(v,\lambda)vf^\varepsilon(x,v,\lambda,t)\,dv\,d\lambda \\
    =\frac{1}{\varepsilon}\ave{\bar{Q}^{v,\varepsilon}(f^\varepsilon,f^\varepsilon)+\bar{Q}^\lambda(f^\varepsilon,f^\varepsilon),\,\varphi}
        +\ave{\cR^{v,\varepsilon}(f^\varepsilon,f^\varepsilon)+\cR^\lambda(f^\varepsilon,f^\varepsilon),\,\varphi},
    \label{eq:Boltz_weak_scaled}
\end{multline}
where we use the notation $f^\varepsilon$ because the solution to~\eqref{eq:Boltz_weak_approx} depends now on the scaling parameter $\varepsilon$. We also use the notations $\bar{Q}^{v,\varepsilon}$, $\cR^{v,\varepsilon}$ to stress that these two operators are parameterized by $\varepsilon$ also independently of $f^\varepsilon$, as it is clear from~\eqref{eq:barQv_weak},~\eqref{eq:Rv_weak}. Notice indeed that $\bar{v}'$ depends on $\varepsilon$ through the scaled coefficient $\beta$, cf.~\eqref{eq:micro_v_cons}. Conversely, from~\eqref{eq:barQl_weak},~\eqref{eq:Rl_weak} we see that $\bar{Q}^\lambda$, $\cR^\lambda$ are not parameterized by $\varepsilon$, essentially because $\bar{\lambda}'$ does not depend on $\varepsilon$, cf.~\eqref{eq:micro_lambda_cons}.

Let
$$ n^\varepsilon(x,\lambda,t):=\frac{1}{\rho^\varepsilon(x,t)}\int_{\R^d}f^\varepsilon(x,v,\lambda,t)\,dv $$
be the marginal distribution of the degree of leadership $\lambda$ in the point $x$ at time $t$. Choosing $\varphi(v,\lambda)=\psi(\lambda)$ in~\eqref{eq:Boltz_weak_scaled}, i.e. an observable quantity independent of $v$, we get
\begin{multline}
    \partial_t\!\left(\rho^\varepsilon(x,t)\int_0^1\psi(\lambda)n^\varepsilon(x,\lambda,t)\,d\lambda\right)+
            \ddiv_x\int_0^1\int_{\R^d}\psi(\lambda)vf^\varepsilon(x,v,\lambda,t)\,dv\,d\lambda \\
        =\frac{1}{\varepsilon}\ave{\bar{Q}^\lambda(f^\varepsilon,f^\varepsilon),\,\psi}+\ave{\cR^\lambda(f^\varepsilon,f^\varepsilon),\,\psi},
    \label{eq:n.eps}
\end{multline}
after observing that
$$ \ave{\bar{Q}^{v,\varepsilon}(f^\varepsilon,f^\varepsilon),\,\psi}=\ave{\cR^{v,\varepsilon}(f^\varepsilon,f^\varepsilon),\,\psi}=0,
    \qquad \forall\,\psi=\psi(\lambda), $$
cf.~\eqref{eq:barQv_weak},~\eqref{eq:Rv_weak}. A direct computation using~\eqref{eq:barQl_weak},~\eqref{eq:Rl_weak} shows that the terms on the right-hand side of~\eqref{eq:n.eps} can be written explicitly using $n^\varepsilon$ as
$$ \ave{\bar{Q}^\lambda(f^\varepsilon,f^\varepsilon),\,\psi}=\rho^\varepsilon(x,t)\displaystyle{\int_{[0,\,1]^2}\int_{\R^d}}B(\abs{x_\ast-x})\rho^\varepsilon(x_\ast,t)
        \left(\psi(\bar{\lambda}')-\psi(\lambda)\right)n^\varepsilon n_\ast^\varepsilon\,dx_\ast\,d\lambda\,d\lambda_\ast $$
and
\begin{multline*}
   \ave{\cR^\lambda(f^\varepsilon,f^\varepsilon),\,\psi}=\rho^\varepsilon(x,t)\int_{[0,\,1]^2}\int_{\R^d}B(\abs{x_\ast-x})\rho^\varepsilon(x_\ast,t)
        (1-2\lambda+\nu(\lambda-\lambda_\ast)) \\
    \times\psi'(\bar{\lambda}')n^\varepsilon n^\varepsilon_\ast\,dx_\ast\,d\lambda\,d\lambda_\ast.
\end{multline*}
Let now
$$ p^\varepsilon(x,v,t):=\frac{1}{\rho^\varepsilon(x,t)}\int_0^1f^\varepsilon(x,v,\lambda,t)\,d\lambda $$
be the marginal distribution of the velocity $v$ in the point $x$ at time $t$. Setting $\varphi(v,\lambda)=\phi(v)$ in~\eqref{eq:Boltz_weak_scaled}, i.e. an observable quantity independent of $\lambda$, we obtain
\begin{multline}
        \partial_t\!\left(\rho^\varepsilon(x,t)\int_{\R^d}\phi(v)p^\varepsilon(x,v,t)\,dv\right)+
            \ddiv_x\!\left(\rho^\varepsilon(x,t)\int_{\R^d}\phi(v)vp^\varepsilon(x,v,t)\,dv\right) \\
        =\frac{1}{\varepsilon}\ave{\bar{Q}^{v,\varepsilon}(f^\varepsilon,f^\varepsilon),\,\phi}+\ave{\cR^{v,\varepsilon}(f^\varepsilon,f^\varepsilon),\,\phi},
    \label{eq:p.eps}
\end{multline}
upon noticing from~\eqref{eq:barQl_weak},~\eqref{eq:Rl_weak} that
$$ \ave{\bar{Q}^\lambda(f^\varepsilon,f^\varepsilon),\,\phi}=\ave{\cR^\lambda(f^\varepsilon,f^\varepsilon),\,\phi}=0, \qquad \forall\,\phi=\phi(v). $$

Next, we give the following
\begin{definition} \label{def:coll_inv}
We call \emph{collisional invariant} of the kinetic equation~\eqref{eq:n.eps} any observable quantity $\psi=\psi(\lambda)$ such that
$$ \ave{\bar{Q}^\lambda(g,g),\,\psi}=0 $$
for every distribution function $g$.

Likewise, we call \emph{collisional invariant} of the kinetic equation~\eqref{eq:p.eps} any observable quantity $\phi=\phi(v)$ such that
$$ \ave{\bar{Q}^{v,\varepsilon}(g,g),\,\phi}=0 $$
for every distribution function $g$ and every $\varepsilon>0$.
\end{definition}

Plugged into~\eqref{eq:n.eps},~\eqref{eq:p.eps} collisional invariants produce averaged equations free from the singular coefficient $\frac{1}{\varepsilon}$, hence suitable to pass to the hydrodynamic limit $\varepsilon\to 0^+$ and recover a universal macroscopic description of the system.

It is easy to check that both $\psi\equiv 1$ and $\phi\equiv 1$ are collisional invariants in the sense of Definition~\ref{def:coll_inv}. Nevertheless, owing to the non-locality in space embodied in the collision kernel $B$ contained in $\bar{Q}^{v,\varepsilon}$ and $\bar{Q}^\lambda$, determining collisional invariants in general is not trivial. To circumvent this difficulty of the theory, taking inspiration from~\cite{lacho1990ARMA} we resort to a local-in-space approximation of the collisional operators $\bar{Q}^{v,\varepsilon}$, $\bar{Q}^\lambda$ under the assumption that the interaction range $R$ is sufficiently small. Specifically, we consider
\begin{align*}
    & \ave{\bar{Q}^{v,\varepsilon}_\loc(f^\varepsilon,f^\varepsilon),\,\phi}:=\cB_0\int_{[0,\,1]^2}\int_{\R^{2d}}\bigl(\phi(\bar{v}')-\phi(v)\bigr)
        f^\varepsilon(x,v,\lambda,t)f^\varepsilon(x,v_\ast,\lambda_\ast,t)\,dv\,dv_\ast\,d\lambda\,d\lambda_\ast, \\
    & \ave{\bar{Q}^\lambda_\loc(f^\varepsilon,f^\varepsilon),\,\psi}:=\cB_0{(\rho^\varepsilon(x,t))}^2\int_{[0,\,1]^2}\big(\psi(\bar{\lambda}')-\psi(\lambda)\big)
        n^\varepsilon(x,\lambda,t)n^\varepsilon(x,\lambda_\ast,t)\,d\lambda\,d\lambda_\ast,
\end{align*}
where
$$ \cB_0:=\int_{\sB_R(0)}B(\abs{x})\,dx, $$
and we reformulate~\eqref{eq:n.eps},~\eqref{eq:p.eps} accordingly:
\begin{multline}
    \partial_t\!\left(\rho^\varepsilon(x,t)\int_0^1\psi(\lambda)n^\varepsilon(x,\lambda,t)\,d\lambda\right)+
            \ddiv_x\int_0^1\int_{\R^d}\psi(\lambda)vf^\varepsilon(x,v,\lambda,t)\,dv\,d\lambda \\
        =\frac{1}{\varepsilon}\ave{\bar{Q}^\lambda_\loc(f^\varepsilon,f^\varepsilon),\,\psi}+\ave{\cR^\lambda(f^\varepsilon,f^\varepsilon),\,\psi},
    \label{eq:n.eps_loc}
\end{multline}
\begin{multline}
        \partial_t\!\left(\rho^\varepsilon(x,t)\int_{\R^d}\phi(v)p^\varepsilon(x,v,t)\,dv\right)+
            \ddiv_x\!\left(\rho^\varepsilon(x,t)\int_{\R^d}\phi(v)vp^\varepsilon(x,v,t)\,dv\right) \\
        =\frac{1}{\varepsilon}\ave{\bar{Q}^{v,\varepsilon}_\loc(f^\varepsilon,f^\varepsilon),\,\phi}+\ave{\cR^{v,\varepsilon}(f^\varepsilon,f^\varepsilon),\,\phi}.
    \label{eq:p.eps_loc}
\end{multline}
Correspondingly, we rephrase Definition~\ref{def:coll_inv} as
\begin{definition} \label{def:local_coll_inv}
We call \emph{local collisional invariants} any observable quantities $\psi=\psi(\lambda)$, $\phi=\phi(v)$, respectively, such that
$$ \ave{\bar{Q}^\lambda_\loc(g,g),\,\psi}=0, \qquad \ave{\bar{Q}^{v,\varepsilon}_\loc(g,g),\,\phi}=0 $$
for every distribution function $g$ and every $\varepsilon>0$.
\end{definition}

Recalling~\eqref{eq:micro_v_cons},~\eqref{eq:micro_lambda_cons}, we see that $\psi(\lambda)=\lambda$ and $\phi(v)=v$ are local collisional invariants in the sense of Definition~\ref{def:local_coll_inv}. Plugging them, together with $\psi=\phi\equiv 1$, into~\eqref{eq:n.eps_loc},~\eqref{eq:p.eps_loc} we obtain:
\begin{equation}
    \resizebox{.93\textwidth}{!}{$
    \begin{cases}
        \partial_t\rho^\varepsilon(x,t)+\ddiv_x\!\left(\rho^\varepsilon(x,t)\displaystyle{\int_{\R^d}}vp^\varepsilon(x,v,t)\,dv\right)=0, \\[3mm]
        \partial_t\!\left(\rho^\varepsilon(x,t)\displaystyle{\int_{\R^d}}vp^\varepsilon(x,v,t)\,dv\right)+
            \ddiv_x\!\left(\rho^\varepsilon(x,t)\displaystyle{\int_{\R^d}}v\otimes vp^\varepsilon(x,v,t)\,dv\right)=\ave{\cR^{v,\varepsilon}(f^\varepsilon,f^\varepsilon),\,v}, \\[3mm]
        \partial_t\!\left(\rho^\varepsilon(x,t)\displaystyle{\int_0^1}\lambda n^\varepsilon(x,\lambda,t)\,d\lambda\right)+
            \ddiv_x\displaystyle{\int_0^1\int_{\R^d}}\lambda vf^\varepsilon(x,v,\lambda,t)\,dv\,d\lambda=\ave{\cR^\lambda(f^\varepsilon,f^\varepsilon),\,\lambda}
    \end{cases}
    $}
    \label{eq:hydro.eps}
\end{equation}
for every $\varepsilon>0$. In particular, we have taken into account that
$$ \ave{\cR^{v,\varepsilon}(f^\varepsilon,f^\varepsilon),\,1}=\ave{\cR^\lambda(f^\varepsilon,f^\varepsilon),\,1}=0, \quad
    \ave{\cR^\lambda(f^\varepsilon,f^\varepsilon),\,v}=\ave{\cR^{v,\varepsilon}(f^\varepsilon,f^\varepsilon),\,\lambda}=0, \qquad
        \forall\,\varepsilon>0. $$

We consider now~\eqref{eq:n.eps_loc},~\eqref{eq:p.eps_loc} in the \textit{hydrodynamic limit} $\varepsilon\to 0^+$, so as to recover the local equilibrium distribution $f^0$ which, together with its (normalized) marginals $n^0$, $p^0$, allows one to close system~\eqref{eq:hydro.eps} within the same limit. Assuming formally that all terms of~\eqref{eq:n.eps_loc},~\eqref{eq:p.eps_loc} remain bounded in the limit, the usual way to look for physically meaningful local equilibrium distributions is to see them as equilibrium solutions of the pure interaction dynamics:
\begin{equation}
    \partial_tf^\varepsilon=\frac{1}{\varepsilon}\left(\bar{Q}^{v,\varepsilon}_\loc(f^\varepsilon,f^\varepsilon)+\bar{Q}^\lambda_\loc(f^\varepsilon,f^\varepsilon)\right)
    \label{eq:pure_int}
\end{equation}
for $\varepsilon\to 0^+$. Testing~\eqref{eq:pure_int} against an observable quantity $\psi=\psi(\lambda)$ yields
$$ \partial_t\bigl(\rho^\varepsilon\ave{n^\varepsilon,\,\psi}\bigr)=\frac{1}{\varepsilon}\ave{\bar{Q}^\lambda_\loc(f^\varepsilon,f^\varepsilon),\,\psi}, $$
which, in the limit $\varepsilon\to 0^+$ and owing to the assumption of boundedness recalled above, produces $\ave{\bar{Q}^\lambda_\loc(f^0,f^0),\,\psi}=0$, whence
$$ \int_{[0,\,1]^2}\big(\psi(\bar{\lambda}')-\psi(\lambda)\big)n^0(x,\lambda,t)n^0(x,\lambda_\ast,t)\,d\lambda\,d\lambda_\ast=0,
    \qquad \forall\,\psi=\psi(\lambda). $$
Since $\psi(\lambda)=1$ and $\psi(\lambda)=\lambda$ are local collisional invariants, they satisfy trivially this relationship and imply simply that $n^0$ has unit mass for all $x$, $t$ and is parameterized by the local mean degree of leadership $l(x,t)$. With $\psi(\lambda)={(\lambda-l(x,t))}^2$ we discover instead
$$ 2\nu(1-\nu)\Var[\lambda]=0, $$
where
$$ \Var[\lambda](x,t):=\int_0^1\lambda^2n^0(x,\lambda,t)\,d\lambda-l^2(x,t) $$
is the local variance of the distribution of $\lambda$. Since $0<\nu<1$ this implies $\Var[\lambda]=0$, thus the $\lambda$-marginal $n^0$ is a Dirac delta centered at $\lambda=l(x,t)$:
\begin{equation}\label{eq:n0}
    n^0(x,\lambda,t)=\delta(\lambda-l(x,t)).
\end{equation}
This also implies that $f^0$ has the form
\begin{equation}
    f^0(x,v,\lambda,t)=\rho(x,t)p^0(x,v,t)\delta(\lambda-l(x,t)).
    \label{eq:f0}
\end{equation}

Testing now~\eqref{eq:pure_int} against an observable quantity $\phi=\phi(v)$ we obtain
\begin{equation}
    \partial_t\bigl(\rho^\varepsilon\ave{p^\varepsilon,\,\phi}\bigr)=\frac{1}{\varepsilon}\ave{\bar{Q}^{v,\varepsilon}_\loc(f^\varepsilon,f^\varepsilon),\,\phi}.
    \label{eq:pure_int.v}
\end{equation}
In this case the limit $\varepsilon\to 0^+$ is more delicate, because the collisional operator $\bar{Q}^{v,\varepsilon}_\loc(f^\varepsilon,f^\varepsilon)$ depends on $\varepsilon$ also besides $f^\varepsilon$, cf.~\eqref{eq:barQv_weak}. In particular, by dominated convergence we observe that $\ave{\bar{Q}^{v,\varepsilon}_\loc(g,g),\,\phi}\to 0$ when $\varepsilon\to 0^+$ for every distribution function $g$ and every (smooth) observable $\phi$. Therefore, $\frac{1}{\varepsilon}\ave{\bar{Q}^{v,\varepsilon}_\loc(f^\varepsilon,f^\varepsilon),\,\phi}$ is in principle indefinite in the limit. However, if $\phi$ is smooth, say e.g., $\phi\in C^{1,1}(\R^d)$, we may expand
$$ \phi(\bar{v}')=\phi(v)+\nabla{\phi(\tilde{v})}\cdot\varepsilon\beta_0\left(1-\frac{\lambda+\lambda_\ast}{2}\right)(v_\ast-v) $$
with $\tilde{v}:=\theta\bar{v}'+(1-\theta)v$ for some $\theta\in [0,\,1]$. Consequently
$$ \frac{1}{\varepsilon}\ave{\bar{Q}^{v,\varepsilon}_\loc(f^\varepsilon,f^\varepsilon),\,\phi}=\ave{\cP(f^\varepsilon,f^\varepsilon),\,\phi}+\gR^\varepsilon(\phi,f^\varepsilon), $$
where
\begin{equation}
    \ave{\cP(f^\varepsilon,f^\varepsilon),\,\phi}:=\beta_0\cB_0\int_{[0,\,1]^2}\int_{\R^{2d}}\nabla{\phi(v)}\cdot\left(1-\frac{\lambda+\lambda_\ast}{2}\right)(v_\ast-v)
        f^\varepsilon f^\varepsilon_\ast\,dv\,dv_\ast\,d\lambda\,d\lambda_\ast,
    \label{eq:P.Fokker-Planck}
\end{equation}
and here $f_\ast^\varepsilon$ is shorthand for $f_\ast^\varepsilon(x,v_\ast,\lambda_\ast,t)$ owing to the local-in-space approximation, whereas the remainder is
$$ \gR^\varepsilon(\phi,f^\varepsilon):=\beta_0\cB_0\int_{[0,\,1]^2}\int_{\R^{2d}}\Bigl(\nabla{\phi(\tilde{v})}-\nabla{\phi(v)}\Bigr)\cdot\left(1-\frac{\lambda+\lambda_\ast}{2}\right)(v_\ast-v)
    f^\varepsilon f^\varepsilon_\ast\,dv\,dv_\ast\,d\lambda\,d\lambda_\ast. $$
In particular,
\begin{align}
    \begin{aligned}[b]
        \abs{\gR^\varepsilon(\phi,f^\varepsilon)} &\leq \beta_0\cB_0\Lip(\nabla{\phi})\theta\int_{[0,\,1]^2}\int_{\R^{2d}}
            \abs{\bar{v}'-v}\left(1-\frac{\lambda+\lambda_\ast}{2}\right)\abs{v_\ast-v}f^\varepsilon f_\ast^\varepsilon\,dv\,dv_\ast\,d\lambda\,d\lambda_\ast \\
        &= \varepsilon\beta_0^2\cB_0\Lip(\nabla{\phi})\theta\int_{[0,\,1]^2}\int_{\R^{2d}}
            \left(1-\frac{\lambda+\lambda_\ast}{2}\right)^2{\abs{v_\ast-v}}^2f^\varepsilon f_\ast^\varepsilon\,dv\,dv_\ast\,d\lambda\,d\lambda_\ast \\
        &\leq \varepsilon\beta_0^2\cB_0\Lip(\nabla{\phi})\theta(\rho^\varepsilon)^2\int_{\R^{2d}}{\abs{v_\ast-v}}^2p^\varepsilon p_\ast^\varepsilon\,dv\,dv_\ast \\
        &= 2\varepsilon\beta_0^2\cB_0\Lip(\nabla{\phi})\theta(\rho^\varepsilon)^2\Var^\varepsilon[v]
    \end{aligned}
    \label{eq:remainder_v}
\end{align}
with $\Lip(\nabla\phi)>0$ the Lipschitz constant of $\nabla{\phi}$ and
$$ \Var^\varepsilon[v](x,t):=\int_{\R^d}\abs{v}^2p^\varepsilon(x,v,t)\,dv-\abs{u(x,t)}^2 $$
the local variance of the $\varepsilon$-distribution of $v$. To ascertain the trend of $\gR^\varepsilon$ in the limit $\varepsilon\to 0^+$ we investigate that of $\Var^\varepsilon[v]$. Letting $\phi(v)=\abs{v-u^\varepsilon}^2$ in~\eqref{eq:pure_int.v} and observing that in the pure local interaction dynamics $\rho^\varepsilon$ and $u^\varepsilon$ are conserved (put $\phi(v)=1,\,v$ in~\eqref{eq:pure_int.v} to see that the right-hand side vanishes) we get
$$ \rho^\varepsilon\partial_t\ave{p^\varepsilon,\,\abs{v-u^\varepsilon}^2}=\frac{1}{\varepsilon}\cB_0\int_{[0,\,1]^2}\int_{\R^{2d}}\left(\abs{\bar{v}'-u^\varepsilon}^2-\abs{v-u^\varepsilon}^2\right)
    f^\varepsilon f_\ast^\varepsilon\,dv\,dv_\ast\,d\lambda\,d\lambda_\ast, $$
whence, after some manipulations, we obtain
$$ \partial_t\Var^\varepsilon[v]\leq\beta_0\cB_0\rho^\varepsilon\left(\varepsilon\beta_0\Var^\varepsilon[v]
    +2\int_{\R^{2d}}\abs{v\cdot(v_\ast-v)}p^\varepsilon p_\ast^\varepsilon\,dv\,dv_\ast\right). $$
But
\begin{align*}
    \int_{\R^{2d}}\abs{v\cdot(v_\ast-v)}p^\varepsilon p_\ast^\varepsilon\,dv\,dv_\ast &\leq
        \int_{\R^{2d}}\abs{v\cdot v_\ast}p^\varepsilon p_\ast^\varepsilon\,dv\,dv_\ast+\int_{\R^d}\abs{v}^2p^\varepsilon\,dv \\
    &\leq\left(\int_{\R^d}\abs{v}p^\varepsilon\,dv\right)^2+\int_{\R^d}\abs{v}^2p^\varepsilon\,dv \\
    &\leq 2\int_{\R^d}\abs{v}^2p^\varepsilon\,dv=2\left(\Var^\varepsilon[v]+(u^\varepsilon)^2\right)
\end{align*}
and finally
$$ \partial_t\Var^\varepsilon[v]\leq\beta_0\cB_0\rho^\varepsilon\Bigl((\varepsilon\beta_0+4)\Var^\varepsilon[v]+2(u^\varepsilon)^2\Bigr). $$
If we take $\varepsilon$ so small that $\varepsilon\beta_0\leq 1$ we have
$$ \partial_t\Var^\varepsilon[v]\leq\beta_0\cB_0\rho^\varepsilon\Bigl(5\Var^\varepsilon[v]+2(u^\varepsilon)^2\Bigr), $$
which implies that $\Var^\varepsilon[v]$ is bounded for $\varepsilon\to 0^+$ at any time under the assumption that so are the hydrodynamic parameters $\rho^\varepsilon$, $u^\varepsilon$.

Back to~\eqref{eq:remainder_v}, we conclude $\gR(\phi,f^\varepsilon)\to 0$ when $\varepsilon\to 0^+$. Passing to the limit in~\eqref{eq:pure_int.v} this yields then
\begin{align*}
    \rho\partial_t\ave{p^0,\,\phi} &= \beta_0\cB_0\int_{[0,\,1]^2}\int_{\R^{2d}}\nabla{\phi(v)}\cdot\left(1-\frac{\lambda+\lambda_\ast}{2}\right)(v_\ast-v)
        f^0f_\ast^0\,dv\,dv_\ast\,d\lambda\,d\lambda_\ast \\
    &=\beta_0\cB_0(1-l)\rho^2\int_{\R^d}\nabla{\phi}(v)\cdot(u-v)p^0\,dv,
\end{align*}
where we have used the form~\eqref{eq:f0} of $f^0$. Restricting $\phi$ to compactly supported functions we see that this is a weak form of
$$ \partial_tp^0=-\beta_0\cB_0(1-l)\rho\ddiv_v{\!\left[(u-v)p^0\right]}, $$
whose equilibrium solution satisfies $(u-v)p^0=0$. In distributional sense we deduce therefore
\begin{equation}\label{eq:p0}
    p^0(x,v,t)=\delta(v-u(x,t))
\end{equation}
and finally from~\eqref{eq:f0}
\begin{equation}
    f^0(x,v,\lambda,t)=\rho(x,t)\delta(v-u(x,t))\otimes\delta(\lambda-l(x,t)).
    \label{eq:f0.complete}
\end{equation}
Plugging~\eqref{eq:n0},~\eqref{eq:p0},~\eqref{eq:f0.complete} into~\eqref{eq:hydro.eps}, after passing there formally to the limit $\varepsilon\to 0^+$, we get the following set of hydrodynamic equations:
\begin{subnumcases}{\label{eq:macro}}
    \partial_t\rho+\ddiv_x{(\rho u)}=0 \label{eq:macro_a} \\
    \partial_t(\rho u)+\ddiv_x{(\rho u\otimes u)}=
        \rho(x,t)\displaystyle\int_{\R^d}B(\abs{x_\ast-x})\rho(x_\ast,t) \nonumber \\
        \qquad\qquad\qquad\qquad\qquad\qquad\qquad\qquad
            \times\biggl(\frac{l(x_\ast,t)-l(x,t)}{2}\beta_0(u(x_\ast,t)-u(x,t)) \nonumber \\
        \qquad\qquad\qquad\qquad\qquad\qquad\qquad\qquad
            \displaystyle+\alpha_0\frac{x-x_\ast}{\abs{x-x_\ast}^2}+\gamma_0\bigl(1-l(x,t)\bigr)(x_\ast-x)\biggr)\,dx_\ast \label{eq:macro_b} \\
    \partial_t(\rho l)+\ddiv_x{(\rho lu)}=\rho(x,t)\displaystyle\int_{\R^d}B(\abs{x_\ast-x})\rho(x_\ast,t) \nonumber \\
        \qquad\qquad\qquad\qquad\qquad\qquad\qquad
            \times\Bigl(1-2l(x,t)+\nu\bigl(l(x,t)-l(x_\ast,t)\bigr)\Bigr)\,dx_\ast, \label{eq:macro_c}
\end{subnumcases}
which can be rewritten in non-conservative form as
\begin{subnumcases}{\label{eq:macro.noncons}}
    \partial_t\rho+\ddiv_x{(\rho u)}=0 \label{eq:macro.noncons_a} \\
    \partial_tu+u\cdot\nabla{u}=
        \displaystyle\int_{\R^d}B(\abs{x_\ast-x})\rho(x_\ast,t)\biggl(\frac{l(x_\ast,t)-l(x,t)}{2}\beta_0(u(x_\ast,t)-u(x,t)) \nonumber \\
            \qquad\qquad\qquad\qquad \displaystyle+\alpha_0\frac{x-x_\ast}{\abs{x-x_\ast}^2}+\gamma_0\bigl(1-l(x,t)\bigr)(x_\ast-x)\biggr)dx_\ast \label{eq:macro.noncons_b} \\
    \partial_tl+u\cdot\nabla{l}=\displaystyle\int_{\R^d}B(\abs{x_\ast-x})\rho(x_\ast,t)\Bigl(1-2l(x,t)+\nu\bigl(l(x,t)-l(x_\ast,t)\bigr)\Bigr)dx_\ast, \label{eq:macro.noncons_c}
\end{subnumcases}
For consistency with the derivation discussed so far, we require that the non-locality in space on the right-hand side of both~\eqref{eq:macro} and~\eqref{eq:macro.noncons} be sufficiently small.

\begin{remark}
A comment about the simultaneous presence in~\eqref{eq:macro}, and likewise in~\eqref{eq:macro.noncons}, of local approximations and non-local terms is in order. As it is evident from~\eqref{eq:n.eps_loc}--\eqref{eq:hydro.eps}, system~\eqref{eq:macro} originates from a local approximation of the collisional operators $\bar{Q}^\lambda$, $\bar{Q}^{v,\varepsilon}$ while the operators $\cR^\lambda$, $\cR^v$ are left non-local. Such a different treatment of these operators has the following twofold reason. On one hand, since the original microscopic dynamics are non-local in space, we aimed to derive a hydrodynamic description which could keep track consistently of such a non-locality. On the other hand, the non-locality in the kinetic description makes it difficult to obtain explicit analytical expressions of the equilibrium distributions, which are needed to pass to the hydrodynamic limit. In order to balance between these two competing needs, we chose to localise only the collisional operators $\bar{Q}^\lambda$, $\bar{Q}^{v,\varepsilon}$, which enter the determination of the equilibrium distributions, leaving the remainders $\cR^\lambda$, $\cR^v$ non-local, since owing to the scaling~\eqref{eq:scaling} they do not affect the search for equilibrium distributions. Clearly, smallness of such a non-locality has to be required for consistency between the local closure of the fluxes and the non-local terms in~\eqref{eq:macro}.
\end{remark}

Model~\eqref{eq:macro.noncons} is a closed system of coupled equations for the hydrodynamic parameters $\rho$, $u$, $l$. The first equation is the continuity equation expressing the conservation of mass of the agents. The second equation is a Burger-like momentum equation, whose right-hand side accounts for the non-conservative contribution of alignment dynamics plus repulsion and attraction, cf.~\eqref{eq:micro_v_noncons}. The third equation is an auxiliary transport equation, which describes the aggregate evolution of leader-follower transitions in terms of neutral and opposition dynamics (cf.~\eqref{eq:micro_l_corr} and Remark~\ref{rem:opposition.neutral}).

\subsection{On the local approximations of~\texorpdfstring{$\boldsymbol{\bar{Q}^{v,\varepsilon}}$}{},~\texorpdfstring{$\boldsymbol{\bar{Q}^\lambda}$}{} in the hydrodynamic limit}
The hydrodynamic system~\eqref{eq:macro} has been obtained for $\varepsilon\to 0^+$ after approximating the collisional operators $\bar{Q}^{v,\varepsilon}$, $\bar{Q}^\lambda$ with their local-in-space versions $\bar{Q}^{v,\varepsilon}_\loc$, $\bar{Q}^\lambda_\loc$, respectively. Furthermore, in the hydrodynamic limit the operator $\frac{1}{\varepsilon}\bar{Q}^{v,\varepsilon}_\loc$ has been formally replaced by the operator $\cP$, cf.~\eqref{eq:P.Fokker-Planck}, for the determination of the local marginal equilibrium distribution $p^0$.

In this section, we investigate the order of the introduced approximations to ascertain the consistency of the macroscopic description~\eqref{eq:macro} with respect to the original stochastic particle description~\eqref{def:micro_drift}-\eqref{def:micro_v}.

Our main result is the following:
\begin{theorem} \label{theo:local_approx_Q}
Let $f=f(x,v,\lambda,t)$ be a distribution function such that:
\begin{itemize}
\item $f(\cdot,v,\lambda,t)\in C^{1,1}(\R^d)$ for every $v\in\R^d$, $\lambda\in [0,\,1]$, $t\geq 0$;
\item $f(x,\cdot,\lambda,t)$ is $x$-uniformly compactly supported for every $\lambda\in [0,\,1]$, $t\geq 0$.
\end{itemize}
Then:
$$ \lim_{\varepsilon\to 0^+}\abs{\ave{\frac{1}{\varepsilon}\bar{Q}^{v,\varepsilon}(f,f)-\cP(f,f),\,\phi}}\leq\cO(R^{1+d}), \qquad \forall\,\phi=\phi(v)\in C^{0,1}(\R^d)\cap C^{1,1}(\R^d) $$
and likewise
$$ \abs{\ave{\bar{Q}^\lambda(f,f)-\bar{Q}^\lambda_\loc(f,f),\,\psi}}\leq\cO(R^{1+d}), \qquad \forall\,\psi=\psi(\lambda)\in L^\infty(0,\,1). $$
\end{theorem}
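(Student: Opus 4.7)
Both bounds rest on a common geometric mechanism. Because the kernel $B(|x_\ast-x|)$ is supported in $|x_\ast-x|\leq R$ and is radially symmetric, the difference between each non-local collisional operator and its local-in-space version reduces to an integral of $B(|x_\ast-x|)$ against an increment $G(x_\ast)-G(x)$, where $G$ is a function built from $f$ by partial integration in $v$ that inherits the $C^{1,1}$ regularity of $f$ in $x$. I would then Taylor-expand $G$ about $x$: the linear term integrates to zero against $B$ by the radial identity $\int_{\sB_R(0)}B(|y|)\,y\,dy=0$, while the quadratic remainder is bounded by $\tfrac{1}{2}\Lip(\nabla_x G)|x_\ast-x|^2$. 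Multiplying the resulting $\cO(R^2)$ pointwise bound by the $\cO(R^d)$ volume of $\operatorname{supp}B$ produces an $\cO(R^{2+d})$ estimate, which is dominated by $\cO(R^{1+d})$ for small $R$, as claimed.

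\textbf{The $\lambda$-bound.} I would dispatch this simpler case first, as it involves no $\varepsilon$-limit. Testing~\eqref{def:Ql_weak} and the definition of $\bar{Q}^\lambda_\loc$ against $\psi(\lambda)$, integrating out the velocities, and using $\int_{\R^d}B(|x_\ast-x|)\,dx_\ast=\cB_0$ to cast the local operator in the same form as the non-local one, one finds that the difference depends on $f$ only through $F(x,\lambda):=\int_{\R^d}f(x,v,\lambda,t)\,dv$. This $F$ is $C^{1,1}$ in $x$ thanks to the $C^{1,1}$ regularity of $f$ and the uniform compact support in $v$, so the common Taylor--symmetry--Lipschitz argument above applies; the $L^\infty$ bound on $\psi$ controls the $(\psi(\bar\lambda')-\psi(\lambda))$ factor and closes the estimate.

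\textbf{The $v$-bound.} Here I would first handle the $\varepsilon\to 0^+$ limit by a Taylor expansion in velocity. Since $\bar v'-v=\varepsilon\beta_0(1-(\lambda+\lambda_\ast)/2)(v_\ast-v)$ and $\phi\in C^{1,1}(\R^d)$, this gives
\[
\tfrac{1}{\varepsilon}\ave{\bar{Q}^{v,\varepsilon}(f,f),\phi}=\ave{\widetilde{\cP}(f,f),\phi}+\cO(\varepsilon),
\]
where $\widetilde{\cP}$ denotes the non-local analogue of $\cP$, obtained from~\eqref{eq:P.Fokker-Planck} by keeping $f_\ast=f(x_\ast,v_\ast,\lambda_\ast,t)$ and replacing $\cB_0$ by the $B$-weighted $x_\ast$-integral. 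The $\cO(\varepsilon)$ remainder is controlled exactly as in~\eqref{eq:remainder_v}, the uniform compact $v$-support of $f$ ensuring that the velocity moments involved are finite. Passing to the limit and subtracting $\cP(f,f)$ reduces the claim to bounding $\ave{\widetilde{\cP}(f,f)-\cP(f,f),\phi}$, which has exactly the structure of the $\lambda$-case: I would Taylor-expand $f(x_\ast,v_\ast,\lambda_\ast,t)$ in $x_\ast$ about $x$, annihilate the linear term by radial symmetry of $B$, and estimate the quadratic remainder using $\Lip(\nabla_x f)$ and the compact-support velocity moments.

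\textbf{Main obstacle.} The most delicate point is keeping the two small parameters $\varepsilon$ and $R$ decoupled in the $v$-bound: one must verify that the $\cO(\varepsilon)$ velocity Taylor remainder admits bounds independent of $R$, so that the two asymptotic regimes do not interact. The hypotheses of the statement --- $C^{1,1}$ regularity of $f(\cdot,v,\lambda,t)$ uniformly in the other variables, together with uniform compact $v$-support --- are tailored precisely to supply both the Lipschitz control on $\nabla_x f$ needed for the spatial expansion and the finite, $x$-uniform velocity moments needed to dominate the Taylor remainder in $v$.
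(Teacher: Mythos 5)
Your proposal is correct, and for the spatial localisation error it takes a genuinely different (and in fact sharper) route than the paper. The paper expands $f(x_\ast,v_\ast,\lambda_\ast,t)=f(x,v_\ast,\lambda_\ast,t)+\nabla_xf(\tilde x,v_\ast,\lambda_\ast,t)\cdot(x_\ast-x)$ only to \emph{first} order with a mean-value remainder and then bounds the gradient term directly by $\abs{\nabla_xf_\ast}\,\abs{x_\ast-x}$; the leading contribution is therefore $\cB_1=\int_{\sB_R(0)}\abs{x}B(\abs{x})\,dx\lesssim R^{1+d}$, which is exactly where the stated order comes from. You instead push the expansion to second order and invoke the radial identity $\int_{\sB_R(0)}B(\abs{y})\,y\,dy=0$ to annihilate the linear term; this is legitimate here because in $\bar Q^\lambda$ and in the alignment part of $\bar Q^{v,\varepsilon}$ the post-interaction states $\bar\lambda'$, $\bar v'$ and the weights do not depend on $x_\ast$ (the $x_\ast$-dependent repulsion/attraction terms live in $\cR^v$, not in the operators being localised), so the only $x_\ast$-dependence besides the kernel sits in $f_\ast$. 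You thus obtain the strictly sharper estimate $\cO(\cB_2)=\cO(R^{2+d})$, which of course still satisfies the claimed $\leq\cO(R^{1+d})$; the price is that you need the Lipschitz constant of $\nabla_xf$ (i.e.\ the full $C^{1,1}$ hypothesis) already at leading order, whereas the paper's leading term only uses $\abs{\nabla_xf}$. Your treatment of the $\varepsilon\to0^+$ limit --- Taylor expansion of $\phi$ along $\bar v'-v=\varepsilon\beta_0(1-\tfrac{\lambda+\lambda_\ast}{2})(v_\ast-v)$, Lipschitz control of $\nabla\phi$, and an $\cO(\varepsilon)$ remainder whose constant involves only $\cB_0$ and compactly supported velocity moments, hence is finite for each fixed $R$ --- coincides in substance with the paper's, and your remark that $\varepsilon$ and $R$ decouple because the limit in $\varepsilon$ is taken first is the right observation. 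The only point worth making explicit is the justification that $F(x,\lambda)=\int f\,dv$ inherits $C^{1,1}$ regularity in $x$ with controlled constants: as in the paper, one should note that the $v$-dependent Lipschitz constants $\Lip_x(\nabla_xf)$ are integrated over the ($x$-uniform) compact velocity support, so the resulting integrals are finite.
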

\begin{proof}
\begin{enumerate}[label=\arabic*.]
\item We begin with the approximation of $\frac{1}{\varepsilon}\bar{Q}^{v,\varepsilon}$. We have:
\begin{align}
    \begin{aligned}[b]
        &\abs{\ave{\frac{1}{\varepsilon}\bar{Q}^{v,\varepsilon}(f,f)-\cP(f,f),\,\phi}} \\
        &\quad =\left\vert\int_{[0,\,1]^2}\int_{\R^{3d}}B(\abs{x_\ast-x})\frac{\phi(\bar{v}')-\phi(v)}{\varepsilon}ff_\ast\,dx_\ast\,\dots\,d\lambda_\ast\right. \\
        &\quad \phantom{=}\left.-\beta_0\cB_0\int_{[0,\,1]^2}\int_{\R^{2d}}\nabla{\phi}(v)\left(1-\frac{\lambda+\lambda_\ast}{2}\right)\cdot(v_\ast-v)ff_\ast\,dv\,\dots\,d\lambda_\ast\right\vert,
    \end{aligned}
    \label{eq:proof.1}
\end{align}
where, here and henceforth, for the sake of simplicity in the non-local terms $f_\ast$ and $dx_\ast\,\dots\,d\lambda_\ast$ are shorthand for $f(x_\ast,v_\ast,\lambda_\ast,t)$ and $dx_\ast\,dv\,dv_\ast\,d\lambda\,d\lambda_\ast$, respectively, whereas in the local terms $f_\ast$ and $dv\,\dots\,d\lambda_\ast$ are shorthand for $f(x,v_\ast,\lambda_\ast,t)$ and $dv\,dv_\ast\,d\lambda\,d\lambda_\ast$, respectively.

Since $\phi$ is smooth, we write
$$ \phi(\bar{v}')-\phi(v)=\nabla{\phi}(\tilde{v})\cdot(\bar{v}'-v) $$
with $\tilde{v}:=\theta\bar{v}'+(1-\theta)v$ for some $\theta\in [0,\,1]$ and we continue the previous calculation as
\begin{align}
    \resizebox{.87\textwidth}{!}{$\displaystyle
    \begin{aligned}[b]
        \eqref{eq:proof.1} &= \left\vert\beta_0\int_{[0,\,1]^2}\int_{\R^{3d}}B(\abs{x_\ast-x})\nabla{\phi}(v)\left(1-\frac{\lambda+\lambda_\ast}{2}\right)\cdot(v_\ast-v)
            ff_\ast\,dx_\ast\,\dots\,d\lambda_\ast\right. \\
        &\phantom{=} +\beta_0\int_{[0,\,1]^2}\int_{\R^{3d}}B(\abs{x_\ast-x})\left(\nabla{\phi}(\tilde{v})-\nabla{\phi}(v)\right)\left(1-\frac{\lambda+\lambda_\ast}{2}\right)\cdot(v_\ast-v)
            ff_\ast\,dx_\ast\,\dots\,d\lambda_\ast \\
        &\phantom{=} \left.-\beta_0\cB_0\int_{[0,\,1]^2}\int_{\R^{2d}}\nabla{\phi}(v)\left(1-\frac{\lambda+\lambda_\ast}{2}\right)\cdot(v_\ast-v)ff_\ast\,dv\,\dots\,d\lambda_\ast\right\vert.
    \end{aligned}
    $}
    \label{eq:proof.2}
\end{align}
Invoking now the assumed smoothness of $f$ we write
\begin{equation}
    f(x_\ast,v_\ast,\lambda_\ast,t)=f(x,v_\ast,\lambda_\ast,t)+\nabla_x{f}(\tilde{x},v_\ast,\lambda_\ast,t)\cdot (x_\ast-x),
    \label{eq:f.expansion_x}
\end{equation}
where $\tilde{x}:=\vartheta x_\ast+(1-\vartheta)x$ for some $\vartheta\in [0,\,1]$, whence
\begin{align}
    \resizebox{0.93\linewidth}{!}{$\displaystyle
    \begin{aligned}[b]
        \eqref{eq:proof.2} &= \left\vert\beta_0\int_{[0,\,1]^2}\int_{\R^{3d}}B(\abs{x_\ast-x})\nabla{\phi}(v)\left(1-\frac{\lambda+\lambda_\ast}{2}\right)\cdot(v_\ast-v)(x_\ast-x)\cdot f\nabla_x\tilde{f}_\ast\,
            dx_\ast\,\dots\,d\lambda_\ast\right. \\
        &\phantom{=} +\beta_0\cB_0\int_{[0,\,1]^2}\int_{\R^{2d}}\left(\nabla{\phi}(\tilde{v})-\nabla{\phi}(v)\right)\left(1-\frac{\lambda+\lambda_\ast}{2}\right)\cdot(v_\ast-v)
            ff_\ast\,dv_\ast\,\dots,\,d\lambda_\ast \\
        &\phantom{=} +\left.\beta_0\int_{[0,\,1]^2}\int_{\R^{3d}}B(\abs{x_\ast-x})\left(\nabla{\phi}(\tilde{v})-\nabla{\phi}(v)\right)\left(1-\frac{\lambda+\lambda_\ast}{2}\right)\cdot(v_\ast-v)
            (x_\ast-x)\cdot f\nabla_x\tilde{f}_\ast\,dx_\ast\,\dots,\,d\lambda_\ast\right\vert.
    \end{aligned}
    $}
    \label{eq:proof.3}
\end{align}
Here, $\nabla_x\tilde{f}_\ast$ stands for $\nabla_xf(\tilde{x},v_\ast,\lambda_\ast,t)$ for brevity. Since $\phi$, $\nabla{\phi}$ are Lipschitz continuous, we may estimate $\abs{\nabla{\phi}(v)}\leq\Lip(\phi)$ and
\begin{align*}
    \abs{\nabla{\phi}(\tilde{v})-\nabla{\phi}(v)} &\leq \Lip(\nabla{\phi})\abs{\tilde{v}-v} \\
    &= \theta\Lip(\nabla{\phi})\abs{\bar{v}'-v}=\theta\beta_0\varepsilon\Lip(\nabla{\phi})\left(1-\frac{\lambda+\lambda_\ast}{2}\right)\abs{v_\ast-v} \\
    &\leq \beta_0\varepsilon\Lip(\nabla{\phi})\abs{v_\ast-v},
\end{align*}
where we have used~\eqref{eq:micro_v_cons} and we have taken into account that $0\leq\theta,\,1-\frac{\lambda+\lambda_\ast}{2}\leq 1$.

Similarly, writing $\nabla_xf(\tilde{x},v_\ast,\lambda_\ast,t)=\nabla_xf(x,v_\ast,\lambda_\ast,t)+\left(\nabla_xf(\tilde{x},v_\ast,\lambda_\ast,t)-\nabla_xf(x,v_\ast,\lambda_\ast,t)\right)$ we estimate
\begin{align*}
    \abs{\nabla_xf(\tilde{x},v_\ast,\lambda_\ast,t)-\nabla_xf(x,v_\ast,\lambda_\ast,t)} &\leq \Lip_x(\nabla_xf_\ast)\abs{\tilde{x}-x} \\
    &= \Lip_x(\nabla_xf_\ast)\vartheta\abs{x_\ast-x} \\
    &\leq \Lip_x(\nabla_xf_\ast)\abs{x_\ast-x},
\end{align*}
where $\Lip_x(\nabla_xf_\ast)$ denotes the Lipschitz constant of $\nabla_xf$ with respect to $x$, which depends in general on the other variables $v_\ast$, $\lambda_\ast$, $t$.

Employing these estimates in~\eqref{eq:proof.3} we discover:
\begin{align*}
    \eqref{eq:proof.3} &\leq \beta_0\cB_1\Lip(\phi)\int_{[0,\,1]^2}\int_{\R^{2d}}\abs{v_\ast-v}f\abs{\nabla_xf_\ast}\,dv\,\dots\,d\lambda_\ast \\
    &\phantom{\leq} +\beta_0\cB_2\Lip(\phi)\int_{[0,\,1]^2}\int_{\R^{2d}}\abs{v_\ast-v}f\Lip_x(\nabla{f}_\ast)\,dv\,\dots\,d\lambda_\ast \\
    &\phantom{\leq} +\varepsilon\beta_0^2\cB_0\Lip(\nabla{\phi})\int_{[0,\,1]^2}\int_{\R^{2d}}\abs{v_\ast-v}^2ff_\ast\,dv\,\dots\,d\lambda_\ast \\
    &\phantom{\leq} +\varepsilon\beta_0^2\cB_1\Lip(\nabla{\phi})\int_{[0,\,1]^2}\int_{\R^{2d}}\abs{v_\ast-v}f\abs{\nabla_xf_\ast}\,dv\,\dots\,d\lambda_\ast \\
    &\phantom{\leq} +\varepsilon\beta_0^2\cB_2\Lip(\nabla{\phi})\int_{[0,\,1]^2}\int_{\R^{2d}}\abs{v_\ast-v}^2f\Lip_x(\nabla_xf_\ast)\,dv\,\dots\,d\lambda_\ast,
\end{align*}
where
$$ \cB_k:=\int_{\sB_R(0)}{\abs{x}}^kB(\abs{x})\,dx, \qquad k\in\mathbb{N}. $$
By switching to polar coordinates in $\R^d$ and invoking the boundedness of the collision kernel $B$ we estimate, in particular,
\begin{align*}
    \cB_k &= \int_{\partial\sB_1(0)}\int_0^Rr^kB(r)\,r^{d-1}dr\,d\cH^{d-1} \\
    &\leq \norm{B}_\infty\cH^{d-1}(\partial\sB_1(0))\int_0^Rr^{k+d-1}\,dr \\
    &= \norm{B}_\infty\cH^{d-1}(\partial\sB_1(0))\frac{R^{k+d}}{k+d}\lesssim R^{k+d},
\end{align*}
$\cH^{d-1}$ being the $(d-1)$-dimensional Hausdorff measure in $\R^d$.

Due to the assumption of $x$-uniform compactness in $v$ of the support of $f$, all integral terms in the estimate above are finite (notice, in particular, that $\abs{\nabla_xf_\ast}$ and $\Lip_x(\nabla_xf_\ast)$ vanish uniformly in $x$ for $\abs{v}$ large enough). Therefore\footnote{We use the notation $a\lesssim b$ to mean that, given $a,\,b\geq 0$, there exists a constant $C>0$, whose specific value is unimportant, such that $a\leq Cb$.}:
$$ \lim_{\varepsilon\to 0^+}\abs{\ave{\frac{1}{\varepsilon}\bar{Q}^{v,\varepsilon}(f,f)-\cP(f,f),\,\phi}}\lesssim\cB_1+\cB_2\lesssim R^{1+d}+R^{2+d}
    \sim\cO(R^{1+d}) $$
for $R$ small.

\item As far as the approximation of the operator $\bar{Q}^\lambda$ is concerned, by relying again on expansion~\eqref{eq:f.expansion_x} we have:
\begin{align*}
    & \abs{\ave{\bar{Q}^\lambda(f,f)-\bar{Q}^\lambda_\loc(f,f),\,\psi}} \\
    &\quad =\abs{\int_{[0,\,1]^2}\int_{\R^{3d}}B(\abs{x_\ast-x})\bigl(\psi(\bar{\lambda}')-\psi(\lambda)\bigr)
        (x_\ast-x)\cdot f\nabla_x\tilde{f}\,dx_\ast\,\dots\,d\lambda_\ast} \\
    &\quad \leq\cB_1\int_{[0,\,1]^2}\int_{\R^{2d}}\abs{\psi(\bar{\lambda}')-\psi(\lambda)}f\abs{\nabla{f_\ast}}\,dv\,\dots\,d\lambda_\ast \\
    &\qquad +\cB_2\int_{[0,\,1]^2}\int_{\R^{2d}}\abs{\psi(\bar{\lambda}')-\psi(\lambda)}f\Lip_x(\nabla{f_\ast})\,dv\,\dots\,d\lambda_\ast \\
    &\quad \leq\norm{\psi}_\infty\left(\cB_1\int_{[0,\,1]^2}\int_{\R^{2d}}f\abs{\nabla{f_\ast}}\,dv\,\dots\,d\lambda_\ast
        +\cB_2\int_{[0,\,1]^2}\int_{\R^{2d}}f\Lip(\nabla{f_\ast})\,dv\,\dots\,d\lambda_\ast\right) \\
    &\quad \lesssim \cB_1+\cB_2\sim\cO(R^{1+d})
\end{align*}
for $R$ small. \qedhere
\end{enumerate}
\end{proof}

\begin{remark}\label{remark_R}
Theorem~\ref{theo:local_approx_Q} ensures that the considered approximations of $\bar{Q}^{v,\varepsilon}$ and $\bar{Q}^\lambda$ are consistent with each other, although they have been obtained in partly different ways due to the structural dependence of $\bar{Q}^{v,\varepsilon}$, unlike $\bar{Q}^\lambda$, on $\varepsilon$. Because of this, and notwithstanding the same order of approximation in $R$ of the operators, we expect different accuracy among the macroscopic quantities $\rho$, $u$, $l$ provided by the macroscopic model \eqref{eq:macro} compared to the same quantities computed as statistical moments of the kinetic distribution function $f$. The reason is that the macroscopic model is obtained in the hydrodynamic limit by closing system~\eqref{eq:hydro.eps} with $f^0$~\eqref{eq:f0.complete}, which is determined approximately in the local regime of small $R$. First, we observe that $\rho$ evolves due to the free particle drift only, hence the approximation in $R$ of $f^0$ does not affect it. Second, $l$ depends mainly on the marginal distribution $n^0$, which is found in local regime of small $R$. Finally, and differently, the fact that $\bar{Q}^{v,\varepsilon}\to 0$ as $\varepsilon\to 0^+$ implies primarily that $p^0$ is the initial condition $p(x,v,0)$ of the pure interaction dynamics. Next, for $R$ small and in the high frequency regime ($1/\varepsilon$ with $\varepsilon\to 0^+$), $p^0$ is approximated by $\delta(v-u)$. This approximation, determined by the microscopic dynamics, is also consistent with the classical choice of the monokinetic closure in the derivation of the Euler equations from the non-collisional Vlasov equation. As a consequence, if one prescribes actually $p(x,v,0)=\delta(v-u(x,0))$ then the evolution of $u$ yielded by \eqref{eq:macro_b} is not affected by $R$ unlike $u$ computed as the mean of $f$ solving~\eqref{eq:Boltz_weak}.   
\end{remark}

\section{Numerical tests}\label{sect:numerics}
In this section we present some numerical tests in 1D and 2D. Tests in 1D are mainly dedicated to the numerical confirmation of the expected match between the micro- and the macro-scale. Conversely, tests in 2D are devoted to show the ability of the model to reproduce some patterns of interest, like merge and split of flocks, cf.\ \cite{cristiani2021JMB}. 

For the reader's convenience, we start detailing the numerical schemes we used for simulations. 

\subsection{Numerical approximation}\label{scheme}
In this section we present the numerical schemes used to discretize the three equations for $\rho$, $u$ and 
$l$ in \eqref{eq:macro.noncons} in 2D. The 1D case can be easily derived by the 2D case. 
We face a system coupling a conservation law, a Burgers' equation with source term, and an advection equation with variable velocity and source term.
While these equations are standard in the literature, and many numerical schemes are available for all of them, the combination of the three numerical schemes is not trivial since stability issues can arise. In addition, the effect of boundary conditions is not at all negligible and must be carefully taken into account.

\subsubsection{Numerical grid}
Let us denote by $(x^1,x^2)$ the two components of the space vector $x$.
 The computational domain $\Omega\times[0,T]$, for some $\Omega\subset \R^2$ and final time $T>0$, is divided in cells of side $\Delta x^1 \times \Delta x^2 \times \Delta t$, where $\Delta x^1$, $\Delta x^2$ are the space steps in the two dimensions and $\Delta t$ is the time step, respectively. 
Let us assume that $T$ is a multiple of $\Delta t$ to avoid rounding, and define $n_T:=\frac{T}{\Delta t}\in\mathbb N$. 
Similarly, we choose $\Omega$ as a rectangular domain of size $n^1\Delta x^1\times n^2\Delta x^2$, with $n^1, n^2\in\mathbb N$.
The generic space cell and space-time cell are defined as 

$$
C_{i,j}:=
\left[x^1_{i}-\frac{\Delta x^1}{2}, x^1_{i}+\frac{\Delta x^1}{2}\right) \times 
\left[x^2_{j}-\frac{\Delta x^2}{2}, x^2_{j}+\frac{\Delta x^2}{2}\right), 
\qquad
C^n_{i,j}:=C_{i,j} \times \left[t^n, t^{n+1}\right)
$$ 

where, as usual,
$$
\left\{x^1_0,\ldots,x^1_{n^1}\right\}, \qquad
\left\{x^2_0,\ldots,x^2_{n^2}\right\}, \qquad
\{t^0,\ldots,t^{n_T}\}$$
are the grid points.

Finally, we denote by $\rho^n_{i,j}$ the approximation of the \emph{integral mean} over a cell $C_{i,j}$ at time $t^n$,
$$
\rho^n_{i,j}\approx \frac{1}{\Delta x^1\Delta x^2}\int_{C_{i,j}}\rho(x,t^n) dx,
$$
while we denote by $u^n_{i,j}$ and $l^n_{i,j}$ the \emph{pointwise} approximation of $u$ and $l$ at grid nodes
$$
u^n_{i,j}\approx u(x^1_{i},x^2_{j},t^n), \qquad l^n_{i,j}\approx l(x^1_{i},x^2_{j},t^n).
$$

\subsubsection{Equation for~\texorpdfstring{$\boldsymbol{\rho}$}{}}
For the conservation law we use the push-forward scheme proposed in \cite{piccoliARMA2011} and then used in, e.g., \cite{cristiani2023CMS, cristiani2011MMS}. 
It is a natural generalization of the 1D upwind scheme and, although it exhibits a slightly diffusive behavior, it is stable and conservative even with sign-changing velocity fields (under the CFL condition).
The push-forward scheme is explicit in time and truly two-dimensional. 
Also, it follows the physics of the underlying problem: at each time step, some mass leaves the cell $C_{i,j}$, while other mass coming from neighboring cells enters $C_{i,j}$. A CFL condition of the form
$$ \Delta{t}\max_{i,j,n}\abs{u_{i,j}^n}\leq\min\{\Delta{x^1},\,\Delta{x^2}\} $$
is imposed to avoid that mass covers a distance larger than a cell in one time step.

The balance of mass is given by \cite[Sect.\ 5.5.2]{cristiani2014book}
\begin{equation}\label{schema_push_forward}
\rho^{n+1}_{i,j} = \frac{1}{\Delta x^1 \Delta x^2}
\sum\limits_{r,s}
\rho^{n}_{i,j}\ \mathcal L\big(C_{i,j}\cap\gamma^n(C_{r,s})\big), \qquad n=0,\ldots,n_T-1,
\end{equation}
where $\mathcal L(\mathcal K)$ denotes the Lebesgue measure of a subset $\mathcal K\subseteq\R^2$, 
and 
$$
\gamma^n(\zeta):=\zeta+u(\zeta, n\Delta t)\Delta t,\qquad \zeta\in\R^2
$$
is the discrete-in-time map which pushes the mass forward in space. 

The scheme \eqref{schema_push_forward} can be written in a more computer-friendly form as follows: 
let us denote by $\delta_{i,j}$ the standard Kronecker delta and by $(\phantom{x})^\pm$ the positive/negative part. Let us also define $X:=\Delta t \ u$ and denote by $X^1, X^2$ are the two components of the vector $X$.
Then, the scheme \eqref{schema_push_forward} can be written as
$$ \rho^{n+1}_{i,j} =\frac{1}{\Delta x^1\Delta x^2}\sum_{\substack{\abs{r-i}\leq 1 \\ \abs{s-j}\leq 1}} \rho^{n}_{r,s} \ \Gamma_{r,s}^{1,n} \ \Gamma_{r,s}^{2,n},
    \qquad n=0,\ldots,n_T-1, $$
where
\begin{eqnarray*}
\Gamma_{r,s}^{1,n} := 
\left(X_{r,s}^{1,n}\right)^+ \delta_{r,i-1} + 
\left(X_{r,s}^{1,n}\right)^- \delta_{r,i+1} + 
\left(\Delta x^1-|X_{r,s}^{1,n}|\right)\delta_{r,i}, \\ [2mm]
\Gamma_{r,s}^{2,n} := 
\left(X_{r,s}^{2,n}\right)^+ \delta_{s,j-1} + 
\left(X_{r,s}^{2,n}\right)^- \delta_{s,j+1} + 
\left(\Delta x^2-|X_{r,s}^{2,n}|\right)\delta_{s,j}.
\end{eqnarray*}
\subsubsection{Equation for~\texorpdfstring{$\boldsymbol{u}$}{}}
Let us denote by $G_u[\rho,u,l]$ the right-hand side of equation \eqref{eq:macro.noncons_b}.
The vector equation \eqref{eq:macro.noncons_b} is explicitly written as
$$
\left\{
\begin{array}{l}
     \partial_t u^1 + u^1 \partial_{x^1}u^1 + u^2 \partial_{x^2}u^1 = G_u^1[\rho,u,l],  \\ [2mm]
     \partial_t u^2 + u^1 \partial_{x^1}u^2 + u^2 \partial_{x^2}u^2 = G_u^2[\rho,u,l].
\end{array}
\right.
$$
For numerical purposes, it is useful to add a small diffusion term as follows, this will result in an increased numerical stability. Doing this we get
$$
\left\{
\begin{array}{l}
     \partial_t u^1 + u^1 \partial_{x^1}u^1 + u^2 \partial_{x^2}u^1 = G_u^1[\rho,u,l] + D  \triangle u^1,  \\ [2mm]
     \partial_t u^2 + u^1 \partial_{x^1}u^2 + u^2 \partial_{x^2}u^2 = G_u^2[\rho,u,l] + D \triangle u^2.
\end{array}
\right.
$$
for some $D>0$ small.
Denoting by $\widetilde G_u$ the discretized right-hand side obtained by a first-order quadrature formula, and using finite differences to approximate derivatives, we get the following explicit numerical scheme:
$$
\left\{
\begin{array}{l}
     u^{1,n+1}_{i,j} = u^{1,n}_{i,j} 
     - \frac{\Delta t}{2\Delta x^1}\ u^{1,n}_{i,j}\ (u^{1,n}_{i+1,j}-u^{1,n}_{i-1,j})
     - \frac{\Delta t}{2\Delta x^2}\ u^{2,n}_{i,j}\ (u^{1,n}_{i,j+1}-u^{1,n}_{i,j-1})
     + 
     \\ [2mm] \hspace{2cm}
     \Delta t \ \widetilde G^1_u[\rho^n,u^n,l^n](x_i,y_j,t^n) + 
     D \frac{\Delta t}{\Delta x^1 \Delta x^2}(u^{1,n}_{i-1,j}-2u^{1,n}_{i,j}+u^{1,n}_{i+1,j})+
     \\ [2mm] \hspace{2cm}
      D \frac{\Delta t}{\Delta x^1 \Delta x^2}(u^{1,n}_{i,j-1}-2u^{1,n}_{i,j}+u^{1,n}_{i,j+1}),  \\ [4mm]
     u^{2,n+1}_{i,j} = u^{2,n}_{i,j} 
     - \frac{\Delta t}{2\Delta x^1}\ u^{1,n}_{i,j}\ (u^{2,n}_{i+1,j}-u^{2,n}_{i-1,j})
     - \frac{\Delta t}{2\Delta x^2}\ u^{2,n}_{i,j}\ (u^{2,n}_{i,j+1}-u^{2,n}_{i,j-1})+
     \\ [2mm] \hspace{2cm}
      \Delta t \ \widetilde G^2_u[\rho^n,u^n,l^n](x_i,y_j,t^n) + 
     D \frac{\Delta t}{\Delta x^1 \Delta x^2}(u^{2,n}_{i-1,j}-2u^{2,n}_{i,j}+u^{2,n}_{i+1,j})+
     \\ [2mm] \hspace{2cm}
      D \frac{\Delta t}{\Delta x^1 \Delta x^2}(u^{2,n}_{i,j-1}-2u^{2,n}_{i,j}+u^{2,n}_{i,j+1}).
\end{array}
\right.
$$

In order to further increase stability, we have actually used the \emph{implicit} version of the previous scheme, obtained considering the time step $n+1$ instead on $n$ in the two right-hand sides, thus obtaining two coupled linear systems of size $n^1 n^2 \times n^1 n^2$ with unknowns $\{u^{1,n+1}\}_{i,j}$ and $\{u^{2,n+1}\}_{i,j}$, respectively.

\subsubsection{Equation for~\texorpdfstring{$\boldsymbol{l}$}{}}
For the third equation we employ an explicit first-order semi-Lagrangian scheme \cite[Sect.\ 5.1.3]{falcone2013book}. 
Let us denote by $G_{l}[\rho,l]$ the right-hand side of equation \eqref{eq:macro.noncons_c}, and by $\widetilde G_{l}$ its discretization obtained by means of a first-order quadrature formula. 
Then the scheme reads as
$$
l^{n+1}_{i,j}=I[l^{n}]\big((x^1_i,x^2_j)-\Delta t\ u^n_{i,j}\big)+\Delta t \ \widetilde G_{l}[\rho^n,l^n](x^1_{i},x^2_{j},t^n)
$$
where, for any $\zeta\in\R^2$, $I[l^{n}](\zeta)$ is the approximated value of $l(\zeta, t^n)$ obtained by interpolation using available values of $l^n$ at grid nodes.
For our purposes, we considered a bilinear interpolation which uses the values of the four closest vertices to the point $\zeta$.
More precisely, assuming that the point $\zeta$ is enclosed by the grid nodes $(x_{i},y_{j}), (x_{i+1},y_{j}), (x_{i},y_{j+1}), (x_{i+1},y_{j+1})$, for any function $\omega=\omega(x,y)$ and point $\zeta\in\R^2$, we have
$$
I_{\text{bil}}[\omega](\zeta)=
\pi_1(\zeta) \ \omega(x^1_{i},x^2_{j})+
\pi_2(\zeta) \ \omega(x^1_{i+1},x^2_{j})+
\pi_3(\zeta) \ \omega(x^1_{i+1},x^2_{j+1})+
\pi_4(\zeta) \ \omega(x^1_{i},x^2_{j+1})
$$
with 
$$
\begin{array}{ll}
   \pi_1(\zeta):=\frac{(x^1_{i+1}-\zeta^1)(x^2_{j+1}-\zeta^2)}{\Delta x^1 \Delta x^2},
   &
   \pi_2(\zeta):=\frac{(\zeta^1-x^1_{i})(x^2_{j+1}-\zeta^2)}{\Delta x^1 \Delta x^2}, \\ [3mm]
   \pi_3(\zeta):=\frac{(\zeta^1-x^1_{i})(\zeta^2-x^2_j)}{\Delta x^1 \Delta x^2}, 
   &
   \pi_4(\zeta):=\frac{(x^1_{i+1}-\zeta^1)(\zeta^2-x^2_j)}{\Delta x^1 \Delta x^2}.
\end{array}
$$

\subsection{Comparison of the microscopic and macroscopic model in 1D}
In this section we compare the results of the numerical integration of the microscopic stochastic process and of the macroscopic model in 1D. In particular, we integrate the microscopic model \eqref{def:micro_drift}-\eqref{def:Sigma} with a direct Monte Carlo algorithm as done, e.g., in \cite{loy2021KRM}, with $N=10^5$ particles. The macroscopic model is integrated according to the scheme illustrated in Section \ref{scheme}. 
Model parameters are summarized in Table \ref{tab:allparameters}.
\begin{table}[!t]
\centering
 \caption{Choice of parameters}
    \begin{tabular}{|l|c|c|c|c|c|c|c|c|c|}
        \hline
	Test & $\alpha$ & $\beta$ & $\gamma$ & $\eta$ & $\mu$ & $\nu$ & $R$ \\
	\hline\hline
  	1D & 0.01 & 0.5 & 1 & 1 & 1 & 0.8 & varied \\
        \hline
        2Da & 0.0225 & 0.5 & 0.5 & 0.05 & 0.5 & 0.8 & 0.3 \\
        \hline	
        2Db & 0.01 & 0.1 & 1.3 & 0.3 & 1.5 & 0.2 & 0.25 \\
        \hline
        2Dc & 0.01 & 1 & 0.4 & 2 & 3 & 1 & 0.4 \\
	\hline
    \end{tabular}
\label{tab:allparameters}
\end{table}

Simulations at both scales are performed over the non-dimensional spatial domain $I= \left[0, 1/R\right]$, with final time $T=5$.
We approximate the solution of \eqref{eq:macro.noncons} starting from the following initial conditions: 
\begin{equation*}
\begin{aligned}  
\rho(x,0)&= \frac{1}{\sqrt{2\pi \sigma^2}} \exp\left({-\frac{(x-x_0)^2}{2\sigma^2}}\right),\\
u(x,0)&=0,\\
l (x,0)&= 0 ,
\nonumber
\end{aligned}
\end{equation*}
for $x \in I $. Here $x_0=\frac{1}{2R}$ and $\sigma=\frac{0.1}{R}$. Null Dirichlet boundary conditions are considered. We have chosen $B(x_\ast-x)=\chi_{||x_\ast-x||\le R}$.

\begin{table}[!t]
\centering
\caption{$L^2$-distance of the difference between microscopic and macroscopic $\rho$, $\rho u$ and $\rho l $.}
\label{tab:L2_norm}
\subtable[$\rho$]{
    \begin{tabular}{|l||c|c|}
    \hline
    & $R=0.02$ & $R=0.01$ \\
    \hline\hline
    $\varepsilon=10^{-3}$ & 0.3894 & 0.2112 \\
    \hline
    $\varepsilon=10^{-4}$ & 0.1232 & 0.1167 \\
    \hline
    \end{tabular}
}
\subtable[$\rho u$]{
    \begin{tabular}{|c|c|}
    \hline
    $R=0.02$ & $R=0.01$ \\
    \hline\hline
    0.0084 & 0.0082 \\
    \hline
    0.00114 & 0.00110 \\
    \hline
    \end{tabular}
}
\subtable[$\rho l$]{
    \begin{tabular}{|c|c|}
    \hline
    $R=0.02$ & $R=0.01$ \\
    \hline\hline
    0.0908 & 0.0680 \\
    \hline
    0.0742 & 0.0675 \\
    \hline
    \end{tabular}}
\end{table}

In order to quantify the agreement between the numerical results, we have computed the $L^2$-distance between microscopic and macroscopic density $\rho$ and first order moments of $f$ in $\lambda$ and $v$, i.e.\ $\rho u$ and $\rho l$, at final time, for some values of $R$ and $\varepsilon$; see Table \ref{tab:L2_norm}. As expected, for fixed $R$ the discrepancy decreases as $\varepsilon$ decreases, and, for fixed $\varepsilon$ the discrepancy decreases as $R$ decreases.

In Figure~\ref{fig1} we show the profiles of the solution $(\rho,\,\rho u,\,\rho l)$ at the final computational time in the cases considered in Table \ref{tab:L2_norm}.
The numerical results highlight the fact that a small $\varepsilon$ is particularly effective in matching microscopic and macroscopic simulations.

\begin{figure}[!ht]
\centering
\subfigure[$R=0.02, \varepsilon=10^{-3}$]{\includegraphics[width=0.49\textwidth]{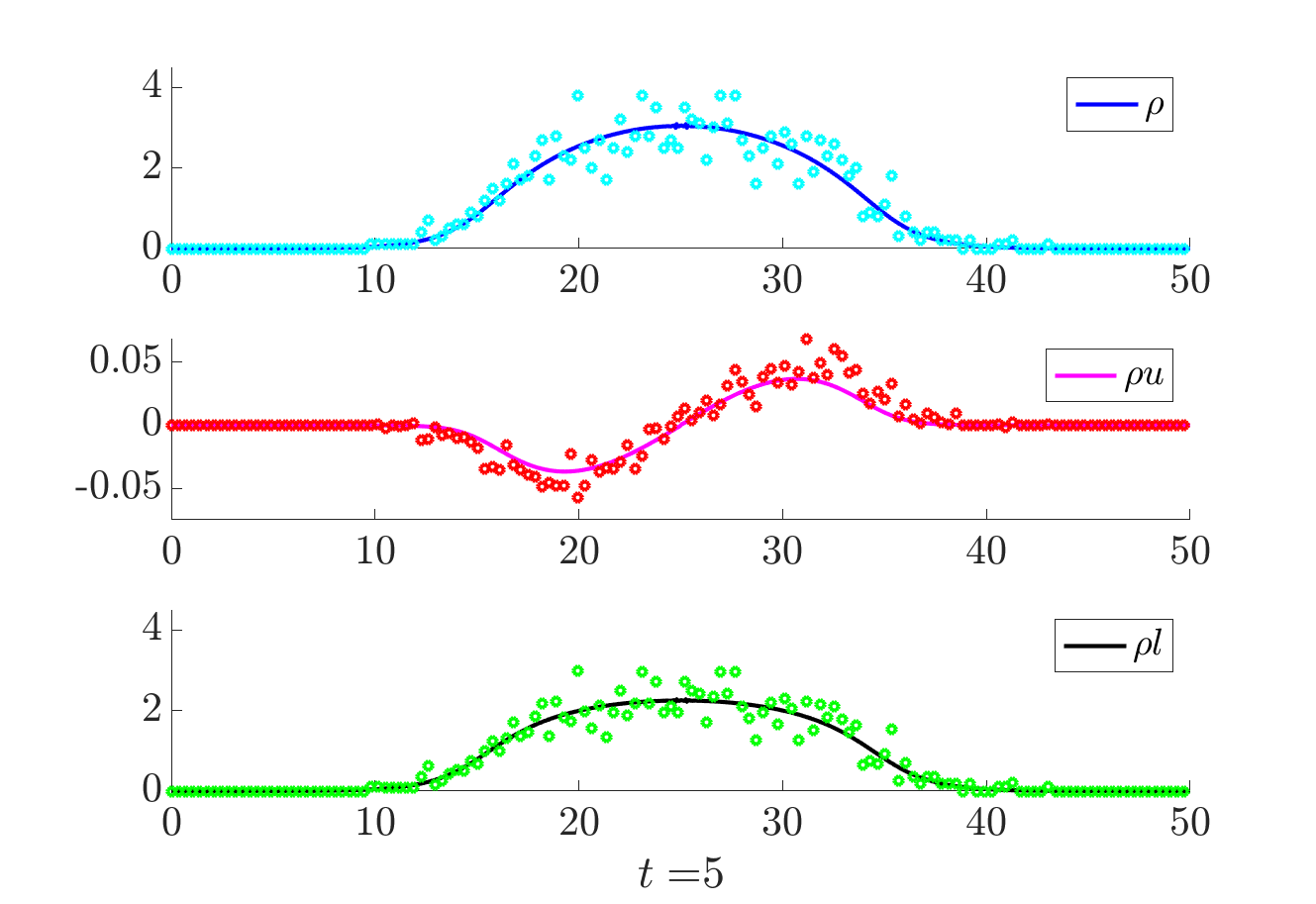}}
\subfigure[$R=0.01, \varepsilon=10^{-3}$]{\includegraphics[width=0.49\textwidth]{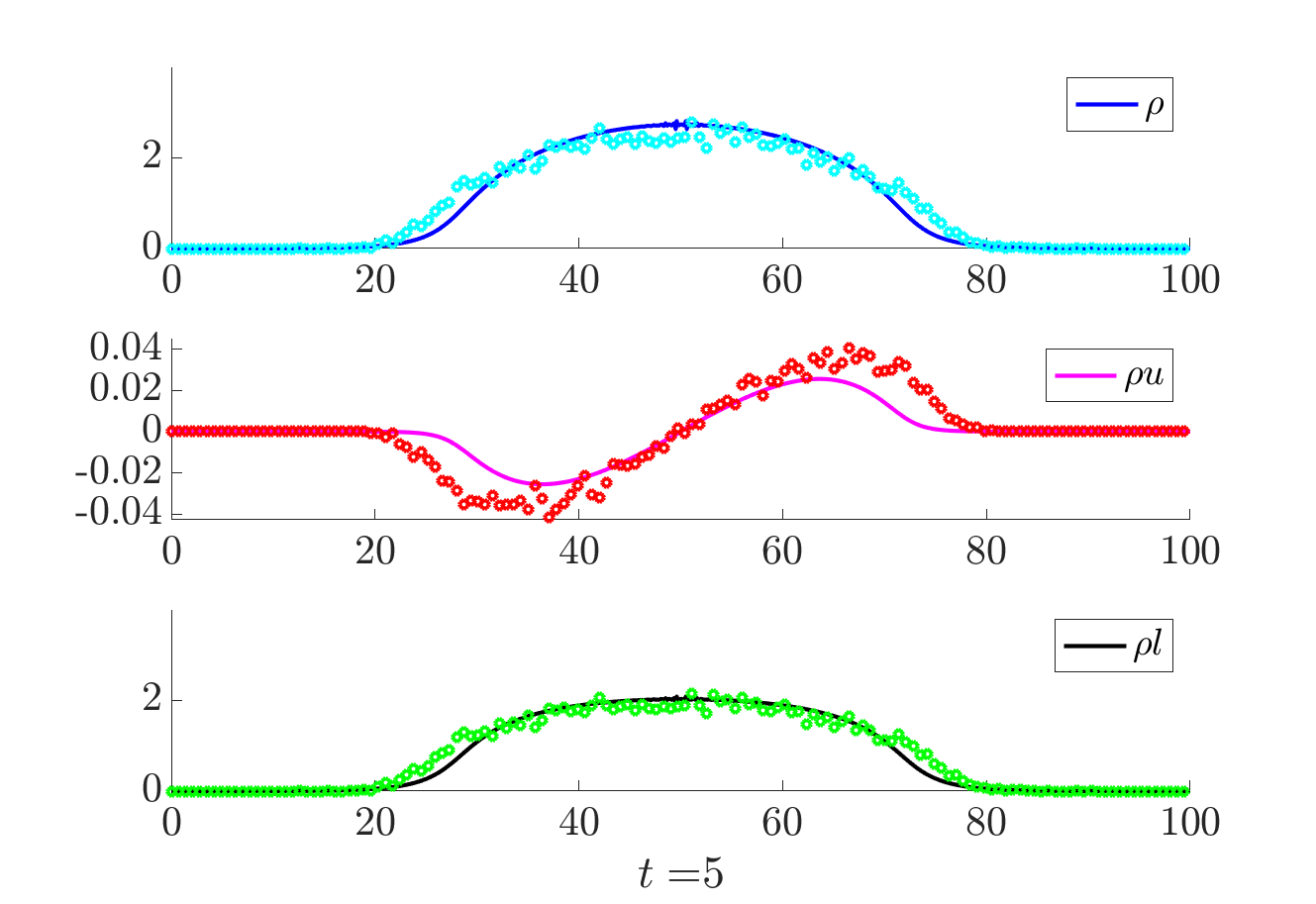}}\\
\subfigure[$R=0.02, \varepsilon=10^{-4}$]{\includegraphics[width=0.49\textwidth]{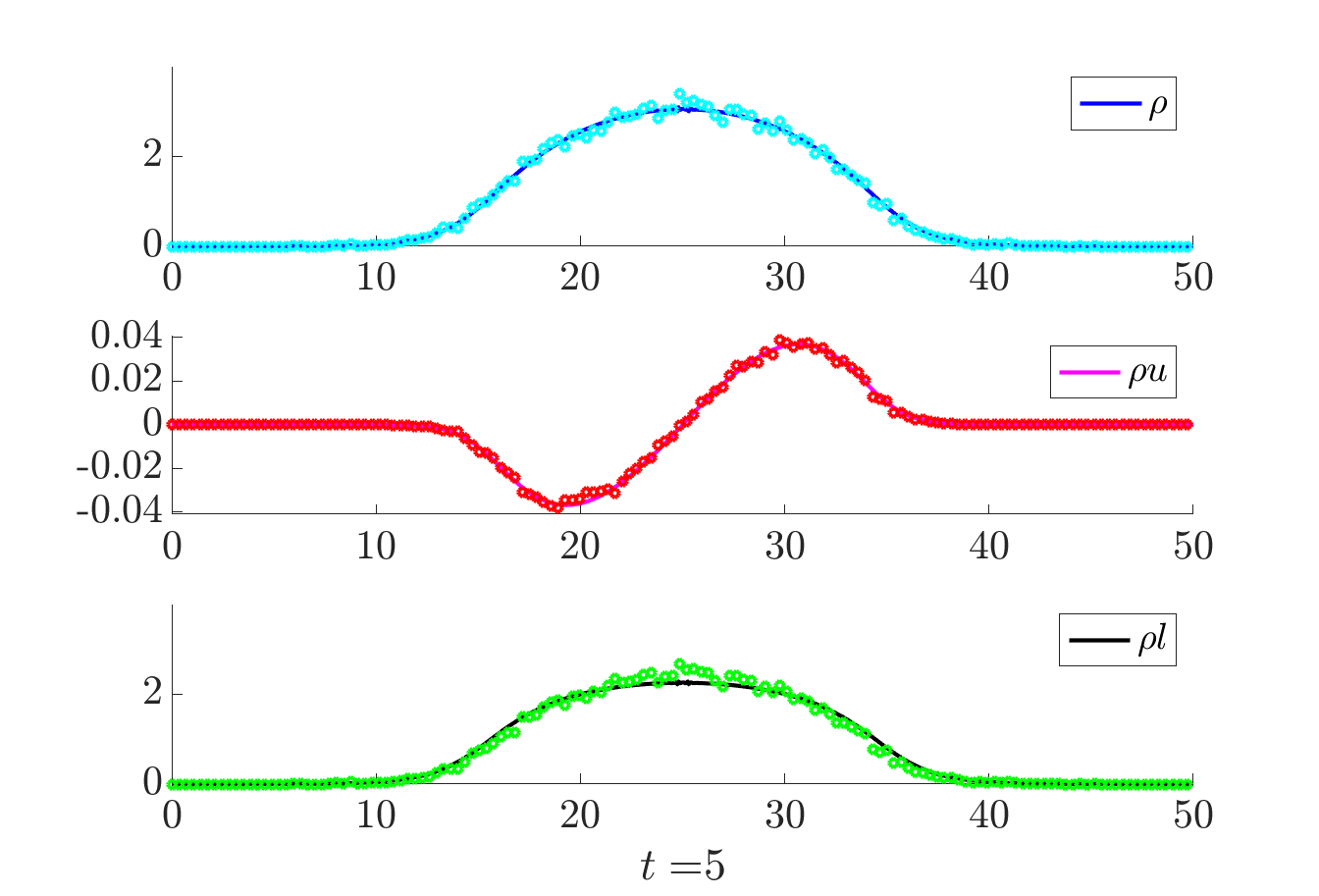}}
\subfigure[$R=0.01, \varepsilon=10^{-4}$]{\includegraphics[width=0.49\textwidth]{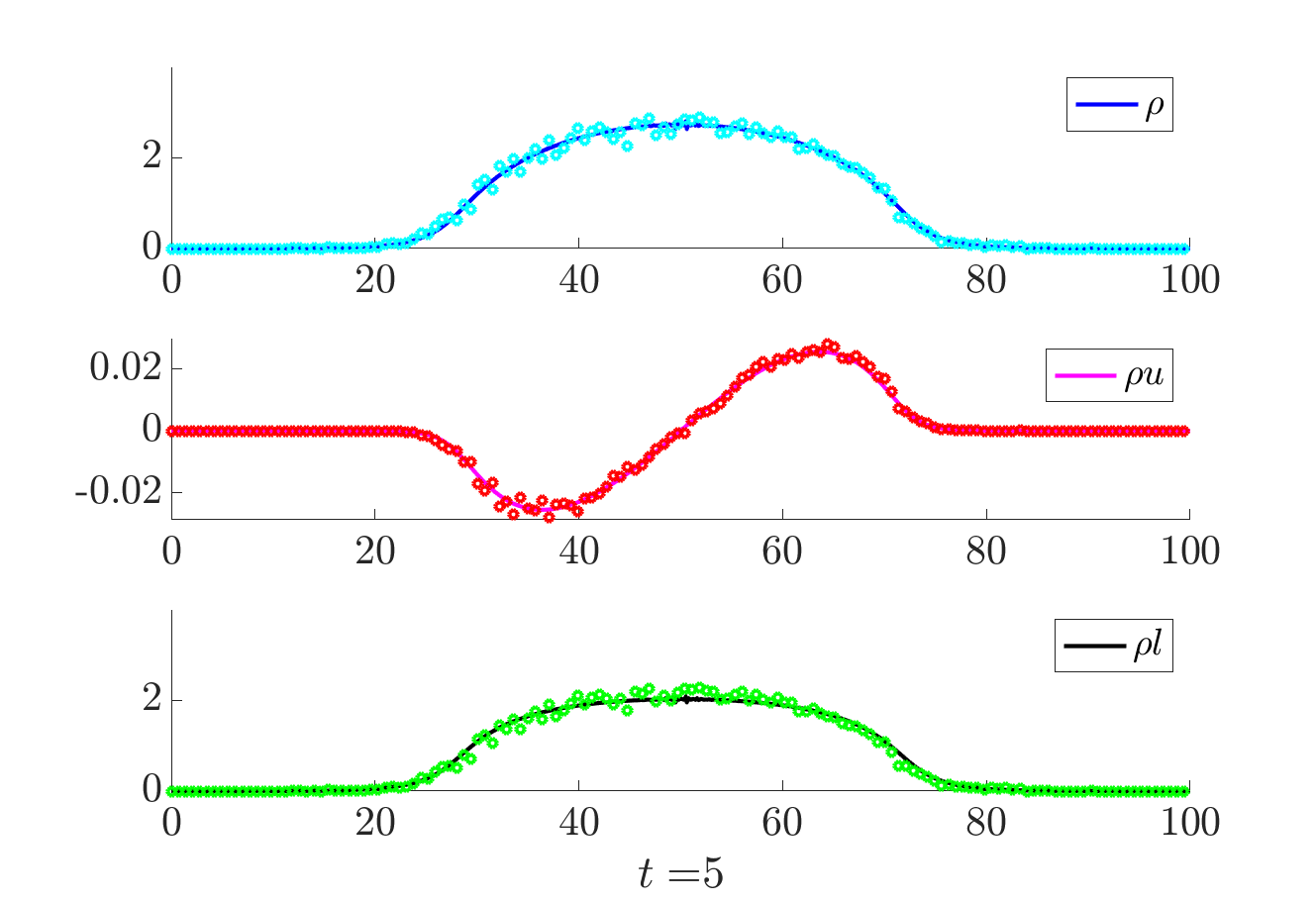}}
\caption{Test 1D. Comparison between the numerical results of the macroscopic model \eqref{eq:macro.noncons} with null Dirichlet boundary conditions (continuous lines) and of the microscopic stochastic process \eqref{def:micro_drift}-\eqref{def:Sigma} (`o' markers). A small numerical instability can be observed in the approximation of $\rho$ when the velocity changes sign and this is due to the upwind scheme used for the continuity equation.
}
\label{fig1}
\end{figure}

\subsection{Numerical tests in 2D}
In this section we present the results of numerical simulations of the macroscopic model \eqref{eq:macro.noncons} in a two-dimensional domain. The values of the model parameters are reported in Table \ref{tab:allparameters}.

\subsubsection{Test 2Da: Turning and split}
In this test we consider a single flock initially in a steady state, which starts moving and elongates. To this end we consider a square domain $\Omega= \left[0, 2\right] \times \left[0, 2\right]$, a final time $T=300$, and we set initial data as follows:
\begin{equation*}
\begin{aligned}  
\rho(x,0)&=  \exp\left({-\frac{(x^1-C^1_{0})^2}{2\sigma_0^2}-\frac{(x^2-C^2_{0})^2}{2\sigma_0^2}}\right),\\
u(x,0)&=0,\\
l (x,0)&= 0.9 \exp\left({\frac{-(x^1-C^1_{1})^2}{2\sigma_1^2}-\frac{(x^2-C^2_{1})^2}{2\sigma_1^2}}\right)+ 0.8 \exp\left({\frac{-(x^1-C^1_{2})^2}{2\sigma_2^2}-\frac{(x^2-C^2_{2})^2}{2\sigma_2^2}}\right) ,
\nonumber
\end{aligned}
\end{equation*}
for any $x=(x^1,x^2) \in \Omega $. Here 
$C_{0}=(C^1_{0},C^2_{0})=(1,1)$, 
$C_{1}=(C^1_{1},C^2_{1})=(0.8,0.8)$,
$C_{2}=(C^1_{2},C^2_{2})=(1.3,1.3)$,
$\sigma_0=\sqrt{0.03}$,
$\sigma_1=\sigma_2=\sqrt{0.02}$.
Figure \ref{fig:test2_ci} shows the initial conditions for $\rho$ and $l$.
Note that at initial time the flock has null velocity, this means that the initial condition for $l$ will play a crucial role in the dynamics.
\begin{figure}[!t]
\centering
\subfigure[]{\includegraphics[scale=0.3]{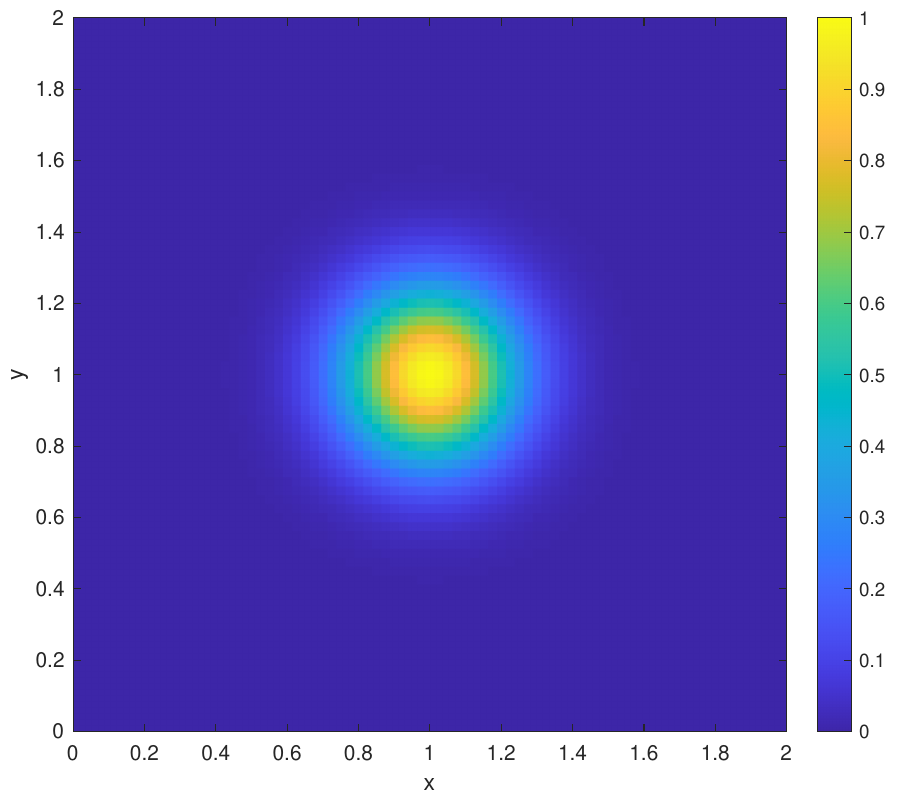}}
\subfigure[]{\includegraphics[scale=0.3]{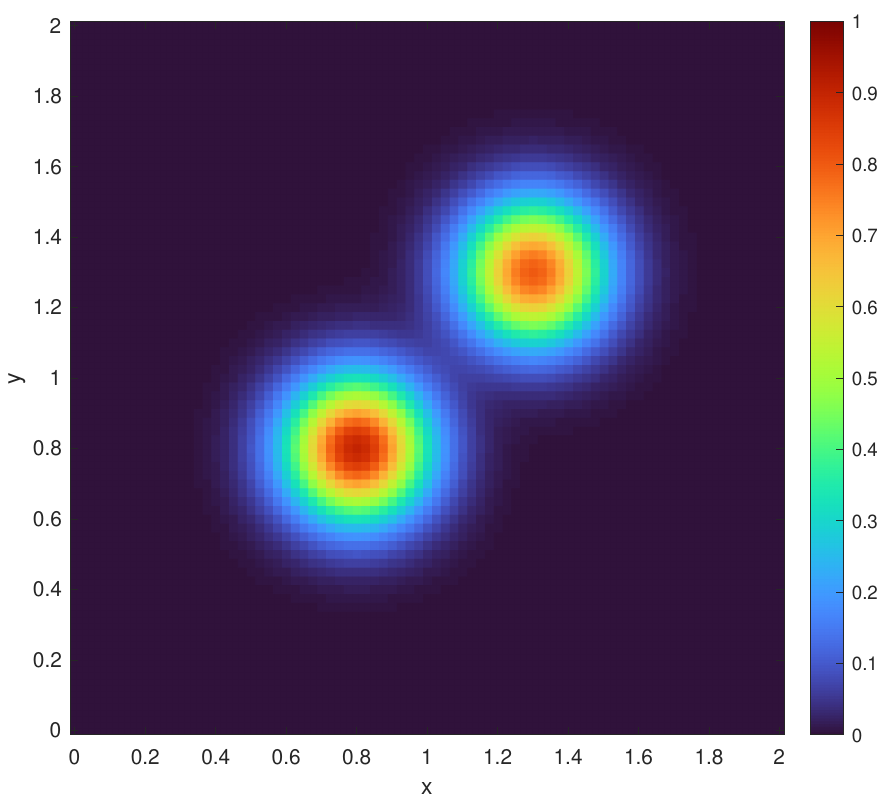}}
\caption{Test 2Da: Initial conditions for (a) $\rho$ and (b) $l$.}
\label{fig:test2_ci}
\end{figure}

Boundary conditions are of Dirichlet type, with $\rho=u=l=0$.
For the numerical approximation we set $\Delta x^1=\Delta x^2=0.025$ and $\Delta t= 0.1$.

Figure \ref{fig:test2} shows three snapshots of the evolution. At the beginning, the group starts moving in south-west direction, influenced by a higher degree of leadership in that region.
After a while, we observe a turning behavior, with some agents which start moving in the opposite direction. As a result, the flock elongates and the interior density decreases.
\begin{figure}[!t]
\centering
\subfigure[]{\includegraphics[scale=0.3]{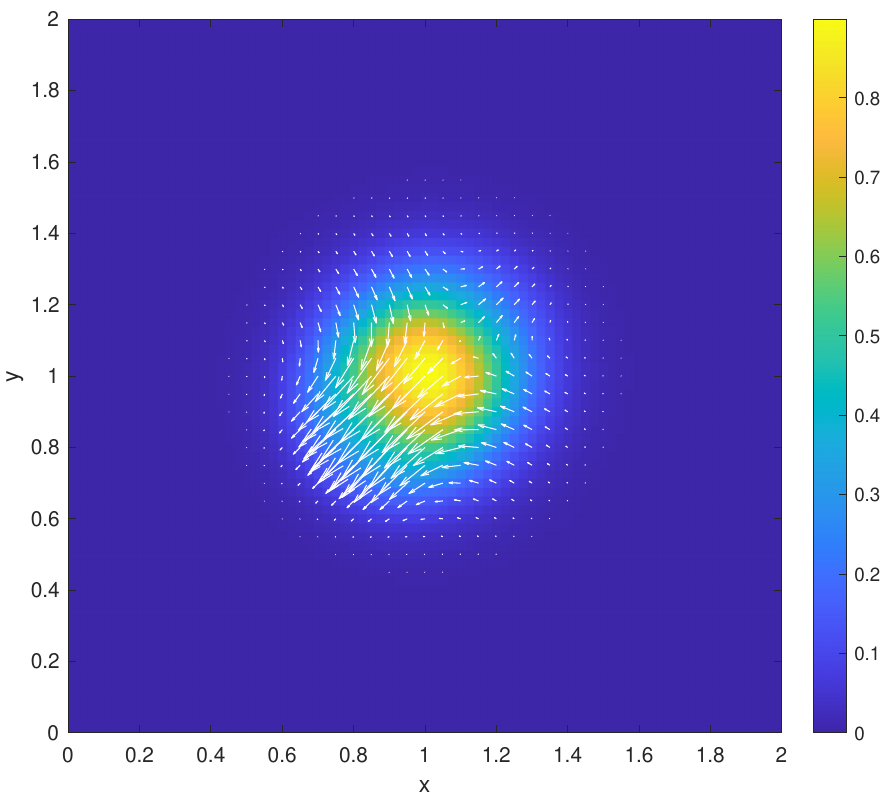}}
\subfigure[]{\includegraphics[scale=0.3]{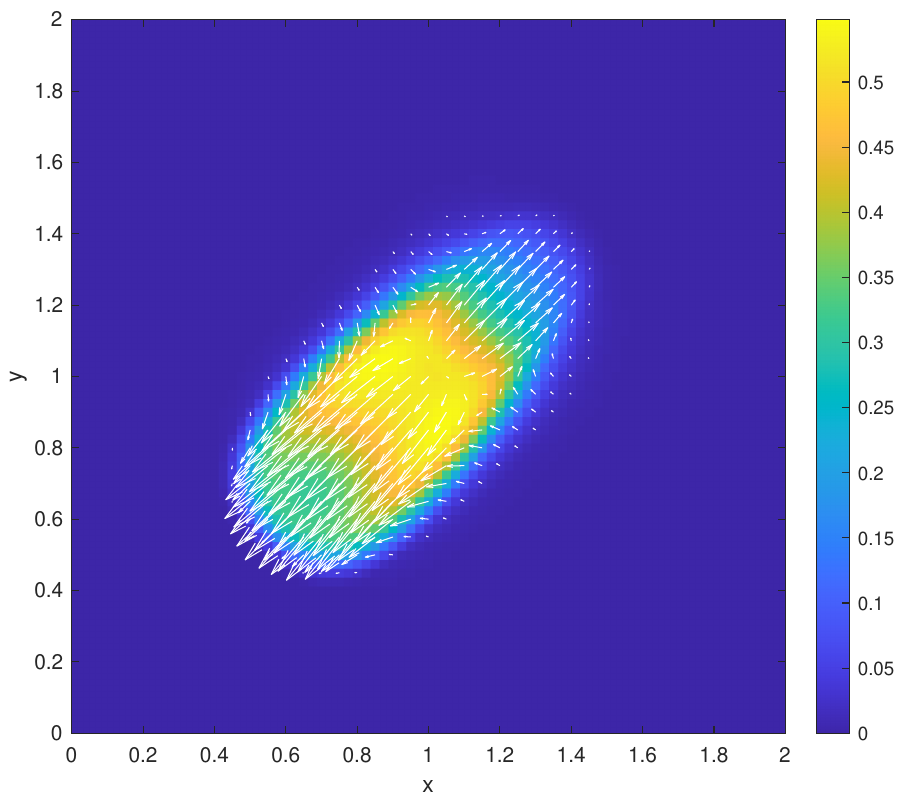}}
\subfigure[]{\includegraphics[scale=0.3]{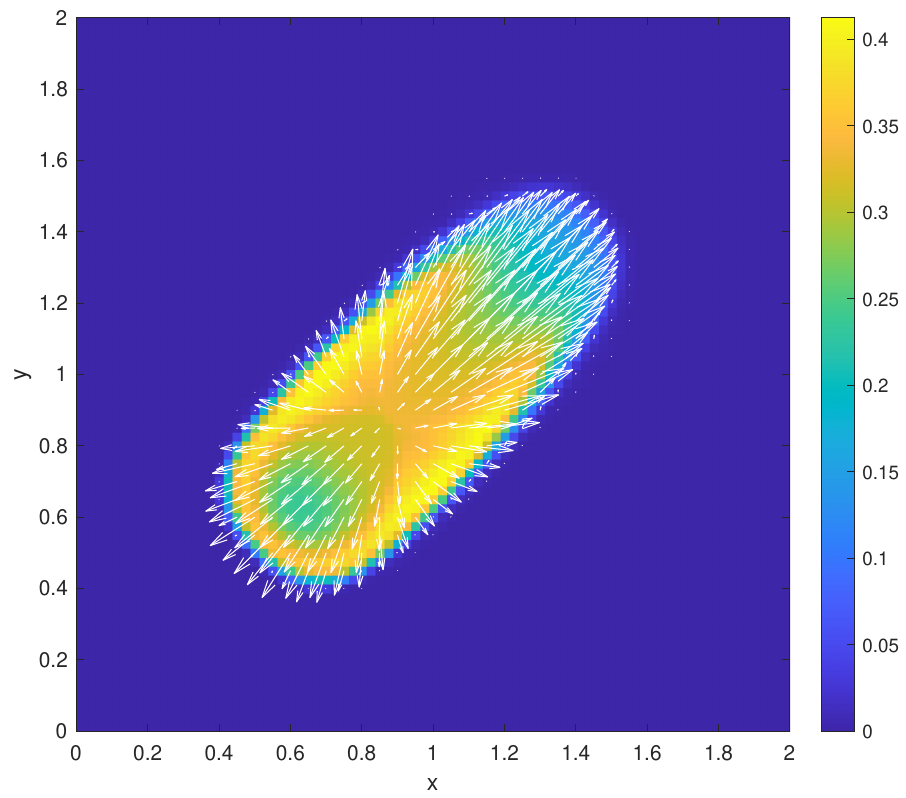}}
\subfigure[]{\includegraphics[scale=0.3]{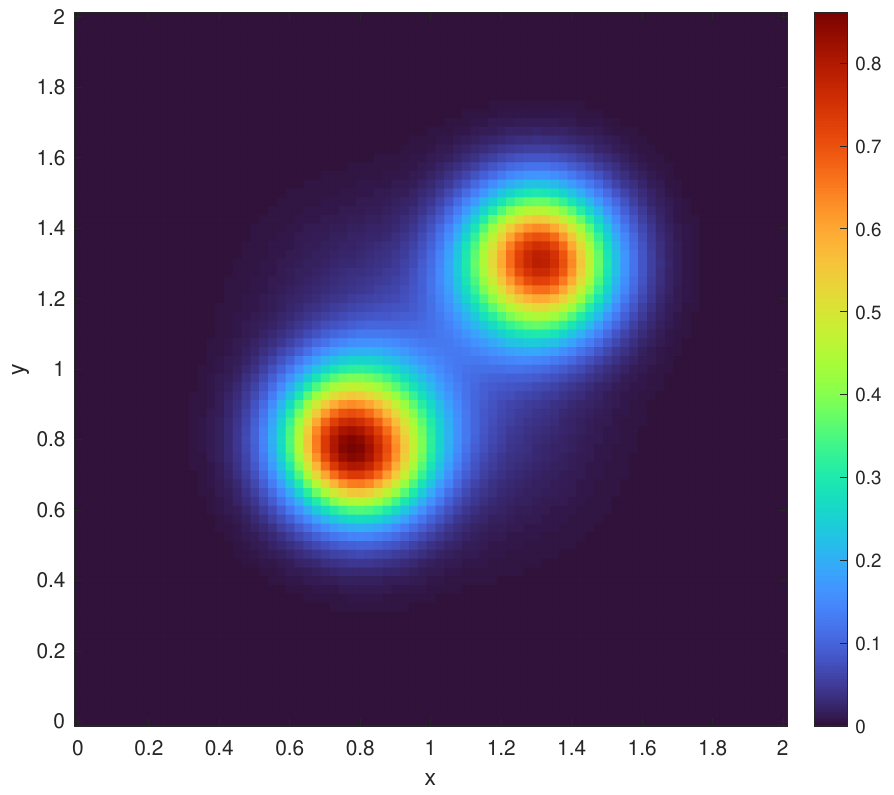}}
\subfigure[]{\includegraphics[scale=0.3]{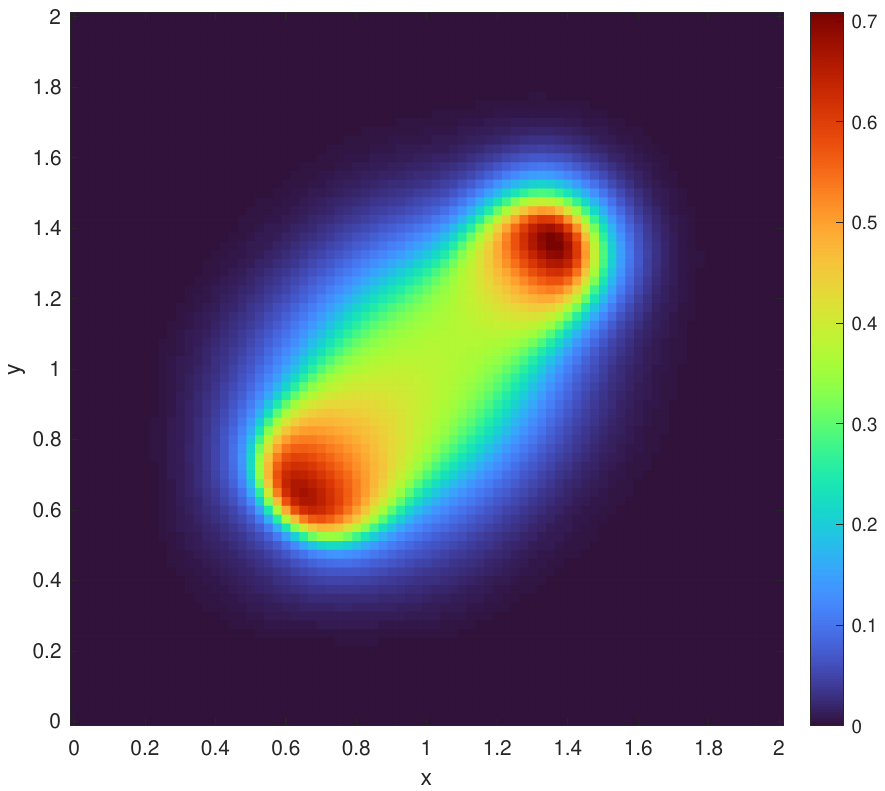}}
\subfigure[]{\includegraphics[scale=0.3]{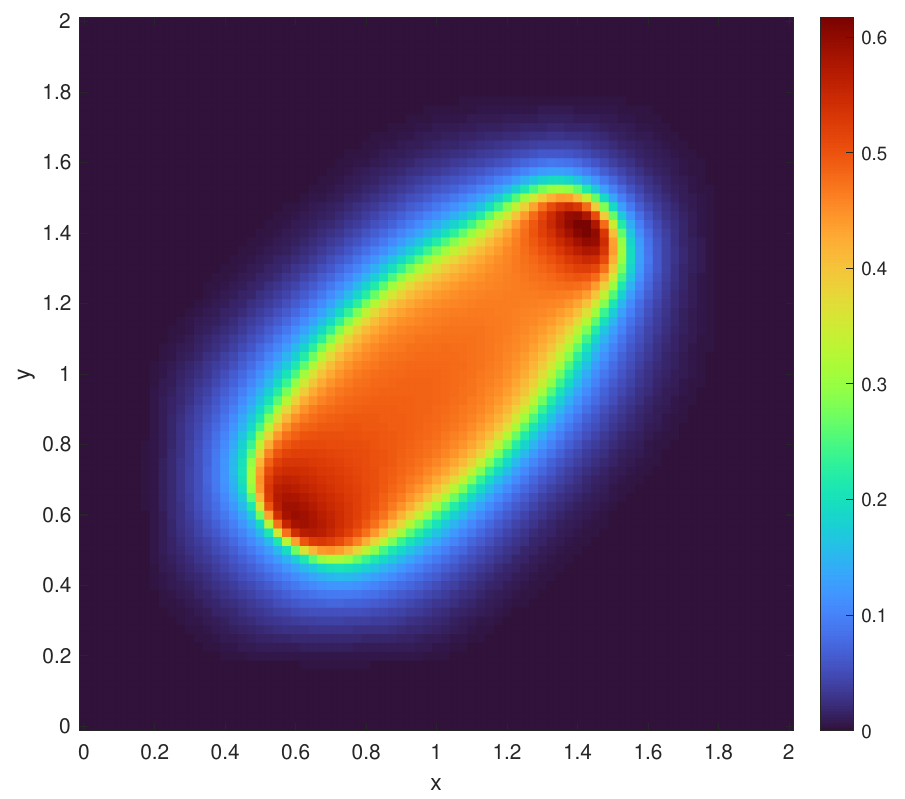}}
\caption{Test 2Da: evolution of $\rho$ (first line) and $l$ (second line) at time a),d) $t=10$, b),e) $t=90$, c),f) $t=210$ . White arrows describe the velocity field. }
\label{fig:test2}
\end{figure}

\subsubsection{Test 2Db: Merge}
In this test we simulate a scenario where two all-follower groups are initially distinct, then they merge interacting with each other. 
To this end, we consider a square domain $\Omega=\left[0, 1\right] \times \left[0, 1\right]$, a final time $T=350$, and we set initial data as follows:
\begin{equation*}
\begin{aligned}  
\rho(x,0)&= \exp\left({-\frac{(x^1-C^1_{3})^2}{2\sigma_3^2}-\frac{(x^2-C^2_{3})^2}{2\sigma_3^2}}\right)+ \exp\left(-{\frac{(x^1-C^1_{4})^2}{2\sigma_4^2}-\frac{(x^2-C^2_{4})^2}{2\sigma_4^2}}\right), \\
u(x,0)&=0,\\
l (x,0)&=0,
\end{aligned}
\end{equation*}
for any $x=(x^1,x^2) \in \Omega$. Here 
$C_{3}=(C^1_{3},C^2_{3})=(0.4,0.7)$,
$C_{4}=(C^1_{4},C^2_{4})=(0.6,0.3)$,
$\sigma_3=\sigma_4=\sqrt{0.004}$.
Initial density configuration is shown in Figure \ref{fig:test3_test4_ci}.
\begin{figure}[!t]
\centering
\includegraphics[scale=0.3]{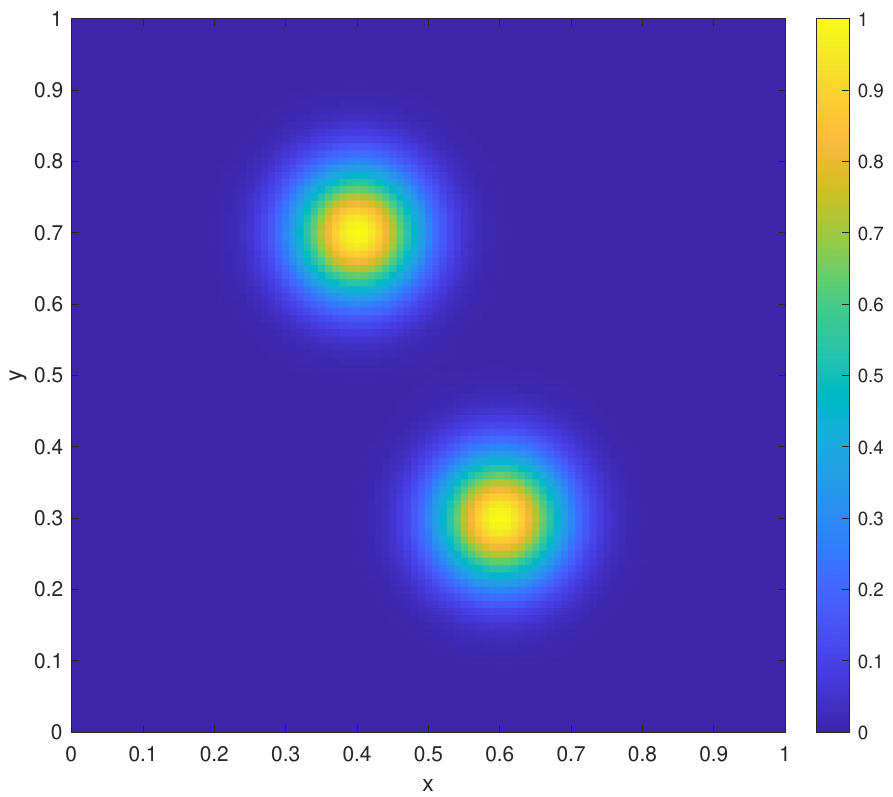}
\caption{Test 2Db: initial condition for $\rho$.}
\label{fig:test3_test4_ci}
\end{figure}

Boundary conditions are of Dirichlet type, with $\rho=u=l=0$. The discretisation steps are $\Delta x^1=\Delta x^2=0.01$ and $\Delta t= 0.2$.

Figure \ref{fig:test3} shows three snapshots of the numerical simulation. 
At the beginning, the attraction force prevails and let the two groups merge.
Meanwhile, the degree of leadership, initially equal to 0, increases and leaders appear. 
Once the two flocks merged, newly formed leaders establish a common direction of motion, heading the flock in the south-west direction.  
\begin{figure}[!t]
\centering
\subfigure[]{\includegraphics[scale=0.3]{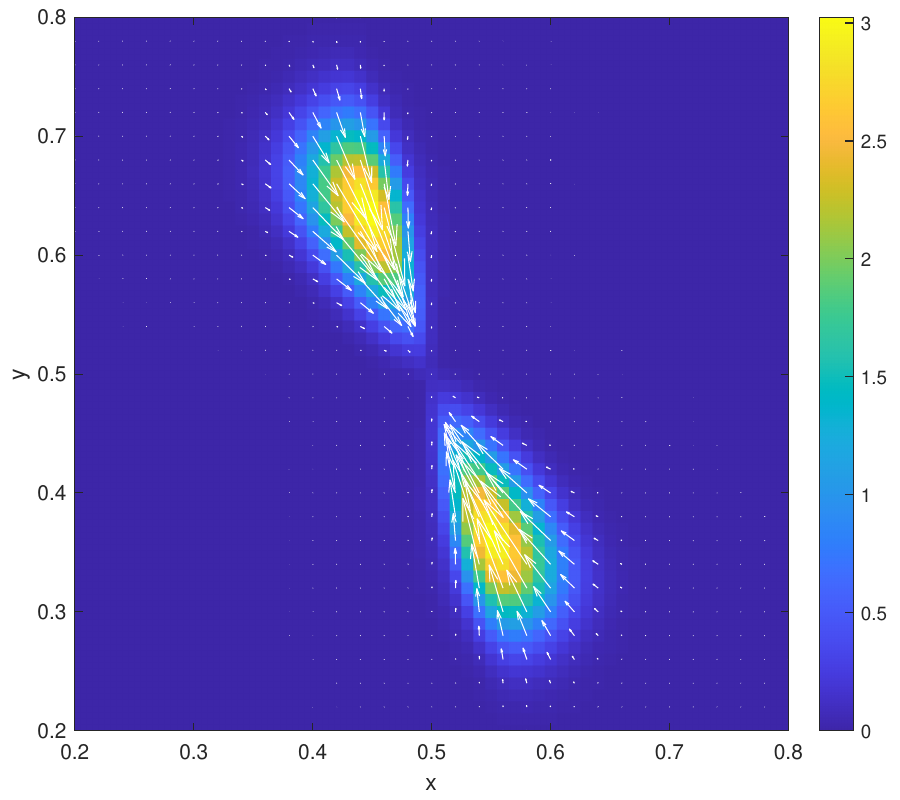}}
\subfigure[]{\includegraphics[scale=0.3]{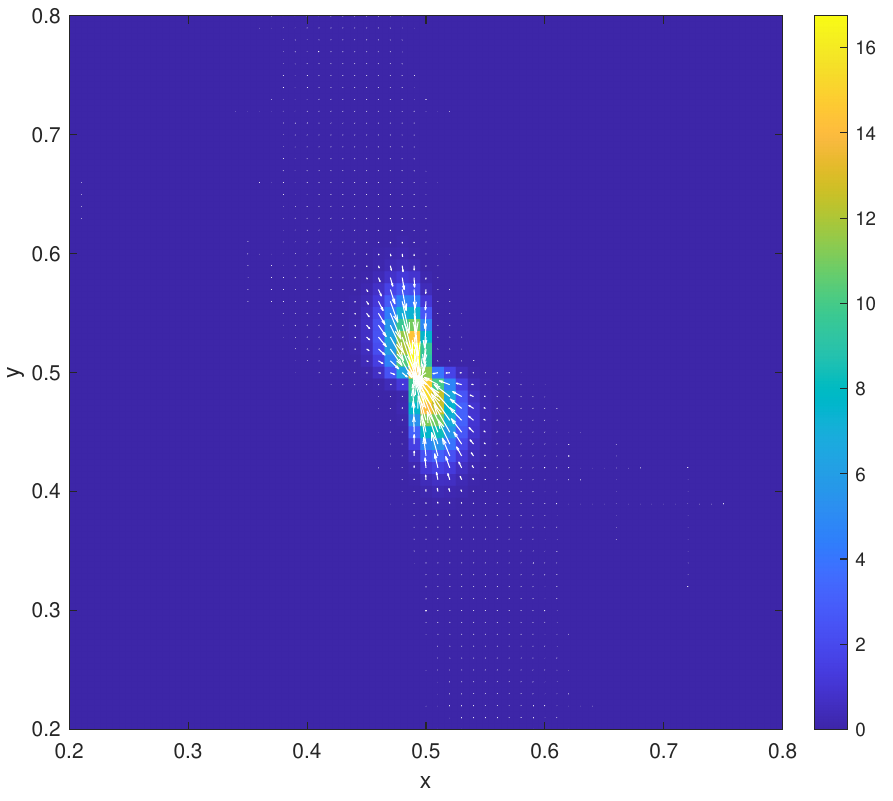}}
\subfigure[]{\includegraphics[scale=0.3]{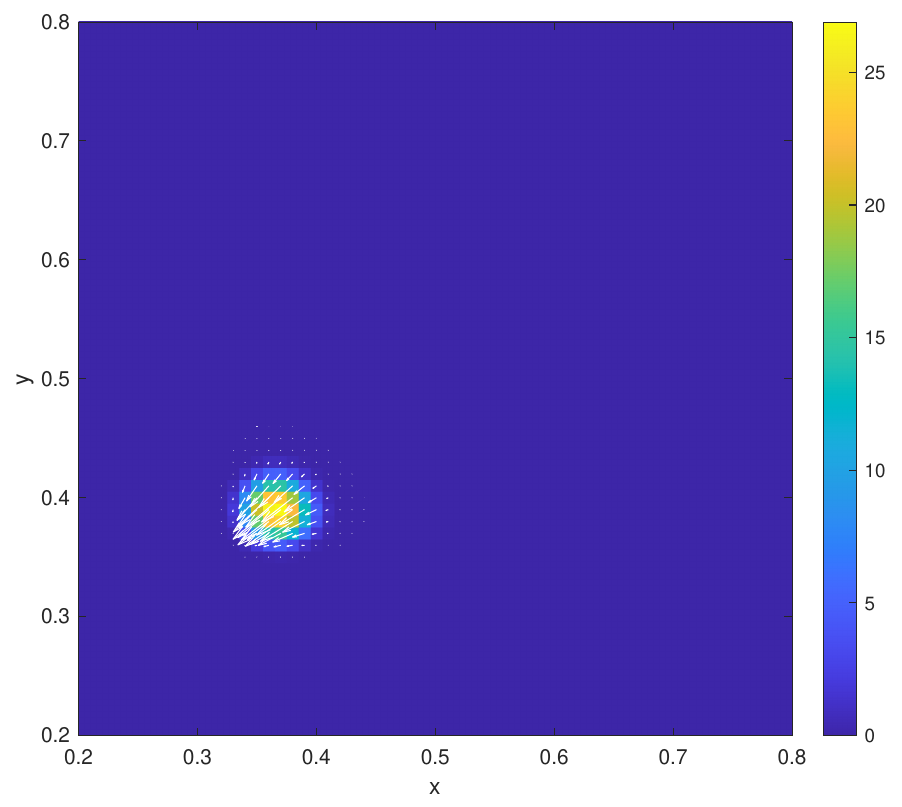}}\\
\subfigure[]{\includegraphics[scale=0.3]{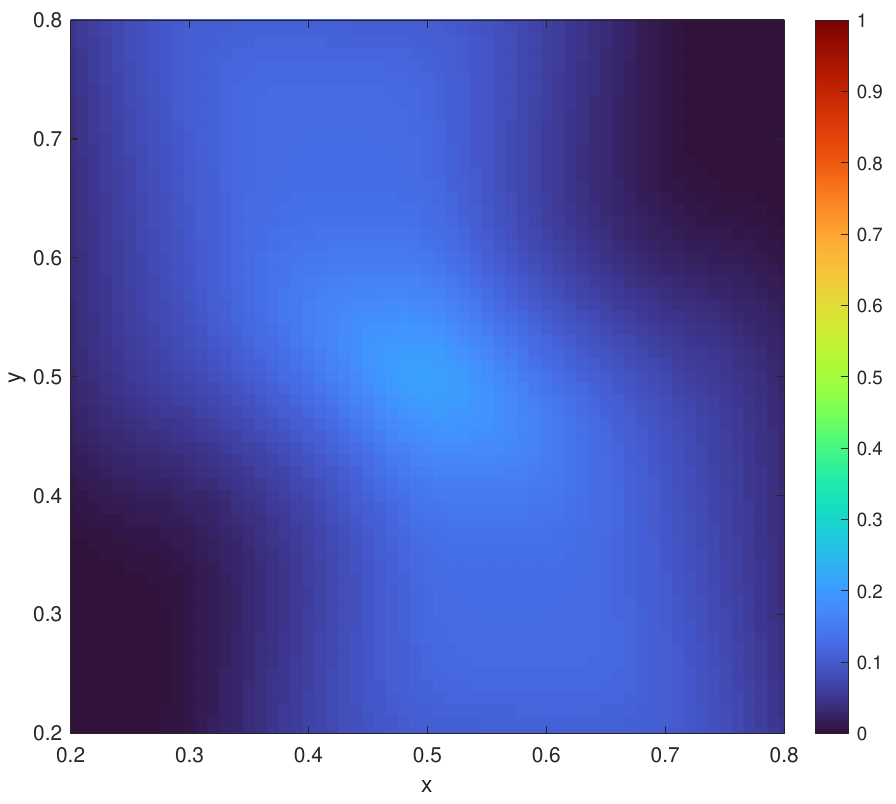}}
\subfigure[]{\includegraphics[scale=0.3]{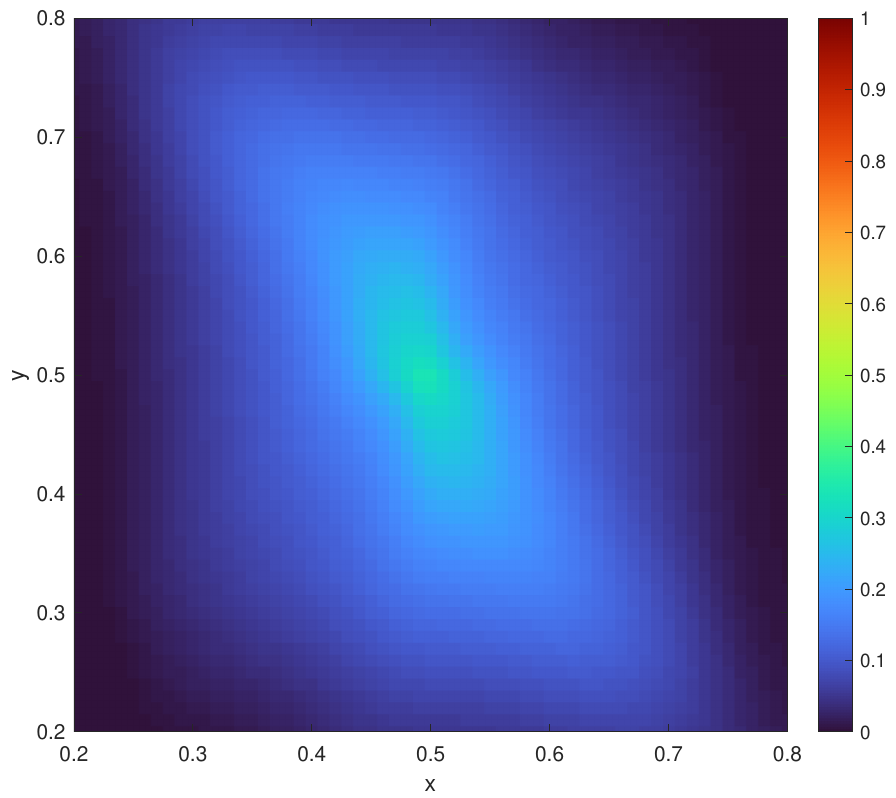}}
\subfigure[]{\includegraphics[scale=0.3]{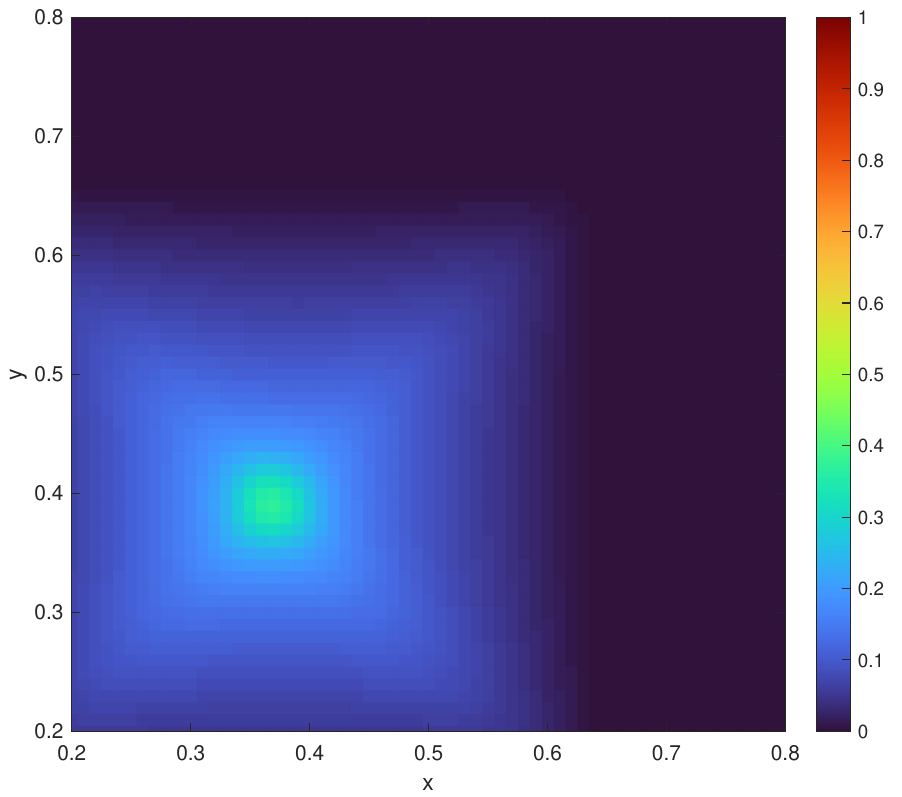}}
\caption{Test 2Db: evolution of $\rho$ (first line) and $l$ (second line) at time a),d) $t=20$, b),e) $t=40$, c),f) $t=180$ . White arrows describe the velocity field. }
\label{fig:test3}
\end{figure}

Figure \ref{fig:test3_cb0} shows the same evolution with null Dirichlet boundary conditions for $\rho$ and $u$, but assuming $l \equiv 1$ at the border. It is interesting to compare the solutions of Figures \ref{fig:test3} and \ref{fig:test3_cb0} in order to quantify the impact of boundary conditions.
It can be seen that the flock merge and find a common solution in both case, but the size of the final flock is quite different, proving that a nonnegligible effect of boundary conditions exists.
\begin{figure}[!t]
\centering
\subfigure[]{\includegraphics[scale=0.3]{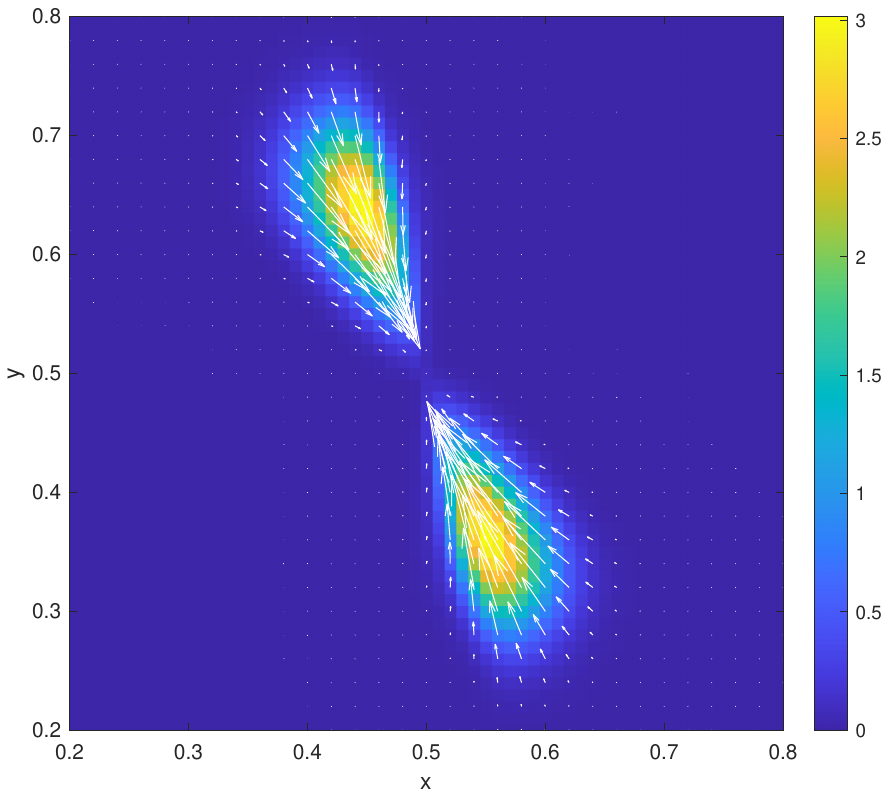}}
\subfigure[]{\includegraphics[scale=0.3]{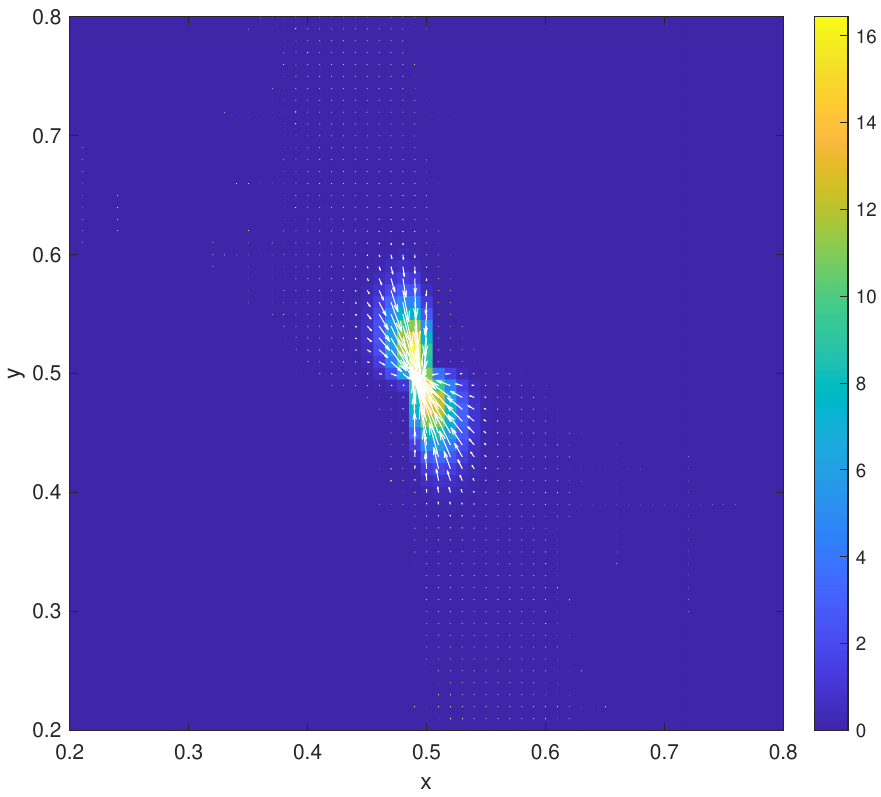}}
\subfigure[]{\includegraphics[scale=0.3]{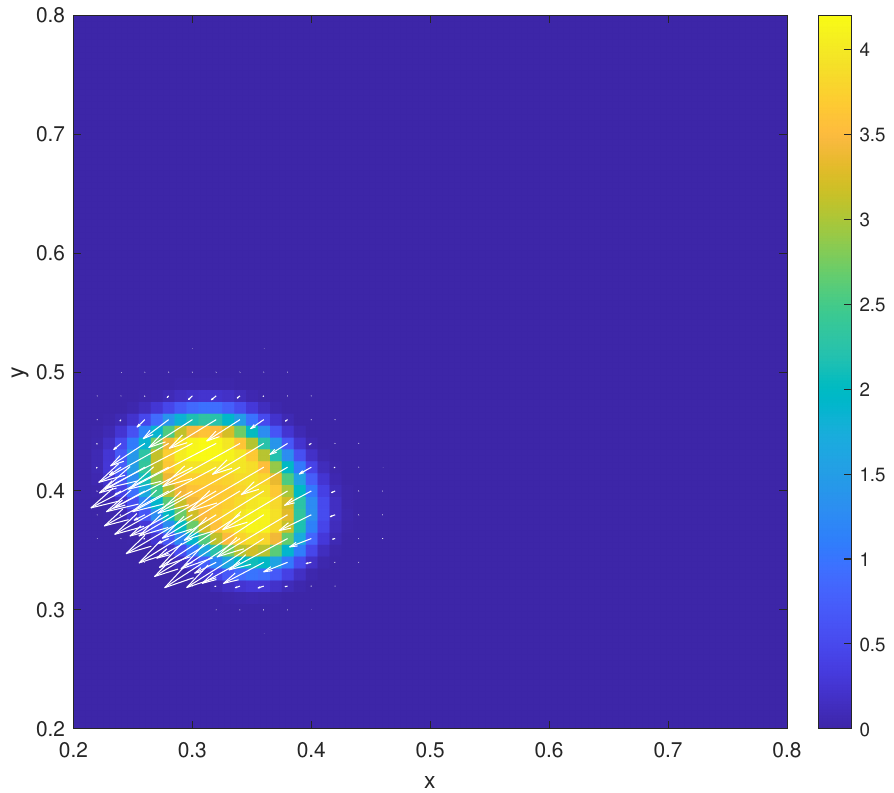}}\\
\subfigure[]{\includegraphics[scale=0.3]{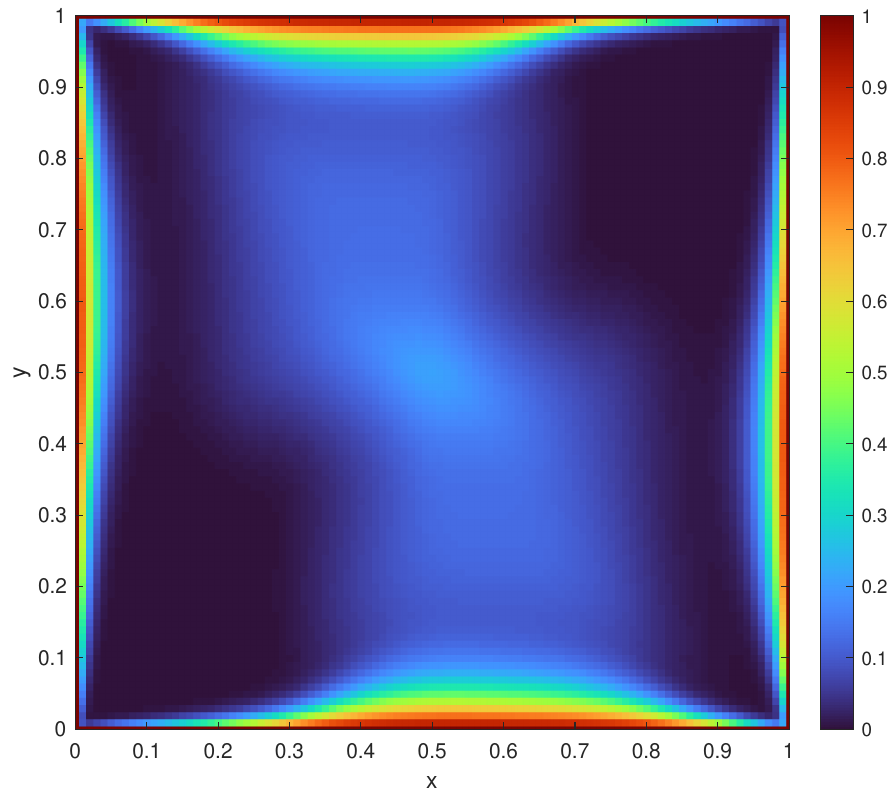}}
\subfigure[]{\includegraphics[scale=0.3]{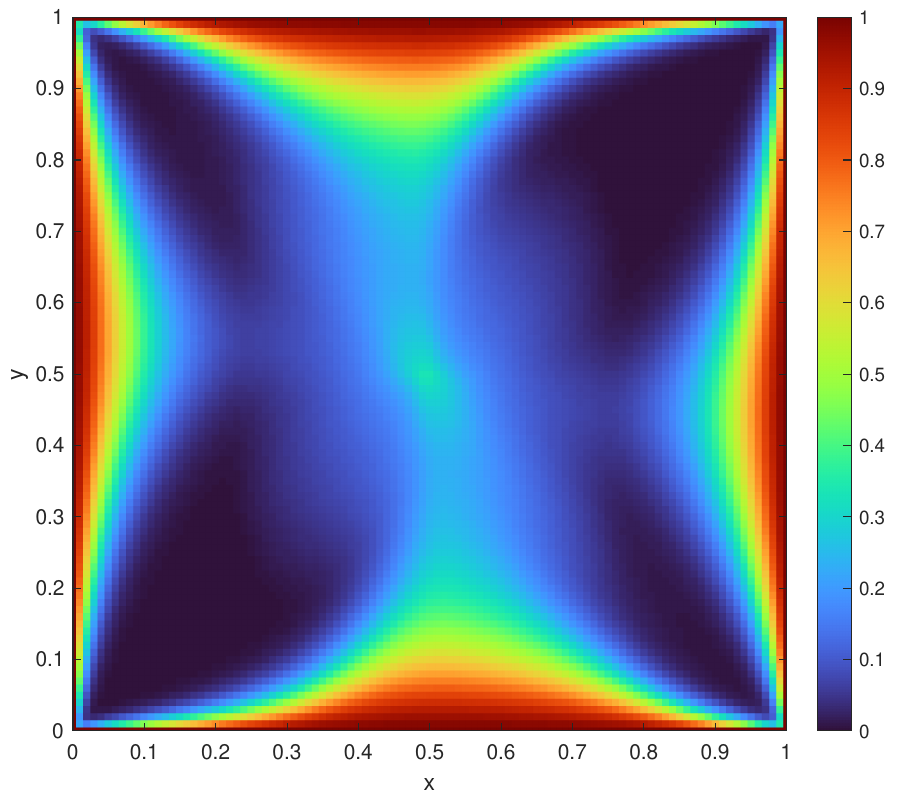}}
\subfigure[]{\includegraphics[scale=0.3]{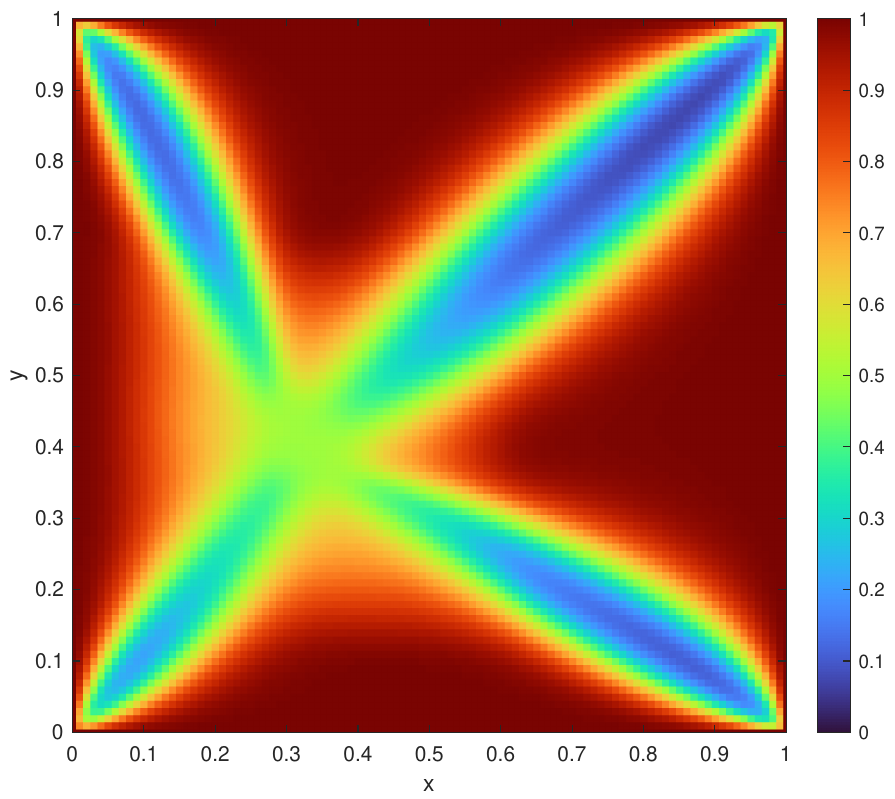}}
\caption{Test 2Db$^\prime$ (different boundary conditions for $l$): evolution of $\rho$ (first line) and $l$ (second line) at time a),d) $t=20$, b),e) $t=40$, c),f) $t=180$ . White arrows describe the velocity field. }
\label{fig:test3_cb0}
\end{figure}

\subsubsection{Test 2Dc: Ring formation}
In this test we consider the same initial and boundary conditions of Test 2Db (Figure \ref{fig:test3_test4_ci}), but we increase all the model parameters but $\alpha$ which remains equal, and $\gamma$ which is instead decreased. Doing this, we expect an accelerated dynamics of $l$ and a less impact of the attraction force.
Result is shown in Figure \ref{fig:test4}. 
We observe that the two flocks still merge but remain less compressed than before. Moreover $l$ evolves faster toward to a mixed follower-leader status, which also spreads on all the domain due to the greater value of $R$.

More interesting, a ring formation appears, characterized by a high density of agents concentrated at the boundary of the flock. This configuration is well known and typically appears in all-to-all repulsion-attraction models \cite{albi2014SIAP, kolokolnikov2011PRE}.

\begin{figure}[!t]
\centering
\subfigure[]{\includegraphics[scale=0.3]{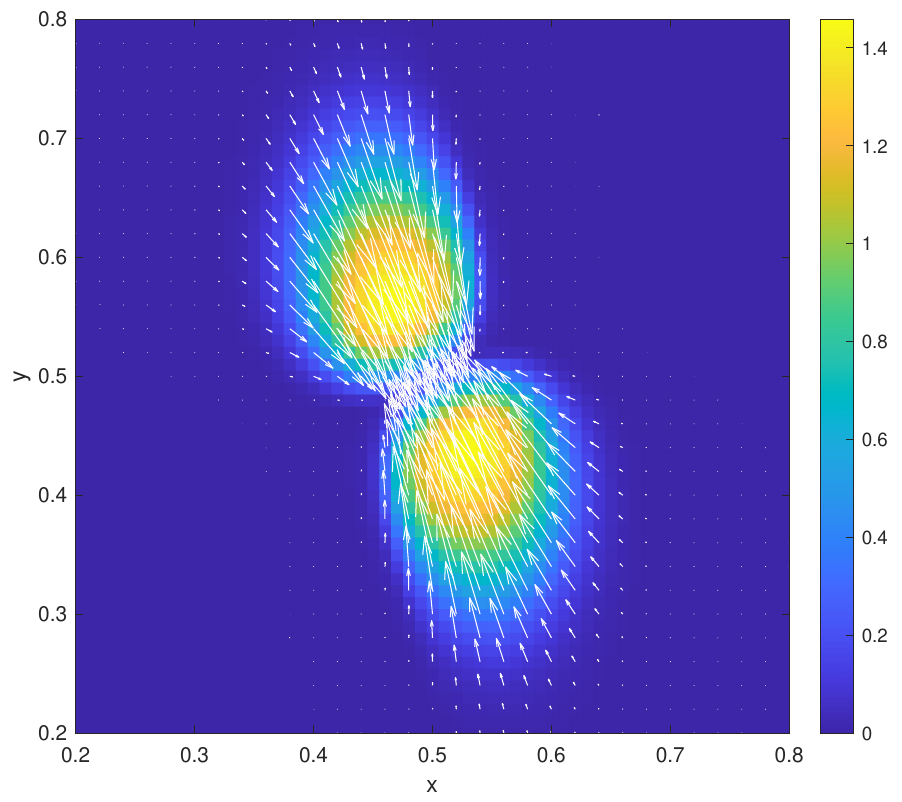}}
\subfigure[]{\includegraphics[scale=0.3]{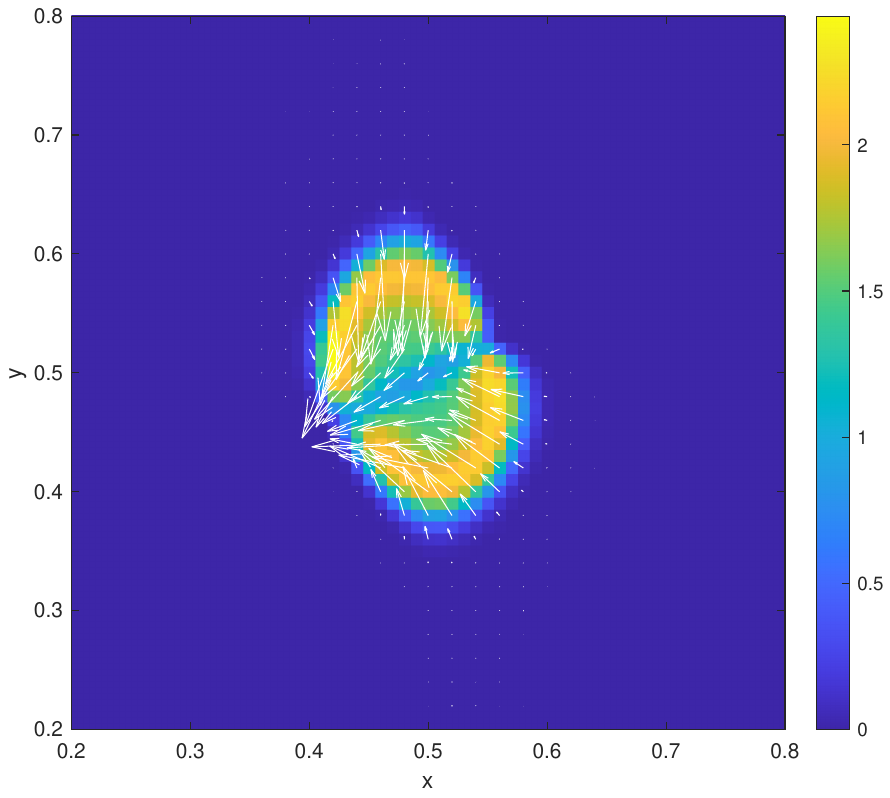}}
\subfigure[]{\includegraphics[scale=0.3]{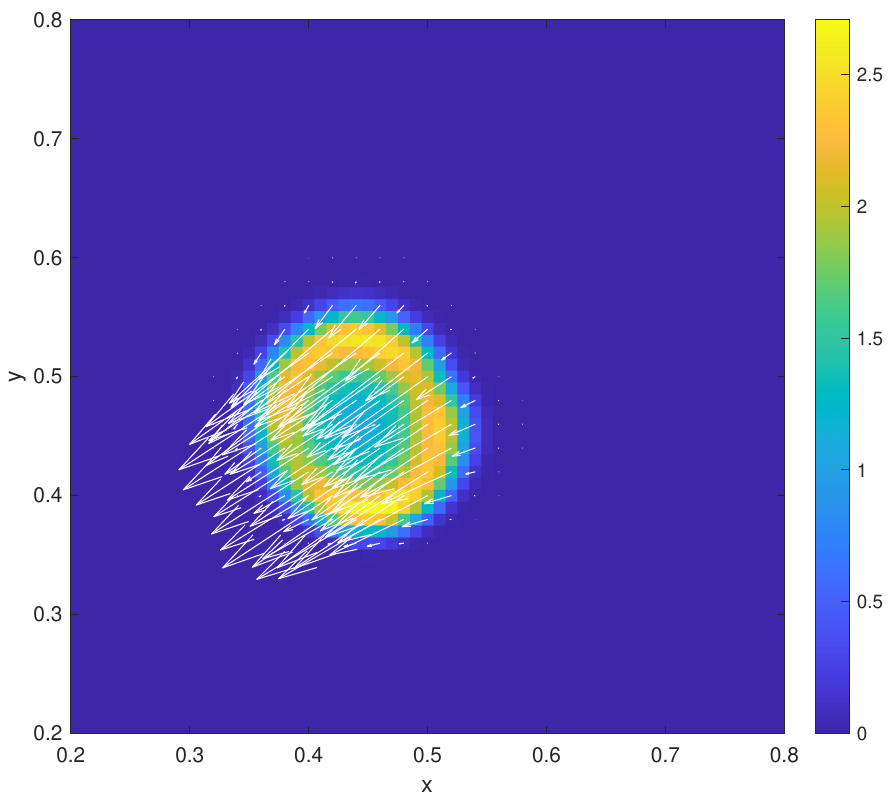}}\\
\subfigure[]{\includegraphics[scale=0.3]{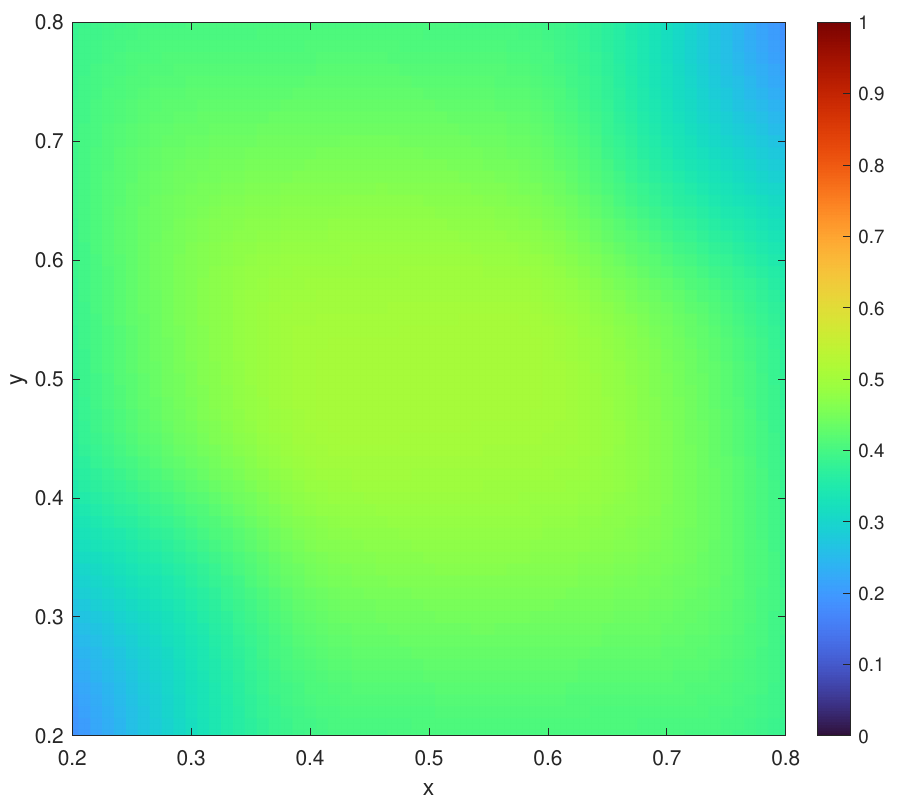}}
\subfigure[]{\includegraphics[scale=0.3]{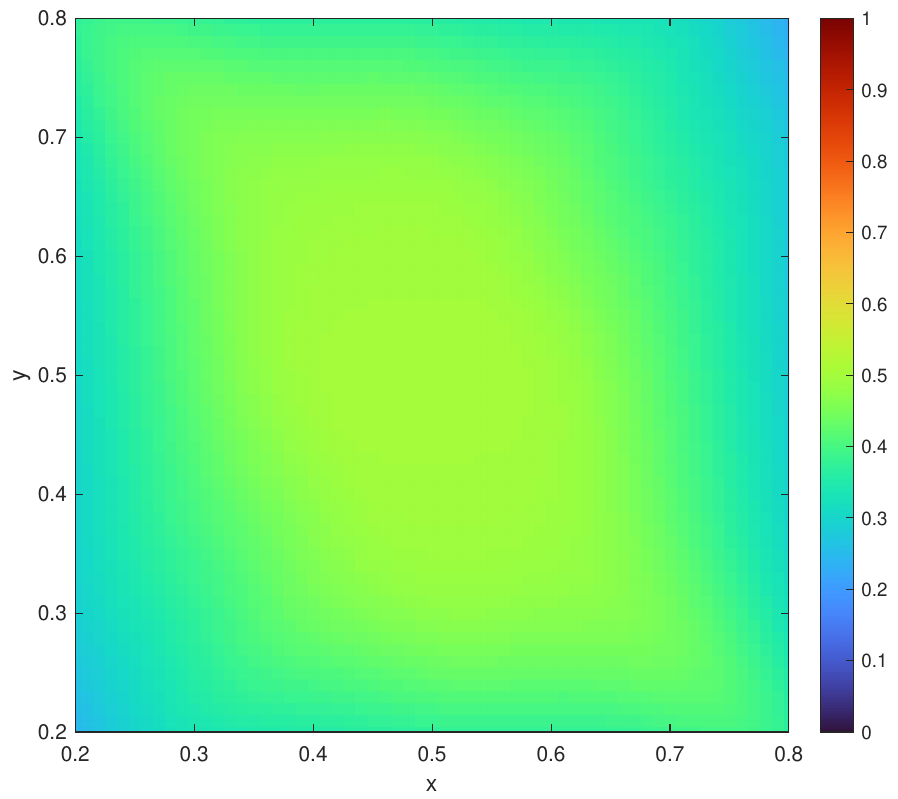}}
\subfigure[]{\includegraphics[scale=0.3]{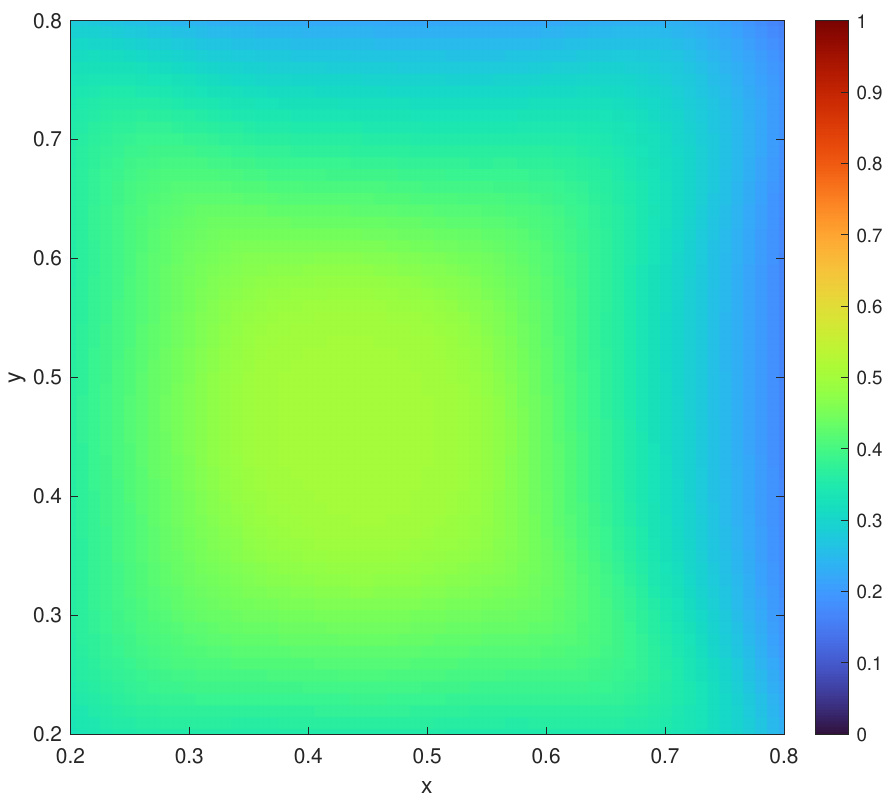}}
\subfigure[]{\includegraphics[scale=0.28]{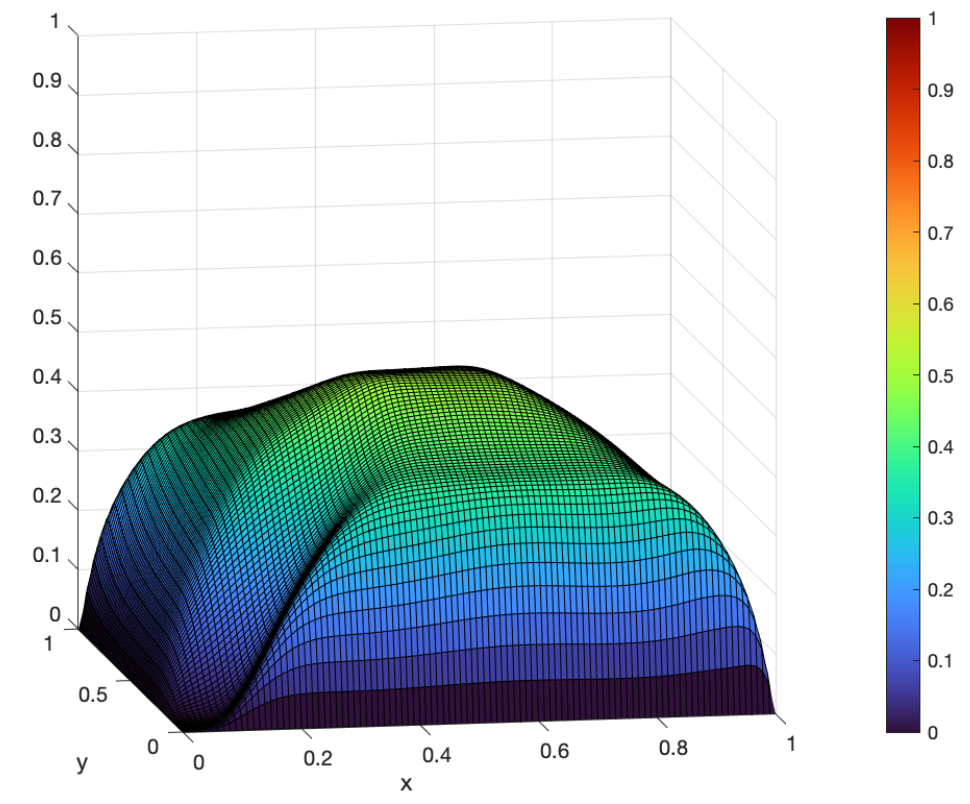}}
\subfigure[]{\includegraphics[scale=0.28]{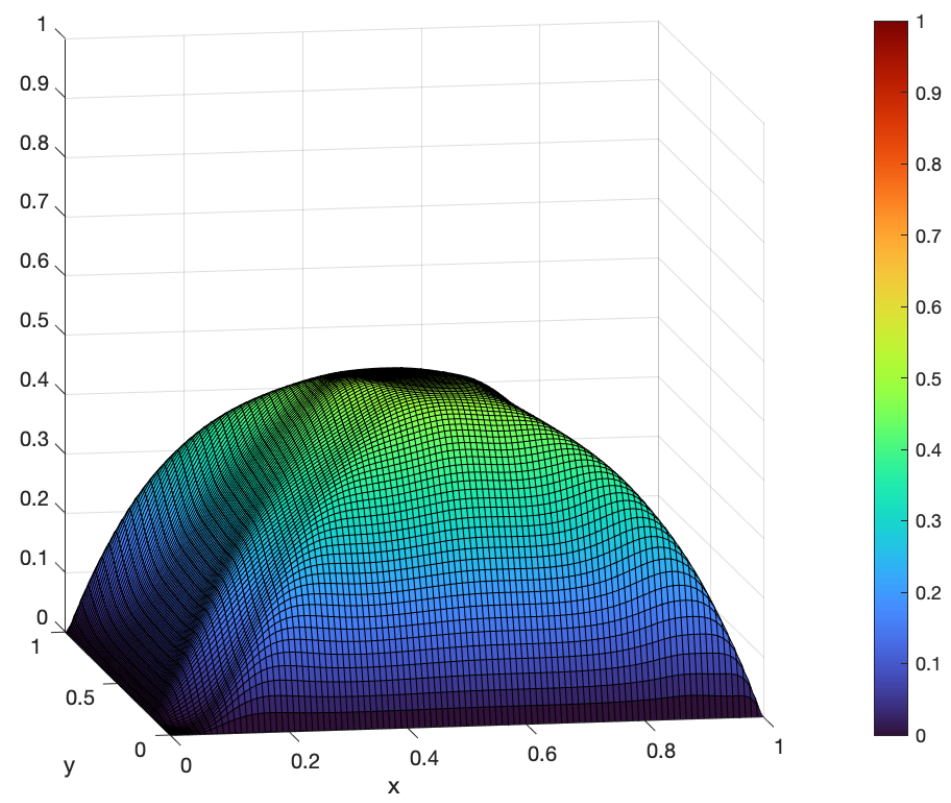}}
\subfigure[]{\includegraphics[scale=0.28]{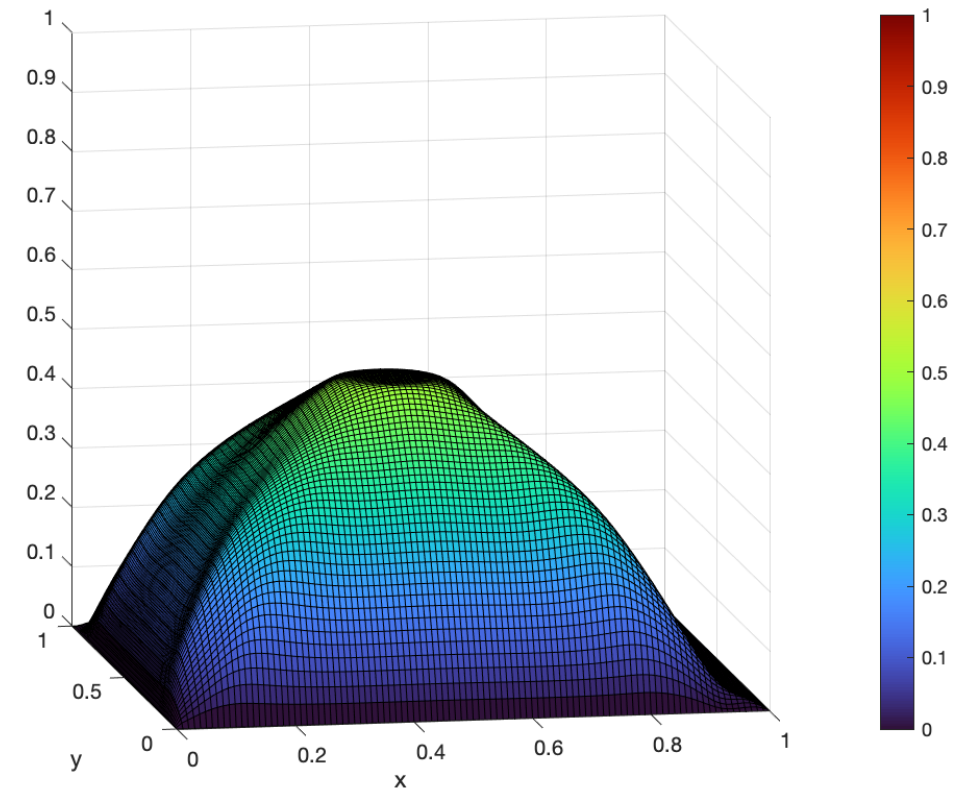}}
\caption{Test 2Dc: evolution of $\rho$ (first line) and $l$ (second and third lines) at time a),d),g) $t=20$, b),e),h) $t=48$, c),f),i) $t=100$ . White arrows describe the velocity field. 
\label{fig:test4}}
\end{figure}

\section{Conclusions}
\label{sect:conclusions}
In this paper, we have proposed a Povzner-Boltzmann-type kinetic description of swarming dynamics with transient leadership and continuous leader-follower transitions relying on a simplified version of the microscopic dynamics described in~\cite{cristiani2021JMB}. The particles of the underlying multi-agent system feature a multi-variate microscopic state consisting of position, velocity and degree of leadership. The velocity and the degree of leadership change in consequence of binary interactions according to two independent processes, which are non-local in space. The resulting dynamics are largely non-conservative, the only conserved quantity being the mass. Non-locality and non-conservativeness pose two major theoretical issues in deriving a macroscopic description of the system through a hydrodynamic limit of the kinetic equation. To circumvent them, taking inspiration from the kinetic theory of nearly elastic granular gases~\cite{fornasier2010PhD,toscani2004CHAPTER}, we consider a regime of high interaction rates and small non-conservative contributions to the interactions. The combination of these two factors produces an aggregate description which still keeps track of the non-locality of the interactions, differently from a classical hyperbolic scaling of space and time in the kinetic equation, which would localise inevitably the mean effect of the interactions. As a consequence, 
we obtain a non-local macroscopic model in the form of a system of evolution equations for relevant statistical quantities formulated on the original spatial and temporal scales of the particle interactions. This means that, in principle, our analytical description is focused on the agents of the swarm also when it comes to aggregate dynamics, differently from the classical case of gas molecules, which instead disappear conceptually when the gas is described at the fluid scale in the hydrodynamic regime. This technique can be fruitfully adapted and extended to upscale consistently also different and possibly richer particle dynamics.
As future perspective, we think that the leader/follower transition would be worth of further investigation in order to find the dynamics which better match the biological mechanisms, currently largely unexplored in both animal groups and human crowds.

\appendix

\section{Strong form of the kinetic operators}
\label{app:strong}
The strong form of the collisional operator $Q^v(f,f)$ may be obtained by means of a change of variables in~\eqref{def:Qv_weak}, taking advantage of the inverse binary collision:
\begin{align*}
    \begin{aligned}[b]
        \pr{v} &= -\frac{1}{1-(1-\lambda)\beta}\left[v+\alpha\frac{x-x_\ast}{\abs{x-x_\ast}^2}+(1-\lambda)\bigl(\beta\pr{v}_\ast+\gamma(x_\ast-x)\bigr)\right] \\
        \pr{v}_\ast &= -\frac{1}{1-(1-\lambda_\ast)\beta}\left[v_\ast+\alpha\frac{x_\ast-x}{\abs{x_\ast-x}^2}+(1-\lambda_\ast)\bigl(\beta\pr{v}+\gamma(x-x_\ast)\bigr)\right],
    \end{aligned}
\end{align*}
where $\pr{v}$, $\pr{v}_\ast$ denote the pre-interaction velocities giving rise to the post-interaction velocities $v$, $v_\ast$. These are nothing but the inverse rules of~\eqref{eq:micro_v}-\eqref{eq:micro_v_symm}. Then, after some algebraic manipulations, one gets
\begin{equation}
    Q^v(f,f)(x,v,\lambda,t)=\int_0^1\int_{\R^{2d}}B(\abs{x_\ast-x})\left(\frac{f(x,\pr{v},\lambda,t)f(x_\ast,\pr{v}_\ast,\lambda_\ast,t)}{\abs{1-\bigl(2-(\lambda+\lambda_\ast)\bigr)\beta}}
        -ff_\ast\right)\,dx_\ast\,dv_\ast\,d\lambda_\ast.
    \label{def:Qv}
\end{equation}

Analogously, the strong form of the collisional operator $Q^\lambda(f,f)$ may be obtained by means of a change of variables in~\eqref{def:Ql_weak}, taking advantage of the inverse rule
$$ \pr{\lambda}=1-\lambda, \qquad \pr{\lambda}_\ast=1-\lambda_\ast, $$
where $\pr{\lambda}$, $\pr{\lambda_\ast}$ are the pre-interaction degrees of leadership giving rise to the post-interaction ones $\lambda$, $\lambda_\ast$. Also in this case, these are the inverse rules of~\eqref{eq:micro_l}-\eqref{eq:micro_l_symm}. The result is
\begin{equation}
    Q^\lambda(f,f)(x,v,\lambda,t)=\int_0^1\int_{\R^{2d}}B(\abs{x_\ast-x})\bigl(f(x,v,\pr{\lambda},t)f(x_\ast,v_\ast,\pr{\lambda}_\ast,t)
        -ff_\ast\bigr)\,dx_\ast\,dv_\ast\,d\lambda_\ast.
    \label{def:Ql}
\end{equation}

To obtain the strong form of the operators $\bar{Q}^v(f,f)$ and $\cR^v(f,f)$, cf.~\eqref{eq:barQv_weak},~\eqref{eq:Rv_weak}, we proceed similarly. The inverse collision to~\eqref{eq:micro_v_cons} reads
$$ \pr{\bar{v}}=\frac{1-p}{1-2p}v_\ast-\frac{p}{1-2p}v, \qquad
    \pr{\bar{v}_\ast}=\frac{1-p}{1-2p}v-\frac{p}{1-2p}v_\ast $$
with $p:=\left(1-\frac{\lambda+\lambda_\ast}{2}\right)\beta_0\delta$, where $\pr{\bar{v}}$, $\pr{\bar{v}}_\ast$ denote the pre-interaction velocities giving rise to the post-interaction velocities $v$, $v_\ast$. Therefore, changing the variables in~\eqref{eq:barQv_weak} produces
\begin{equation}
    \bar{Q}^v(f,f)(x,v,\lambda,t)=\int_0^1\int_{\R^{2d}}B(\abs{x_\ast-x})
        \left(\frac{f(x,\pr{\bar{v}},\lambda,t)f(x_\ast,\pr{\bar{v}_\ast},\lambda_\ast,t)}{\abs{1-\bigl(2-(\lambda+\lambda_\ast)\bigr)\beta_0\delta}}
            -ff_\ast\right)dx_\ast\,dv_\ast\,d\lambda_\ast.
    \label{eq:barQv}
\end{equation}
The same change of variables in~\eqref{eq:Rv_weak} yields instead
\begin{align*}
    & \ave{\cR^v(f,f),\,\varphi} \\
    &= \int_{[0,\,1]^2}\int_{\R^{3d}}\Biggl[\frac{B(\abs{x_\ast-x})}{\abs{1-\bigl(2-(\lambda+\lambda_\ast)\bigr)\beta_0\delta}}
        \left(\beta_0\frac{\lambda_\ast-\lambda}{2\bigl(1-(2-(\lambda+\lambda_\ast))\beta_0\delta\bigr)}(v-v_\ast)\right. \\
    & \qquad \left.+\alpha_0\frac{x-x_\ast}{\abs{x-x_\ast}^2}+\gamma_0(1-\lambda)(x_\ast-x)\right)\cdot\nabla_v\varphi(v,\lambda)
        f(x,\pr{\bar{v}},\lambda,t)f(x_\ast,\pr{\bar{v}_\ast},\lambda_\ast,t)\Biggr]\,dx_\ast\,dv\,dv_\ast\,d\lambda\,d\lambda_\ast \\
    &=-\int_0^1\int_{\R^d}\varphi(v,\lambda)\ddiv_v\left[\int_0^1\int_{\R^{2d}}\frac{B(\abs{x_\ast-x})}{\abs{1-\bigl(2-(\lambda+\lambda_\ast)\bigr)\beta_0\delta}}
        \left(\beta_0\frac{\lambda_\ast-\lambda}{2\bigl(1-(2-(\lambda+\lambda_\ast))\beta_0\delta\bigr)}(v-v_\ast)\right.\right. \\
    & \qquad \left.\left.+\alpha_0\frac{x-x_\ast}{\abs{x-x_\ast}^2}+\gamma_0(1-\lambda)(x_\ast-x)\right)
        f(x,\pr{\bar{v}},\lambda,t)f(x_\ast,\pr{\bar{v}_\ast},\lambda_\ast,t)\,dx_\ast\,dv_\ast\,d\lambda_\ast\right]dv\,d\lambda,
\end{align*}
where the last passage follows from the divergence theorem, assuming further for convenience that $\varphi$ is compactly supported in $\R^d\times [0,\,1]$. Therefore:
\begin{multline*}
    \cR^v(f,f)(x,v,\lambda,t)= \\
    -\ddiv_v\Biggl[\int_0^1\int_{\R^{2d}}\frac{B(\abs{x_\ast-x})}{\abs{1-\bigl(2-(\lambda+\lambda_\ast)\bigr)\beta_0\delta}}\Biggl(\beta_0\frac{\lambda_\ast-\lambda}{2\bigl(1-(2-(\lambda+\lambda_\ast))\beta_0\delta\bigr)}(v-v_\ast) \\
    +\alpha_0\frac{x-x_\ast}{\abs{x-x_\ast}^2}+\gamma_0(1-\lambda)(x_\ast-x)\Biggr)f(x,\pr{\bar{v}},\lambda,t)f(x_\ast,\pr{\bar{v}_\ast},\lambda_\ast,t)\,dx_\ast\,dv_\ast\,d\lambda_\ast\Biggr].
\end{multline*}

The strong forms of the operators $\bar{Q}^\lambda(f,f)$ and $\cR^\lambda(f,f)$, cf.~\eqref{eq:barQl_weak},~\eqref{eq:Rl_weak}, can be obtained analogously starting from the inverse rules to~\eqref{eq:micro_lambda_cons}. In particular, these are
$$ \pr{\bar{\lambda}}=\frac{1-\nu}{1-2\nu}\lambda-\frac{\nu}{1-2\nu}\lambda_\ast, \qquad
    \pr{\bar{\lambda}_\ast}=\frac{1-\nu}{1-2\nu}\lambda_\ast-\frac{\nu}{1-2\nu}\lambda, $$
so that
$$ \bar{Q}^\lambda(f,f)(x,v,\lambda,t)=\int_0^1\int_{\R^{2d}}B(\abs{x_\ast-x})\left(\frac{f(x,v,\pr{\bar{\lambda}},t)f(x_\ast,v_\ast,\pr{\bar{\lambda}_\ast},t)}{\abs{1-2\nu}}
    -ff_\ast\right)dx_\ast\,dv_\ast\,d\lambda_\ast $$
and
\begin{multline}
    \cR^\lambda(f,f)(x,v,\lambda,t)= \\
    -\partial_\lambda\Biggl[\int_0^1\int_{\R^{2d}}\frac{B(\abs{x_\ast-x})}{\abs{1-2\nu}}\left(1-\frac{\nu}{1-2\nu}\lambda+\frac{3\nu-2}{1-2\nu}\lambda_\ast\right) \\  
    \times f(x,v,\pr{\bar{\lambda}},t)f(x_\ast,v_\ast,\pr{\bar{\lambda}_\ast},t)\,dx_\ast\,dv_\ast\,d\lambda_\ast\Biggr].
    \label{eq:Rl}
\end{multline}

\paragraph*{Conflict of Interest}
The authors declare that no possible conflict of interests arises from the content of this paper.

\paragraph*{Data Availability}
The few data related to the numerical computations will be made available on reasonable request.

\paragraph*{Acknowledgement}
E.C.\ would like to thank the Italian Ministry of University and Research (MUR) to support this research with funds coming from PRIN Project 2022 PNRR (No.\ 2022XJ9SX, entitled ``Heterogeneity on the road - Modeling, analysis, control'') and from PRIN Project 2022 (No.\ 2022238YY5, entitled ``Optimal control problems: analysis, approximation and applications'').

This study was carried out within the Spoke 7 of the MOST -- Sustainable Mobility National Research Center and received funding from the European Union Next-Generation EU (PIANO NAZIONALE DI RIPRESA E RESILIENZA (PNRR) – MISSIONE 4 COMPONENTE 2, INVESTIMENTO 1.4 – D.D. 1033 17/06/2022, CN00000023). This manuscript reflects only the authors' views and opinions. Neither the European Union nor the European Commission can be considered responsible for them.

E.C.\ and M.M. are funded by INdAM - GNCS Project, CUP  E53C23001670001, entitled ``Numerical modeling and high-performance computing approaches for multiscale models of Complex Systems''.

E.C.\ and M.M.\ are members of GNCS-INdAM research group. 

N.L.\ and A.T.\ are members of GNFM-INdAM research group.

\bibliographystyle{plain}
\bibliography{CeLnMmTa-swarming}

\begin{thebibliography}{10}

\bibitem{Peurichard2021}
P.~Aceves-Sanchez, P.~Degond, E.~E. Keaveny, A.~Manhart, S.~Merino-Aceituno,
  and D.~Peurichard.
\newblock Large-scale dynamics of self-propelled particles moving through
  obstacles: Model derivation and pattern formation.
\newblock {\em Bull. Math. Biol.}, 82(10), 2020.

\bibitem{albi2022AMO}
G.~Albi, S.~Almi, M.~Morandotti, and F.~Solombrino.
\newblock Mean-field selective optimal control via transient leadership.
\newblock {\em Appl. Math. Optim.}, 85(2):22, 2022.

\bibitem{albi2014SIAP}
G.~Albi, D.~Balagu\'e, J.~A. Carrillo, and J.~von Brecht.
\newblock Stability analysis of flock and mill rings for second order models in
  swarming.
\newblock {\em SIAM J. Appl. Math.}, 74(3):794--818, 2014.

\bibitem{albi2016SIAP}
G.~Albi, M.~Bongini, E.~Cristiani, and D.~Kalise.
\newblock Invisible control of self-organizing agents leaving unknown
  environments.
\newblock {\em SIAM J. Appl. Math.}, 76(4):1683--1710, 2016.

\bibitem{albi2019M3AS}
G.~Albi, M.~Bongini, F.~Rossi, and F.~Solombrino.
\newblock Leader formation with mean-field birth and death models.
\newblock {\em Math. Models Methods Appl. Sci.}, 29(04):633--679, 2019.

\bibitem{albi2023papercugino}
G.~Albi and F.~Ferrarese.
\newblock Kinetic description of swarming dynamics with topological interaction
  and emergent leaders.
\newblock arXiv:2307.12044, 2023.

\bibitem{albi2013AML}
G.~Albi and L.~Pareschi.
\newblock Modeling of self-organized systems interacting with a few
  individuals: from microscopic to macroscopic dynamics.
\newblock {\em Appl. Math. Lett.}, 26(4):397--401, 2013.

\bibitem{attanasi2014}
A.~Attanasi, A.~Cavagna, L.~Del~Castello, I.~Giardina, T.~S. Grigera,
  A.~Jeli\'c, S.~Melillo, L.~Parisi, O.~Pohl, E.~Shen, and M.~Viale.
\newblock Information transfer and behavioural inertia in starling flocks.
\newblock {\em Nat. Phys.}, 10:691--696, 2014.

\bibitem{attanasi2015}
A.~Attanasi, A.~Cavagna, L.~Del~Castello, I.~Giardina, A.~Jelic, S.~Melillo,
  L.~Parisi, O.~Pohl, E.~Shen, and M.~Viale.
\newblock Emergence of collective changes in travel direction of starling
  flocks from individual birds' fluctuations.
\newblock {\em J. R. Soc. Interface}, 12:20150319, 2015.

\bibitem{ballerini2008b}
M.~Ballerini, N.~Cabibbo, R.~Candelier, A.~Cavagna, E.~Cisbani, I.~Giardina,
  V.~Lecomte, A.~Orlandi, G.~Parisi, A.~Procaccini, M.~Viale, and
  V.~Zdravkovic.
\newblock Empirical investigation of starling flocks: {A} benchmark study in
  collective animal behaviour.
\newblock {\em Anim. Behav.}, 76(1):201--215, 2008.

\bibitem{ballerini2008a}
M.~Ballerini, N.~Cabibbo, R.~Candelier, A.~Cavagna, E.~Cisbani, I.~Giardina,
  V.~Lecomte, A.~Orlandi, G.~Parisi, A.~Procaccini, M.~Viale, and
  V.~Zdravkovic.
\newblock Interaction ruling animal collective behavior depends on topological
  rather than metric distance: {E}vidence from a field study.
\newblock {\em P. Natl. Acad. Sci. USA}, 105(4):1232--1237, 2008.

\bibitem{caglioti2020}
D.~Benedetto, P.~Butt\`a, and E.~Caglioti.
\newblock Some aspects of the inertial spin model for flocks and related
  kinetic equations.
\newblock {\em Math. Models Methods Appl. Sci.}, 30(10):1987--2022, 2020.

\bibitem{bernardi2021}
S.~Bernardi, R.~Eftimie, and K.~J. Painter.
\newblock Leadership through influence: what mechanisms allow leaders to steer
  a swarm?
\newblock {\em Bull. Math. Biol.}, 83:69, 2021.

\bibitem{BolleyM3AS2011}
F.~Bolley, J.~A. Ca\~nizo, and J.~A. Carrillo.
\newblock Stochastic mean-field limit: non-{L}ipschitz forces and swarming.
\newblock {\em Math. Models Methods Appl. Sci.}, 21(11):2179--2210, 2011.

\bibitem{bongini2014NHM}
M.~Bongini and M.~Fornasier.
\newblock Sparse stabilization of dynamical systems driven by attraction and
  avoidance forces.
\newblock {\em Netw. Heterog. Media}, 9(1):1--31, 2014.

\bibitem{borzi2015M3AS}
A.~Borzi and S.~Wongkaew.
\newblock Modeling and control through leadership of a refined flocking system.
\newblock {\em Math. Models Methods Appl. Sci.}, 25(02):255--282, 2015.

\bibitem{MihaM3AS2020}
M.~Bostan and J.~A. Carrillo.
\newblock Fluid models with phase transition for kinetic equations in swarming.
\newblock {\em Math. Models Methods Appl. Sci.}, 30(10):2023--2065, 2020.

\bibitem{butail2019detecting}
S.~Butail and M.~Porfiri.
\newblock Detecting switching leadership in collective motion.
\newblock {\em Chaos}, 29(1):011102, 2019.

\bibitem{CarrilloM3AS2011}
J.~A. Ca\~{n}izo, J.~A. Carrillo, and J.~Rosado.
\newblock A well-posedness theory in measures for some kinetic models of
  collective motion.
\newblock {\em Math. Models Methods Appl. Sci.}, 21(03):515--539, 2011.

\bibitem{Carrillo2014}
J.~A. Carrillo, Y.-P. Choi, and M.~Hauray.
\newblock The derivation of swarming models: mean-field limit and {W}asserstein
  distances.
\newblock In A.~Muntean and F.~Toschi, editors, {\em Collective Dynamics from
  Bacteria to Crowds: An Excursion Through Modeling, Analysis and Simulation},
  pages 1--46. Springer Vienna, Vienna, 2014.

\bibitem{CarrilloJEMS2019}
J.~A. Carrillo, Y.-P. Choi, M.~Hauray, and S.~Salem.
\newblock Mean-field limit for collective behavior models with sharp
  sensitivity regions.
\newblock {\em J. Eur. Math. Soc.}, 21:121–161, 2019.

\bibitem{Carrillo2017}
J.~A. Carrillo, Y.-P. Choi, and S.~P. Perez.
\newblock A review on attractive--repulsive hydrodynamics for consensus in
  collective behavior.
\newblock In N.~Bellomo, P.~Degond, and E.~Tadmor, editors, {\em Active
  Particles, Volume 1: Advances in Theory, Models, and Applications}, pages
  259--298. Springer International Publishing, Cham, 2017.

\bibitem{CarrilloKRM2009}
J.~A. Carrillo, M.~R. D’Orsogna, and V.~Panferov.
\newblock Double milling in self-propelled swarms from kinetic theory.
\newblock {\em Kinet. Relat. Models}, 2(2):363--378, 2009.

\bibitem{carrillo2010SIMA}
J.~A. Carrillo, M.~Fornasier, J.~Rosado, and G.~Toscani.
\newblock Asymptotic flocking dynamics for the kinetic {C}ucker-{S}male model.
\newblock {\em SIAM J. Math. Anal.}, 42(1):218--236, 2010.

\bibitem{Carrillo2010}
J.~A. Carrillo, M.~Fornasier, G.~Toscani, and F.~Vecil.
\newblock Particle, kinetic, and hydrodynamic models of swarming.
\newblock In G.~Naldi, L.~Pareschi, and G.~Toscani, editors, {\em Mathematical
  Modeling of Collective Behavior in Socio-Economic and Life Sciences}, pages
  297--336. Birkh{\"a}user Boston, Boston, 2010.

\bibitem{CarrilloM3AS}
J.~A. Carrillo, A.~Wr\'{o}blewska-Kami\'{n}ska, and E.~Zatorska.
\newblock On long-time asymptotics for viscous hydrodynamic models of
  collective behavior with damping and nonlocal interactions.
\newblock {\em Math. Models Methods Appl. Sci.}, 29(01):31--63, 2019.

\bibitem{cavagna2010}
A.~Cavagna, A.~Cimarelli, I.~Giardina, G.~Parisi, R.~Santagati, F.~Stefanini,
  and R.~Tavarone.
\newblock From empirical data to inter-individual interactions: unveiling the
  rules of collective animal behavior.
\newblock {\em Math. Models Methods Appl. Sci.}, 20:1491--1510, 2010.

\bibitem{cavagna2022N}
A.~Cavagna, A.~Culla, X.~Feng, I.~Giardina, T.~S. Grigera, W.~Kion-Crosby,
  S.~Melillo, G.~Pisegna, L.~Postiglione, and P.~Villegas.
\newblock Marginal speed confinement resolves the conflict between correlation
  and control in collective behaviour.
\newblock {\em Nat. Commun.}, 13(1):2315, 2022.

\bibitem{cavagna2015}
A.~Cavagna, L.~Del~Castello, I.~Giardina, T.~Grigera, A.~Jelic, S.~Melillo,
  T.~Mora, L.~Parisi, E.~Silvestri, M.~Viale, and A.~M. Walczak.
\newblock Flocking and turning: {A} new model for self-organized collective
  motion.
\newblock {\em J. Stat. Phys.}, 158:601--627, 2015.

\bibitem{cavagna2013}
A.~Cavagna, S.~M. Duarte~Queir\'os, I.~Giardina, F.~Stefanini, and M.~Viale.
\newblock Diffusion of individual birds in starling flocks.
\newblock {\em Proc. R. Soc. B.}, 280:20122484, 2013.

\bibitem{chen2016switching}
D.~Chen, T.~Vicsek, X.~Liu, T.~Zhou, and H.-T. Zhang.
\newblock Switching hierarchical leadership mechanism in homing flight of
  pigeon flocks.
\newblock {\em Europhys. Lett. EPL}, 114(6):60008, 2016.

\bibitem{couzin2005}
I.~D. Couzin, J.~Krause, N.~R. Franks, and S.~A. Levin.
\newblock Effective leadership and decision-making in animal groups on the
  move.
\newblock {\em Nature}, 433:513--516, 2005.

\bibitem{couzin2002}
I.~D. Couzin, J.~Krause, R.~James, G.~D. Ruxton, and N.~R. Franks.
\newblock Collective memory and spatial sorting in animal groups.
\newblock {\em J. Theor. Biol.}, 218(1):1--11, 2002.

\bibitem{cristiani2023CMS}
E.~Cristiani, A.~De~Santo, and M.~Menci.
\newblock A generalized mean-field game model for the dynamics of pedestrians
  with limited predictive abilities.
\newblock {\em Commun. Math. Sci.}, 21(1):65--82, 2023.

\bibitem{cristiani2021JMB}
E.~Cristiani, M.~Menci, M.~Papi, and L.~Brafman.
\newblock An all-leader agent-based model for turning and flocking birds.
\newblock {\em J. Math. Biol.}, 83:45, 2021.

\bibitem{cristiani2011MMS}
E.~Cristiani, B.~Piccoli, and A.~Tosin.
\newblock Multiscale modeling of granular flows with application to crowd
  dynamics.
\newblock {\em Multiscale Model. Simul.}, 9(1):155--182, 2011.

\bibitem{cristiani2014book}
E.~Cristiani, B.~Piccoli, and A.~Tosin.
\newblock {\em Multiscale modeling of pedestrian dynamics}, volume~12.
\newblock Springer, 2014.

\bibitem{CS2007}
F.~Cucker and S.~Smale.
\newblock Emergent behavior in flocks.
\newblock {\em IEEE T. Automat. Contr.}, 52:852--862, 2007.

\bibitem{degond2007CRM}
P.~Degond and S.~Motsch.
\newblock Macroscopic limit of self-driven particles with orientation
  interaction.
\newblock {\em C. R. Math. Acad. Sci. Paris}, 345(10):555--560, 2007.

\bibitem{degond2008M3AS}
P.~Degond and S.~Motsch.
\newblock Continuum limit of self-driven particles with orientation
  interaction.
\newblock {\em Math. Models Methods Appl. Sci.}, 18(suppl.):1193--1215, 2008.

\bibitem{eftimie2018}
R.~Eftimie.
\newblock {\em Hyperbolic and kinetic models for self-organised biological
  aggregations}, volume 2232 of {\em Lecture Notes in Mathematics}, chapter
  Multi-dimensional transport equations, pages 153--193.
\newblock Springer International Publishing, 2018.

\bibitem{falcone2013book}
M.~Falcone and R.~Ferretti.
\newblock {\em Semi-Lagrangian approximation schemes for linear and
  Hamilton-Jacobi equations}.
\newblock SIAM, 2013.

\bibitem{fornasier2010PhD}
M.~Fornasier, J.~Haskovec, and G.~Toscani.
\newblock Fluid dynamic description of flocking via {P}ovzner--{B}oltzmann
  equation.
\newblock {\em Phys. D}, 240:21--31, 2011.

\bibitem{leader1}
J.~Herbert-Read.
\newblock Collective behaviour: Leadership and learning in flocks.
\newblock {\em Curr. Biol.}, 25(23):R1127--R1129, 2015.

\bibitem{kolokolnikov2011PRE}
T.~Kolokolnikov, H.~Sun, D.~Uminsky, and A.~L. Bertozzi.
\newblock Stability of ring patterns arising from two-dimensional particle
  interactions.
\newblock {\em Phys. Rev. E}, 84:015203, 2011.

\bibitem{lacho1990ARMA}
M.~Lachowicz and M.~Pulvirenti.
\newblock A stochastic system of particles modelling the {E}uler equations.
\newblock {\em Arch. Rat. Mech. Anal.}, 109:81–93, 1990.

\bibitem{ling2019}
H.~Ling, G.~E. Mclvor, J.~Westley, K.~van~der Vaart, J.~Yin, R.~T. Vaughan,
  A.~Thornton, and N.~T. Ouellette.
\newblock Collective turns in jackdaw flocks: kinematics and information
  transfer.
\newblock {\em J. R. Soc. Interface}, 16:20190450, 2019.

\bibitem{loffredo2023PhB}
E.~Loffredo, D.~Venturelli, and I.~Giardina.
\newblock Collective response to local perturbations: how to evade threats
  without losing coherence.
\newblock {\em Phys. Biol.}, in press.

\bibitem{loy2021KRM}
N.~Loy and A.~Tosin.
\newblock {B}oltzmann-type equations for multi-agent systems with label
  switching.
\newblock {\em Kinet. Relat. Models}, 14(5):867--894, 2021.

\bibitem{markou2021pp}
I.~Markou.
\newblock Invariance of velocity angles and flocking in the {I}nertial {S}pin
  model.
\newblock Preprint, arXiv:2110.14388, 2021.

\bibitem{motsch2011JSP}
S.~Motsch and E.~Tadmor.
\newblock A new model for self-organized dynamics and its flocking behavior.
\newblock {\em J. Stat. Phys.}, 144:923, 2011.

\bibitem{mwaffo2018detecting}
V.~Mwaffo, J.~Keshavan, T.~L. Hedrick, and S.~Humbert.
\newblock Detecting intermittent switching leadership in coupled dynamical
  systems.
\newblock {\em Sci. Rep.}, 8:10338, 2018.

\bibitem{papadopoulou2022RSOS}
M.~Papadopoulou, H.~Hildenbrandt, D.~W.~E. Sankey, S.~J. Portugal, and C.~K.
  Hemelrijk.
\newblock Emergence of splits and collective turns in pigeon flocks under
  predation.
\newblock {\em R. Soc. Open Sci.}, 9(2):211898, 2022.

\bibitem{pareschi2013book}
L.~Pareschi and G.~Toscani.
\newblock {\em Interacting Multiagent Systems: Kinetic Equations and {M}onte
  {C}arlo Methods}.
\newblock Oxford University Press, Oxford, 2013.

\bibitem{leader2}
B.~Pettit, Z.~Ákos, T.~Vicsek, and D.~Biro.
\newblock Speed determines leadership and leadership determines learning during
  pigeon flocking.
\newblock {\em Curr. Biol.}, 25(23):3132--3137, 2015.

\bibitem{piccoliARMA2011}
B.~Piccoli and A.~Tosin.
\newblock Time-evolving measures and macroscopic modeling of pedestrian flow.
\newblock {\em Arch. Ration. Mech. Anal.}, 199:707--738, 2011.

\bibitem{pomeroy1992}
H.~Pomeroy and F.~Heppner.
\newblock Structure of turning in airborne {R}ock {D}ove (\emph{Columba Livia})
  flocks.
\newblock {\em The Auk}, 109(2):256--267, 1992.

\bibitem{povzner1962AMSTS}
A.~Y. Povzner.
\newblock The {B}oltzmann equation in kinetic theory of gases.
\newblock {\em Amer. Math. Soc. Transl. Ser. 2}, 47:193--216, 1962.

\bibitem{procaccini2011}
A.~Procaccini, A.~Orlandi, A.~Cavagna, I.~Giardina, F.~Zoratto, D.~Santucci,
  F.~Chiarotti, C.~K. Hemelrijk, E.~Alleva, G.~Parisi, and C.~Carere.
\newblock Propagating waves in starling, \textit{{S}turnus vulgaris}, flocks
  under predation.
\newblock {\em Anim. Behav.}, 82(4):759--765, 2011.

\bibitem{reynolds2022JRSI}
A.~M. Reynolds, G.~E. McIvor, A.~Thornton, P.~Yang, and N.~T. Ouellette.
\newblock Stochastic modelling of bird flocks: accounting for the cohesiveness
  of collective motion.
\newblock {\em J. R. Soc. Interface}, 19(189):20210745, 2022.

\bibitem{sumpter2006}
D.~J.~T. Sumpter.
\newblock The principles of collective animal behaviour.
\newblock {\em Phil. Trans. R. Soc. B}, 361:5--22, 2006.

\bibitem{toscani2004CHAPTER}
G.~Toscani.
\newblock Kinetic and hydrodynamic models of nearly elastic granular flows.
\newblock In A.~J\"{u}ngel, R.~Manasevich, P.~A. Markowich, and H.~Shahgholian,
  editors, {\em Nonlinear Differential Equation Models}, Wien, 2004.
  Springer-Verlag.

\bibitem{vicsek1995}
T.~Vicsek, A.~Czir\'ok, E.~Ben-Jacob, I.~Cohen, and O.~Shochet.
\newblock Novel type of phase transition in a system of self-driven particles.
\newblock {\em Phys. Rev. Lett.}, 75:1226--1229, 1995.

\bibitem{vicsek2012}
T.~Vicsek and A.~Zafeiris.
\newblock Collective motion.
\newblock {\em Phys. Rep.}, 517(3):71--140, 2012.

\end{thebibliography}

\end{document}